\newenvironment{functionalityEnv}[1]
{
	\floatstyle{boxed}
	\restylefloat{idealFunctionalityFloat}
	
	\begin{idealFunctionalityFloat}[tb]\footnotesize
	}{
 \end{idealFunctionalityFloat}

}
\newenvironment{protocolEnv}[1]
{
	\floatstyle{boxed}
	\restylefloat{protocolFloat}

	\begin{protocolFloat}[tb]\footnotesize
	}{
	\end{protocolFloat}
	
}
\newenvironment{simulatorEnv}[1]
{
	\floatstyle{boxed}
	\restylefloat{simulatorFloat}
	\begin{simulatorFloat}[tb]\footnotesize
	}{
	\end{simulatorFloat}
}
\def\headline#1{\hbox to \hsize{\hrulefill\quad\lower.3em\hbox{\textbf{#1}}\quad\hrulefill}}
\newcounter{todocounter}
\newcounter{algoline}
\newcommand\Numberline{\refstepcounter{algoline}\nlset{\thealgoline}}
\let\oldnl\nl
\newcommand{\nonl}{\renewcommand{\nl}{\let\nl\oldnl}}
\newcommand{\predicateTextSize}{\scriptsize}
\newcommand{\verifySigName}{\ensuremath{\textsf{vSig}}}
\newcommand{\verifyPreImageName}{\ensuremath{\textsf{vPreImg}}}
\newcommand{\verifyTimeoutName}{\ensuremath{\textsf{vTime}}}
\newcommand{\verifySig}[2]{\ensuremath{\verifySigName\left(#1;#2\right)}}
\newcommand{\verifyPreImage}[2]{\ensuremath{\verifyPreImageName\left(#1;#2\right)}}
\newcommand{\verifyTimeout}[1]{\ensuremath{\verifyTimeoutName\left(#1\right)}}
\newtheorem{theorem}{Theorem}
\newtheorem{lemma}{Lemma}
\newtheorem*{lemmaNo}{Lemma}
\newtheorem{corollary}{Corollary}
\newcommand{\negspace}{\vspace{-0.5\baselineskip}}
\newcommand{\timeout}{\ensuremath{T}}
\newcommand{\blockCountSpecificVal}{\ensuremath{k}}
\newcommand{\blockCountAnotherSpecificVal}{\ensuremath{j}}
\newcommand{\blockIndex}[1]{\ensuremath{b_{#1}}}
\newcommand{\htlc}{\textsl{HTLC}}
\newcommand{\madhtlc}{\textsl{MAD-HTLC}}
\newcommand{\htlcSpec}{\textsl{HTLC-Spec}}
\newcommand{\collateralContract}{\textsl{MH-Col}}
\newcommand{\depositContract}{\textsl{MH-Dep}}
\newcommand{\predicateName}{predicate{}}
\newcommand{\predicateNamePlural}{predicates{}}
\newcommand{\predicateNamePluralCap}{Predicates{}}
\newcommand{\contractName}{contract{}}
\newcommand{\contractNamePlural}{contracts{}}
\newcommand{\numberOfMiners}{\ensuremath{n}\xspace}
\newcommand{\minersOnBoardProbability}[1]{\ensuremath{\prob{u}^{#1}}\xspace}
\newcommand{\prob}[1]{\ensuremath{\lambda_{#1}}\xspace}
\newcommand{\probMin}{\ensuremath{\lambda_{\text{min}}}\xspace}
\newcommand{\txFee}{\ensuremath{f}}
\newcommand{\contractTokenLetter}{\ensuremath{v}}
\newcommand{\collateralTransactionName}{\ensuremath{\text{col}}}
\newcommand{\depositTransactionName}{\ensuremath{\text{dep}}}
\newcommand{\bothContractsTransactionName}{\ensuremath{\text{dep+col}}}
\newcommand{\depositTokens}{\ensuremath{\contractTokenLetter^\text{\depositTransactionName}}\xspace}
\newcommand{\collateralTokens}{\ensuremath{\contractTokenLetter^\text{\collateralTransactionName}}\xspace}
\newcommand{\trueConst}{\ensuremath{\textsf{red}}\xspace}
\newcommand{\falseConst}{\ensuremath{\textsf{irred}}\xspace}
\newcommand{\sig}[1]{\textsl{sig}}
\newcommand{\hashInputVar}[1]{\ensuremath{\textsl{pre}_{#1}}\xspace}
\newcommand{\hashOutputVar}[1]{\ensuremath{\textsl{dig}_{#1}}\xspace}
\newcommand{\generalSecret}{\ensuremath{\hashInputVar{}}\xspace}
\newcommand{\aliceSecret}{\ensuremath{\hashInputVar{a}}\xspace}
\newcommand{\bobSecret}{\ensuremath{\hashInputVar{b}}\xspace}
\newcommand{\contractSecretAlice}{\ensuremath{\hashOutputVar{a}}\xspace}
\newcommand{\contractSecretBob}{\ensuremath{\hashOutputVar{b}}\xspace}
\newcommand{\aliceSig}{\ensuremath{\sig{}_{a}}\xspace}
\newcommand{\bobSig}{\ensuremath{\sig{}_{b}}\xspace}
\newcommand{\initTxName}{\ensuremath{\textit{tx}_\text{init}}}
\newcommand{\simpleTxName}{\ensuremath{\textit{tx}}}
\newcommand{\transactionOne}{\ensuremath{\simpleTxName}}
\newcommand{\transactionTwo}{\ensuremath{\textit{tx'}}}
\newcommand{\aliceTransaction}{\ensuremath{\simpleTxName_a}}
\newcommand{\bobTransaction}{\ensuremath{\simpleTxName_b}}
\newcommand{\minerTransaction}{\ensuremath{\simpleTxName_m}}
\newcommand{\secParameter}{\ensuremath{\mu}}
\newcommand{\rangeSecParameter}{\ensuremath{\left\{0,1\right\}^\secParameter}}
\newcommand{\rangeGeneral}{\ensuremath{\left\{0,1\right\}^*}}
\newcommand{\idealFunctionalityHash}{\ensuremath{\mathcal{H}}}
\newcommand{\idealFunctionalityMempool}{\ensuremath{\mathcal{G}_\text{mbp}}}
\newcommand{\getsRandom}{\ensuremath{\overset{R}{\gets}}}
\newcommand{\idealFunctionalityMadhtlc}{\ensuremath{\mathcal{F}_\text{rmh}}}
\newcommand{\protocolForUC}{\ensuremath{\Pi_\text{rmh}}}
\newcommand{\protocolForBlockchain}{\ensuremath{\Pi_\text{mad-htlc}}}
\newcommand{\simulator}{\ensuremath{\textsf{Sim}}}
\newcommand{\adversary}{\ensuremath{\textsf{Adv}}}
\newcommand{\funcEnv}{\ensuremath{\mathcal{Z}}}
\newcommand{\userA}{\ensuremath{\mathcal{A}}\xspace}
\newcommand{\userB}{\ensuremath{\mathcal{B}}\xspace}
\newcommand{\userO}{\ensuremath{\mathcal{M}}\xspace}
\newcommand{\anyParty}{\ensuremath{\mathcal{P}}}
\newcommand{\invokeFunctionality}[2]{\ensuremath{{#1}\left(#2\right)}}
\newcommand{\id}{\ensuremath{\textit{sid}}}
\newcommand{\stateSetup}{\ensuremath{\textsf{setup}}}
\newcommand{\stateInit}{\ensuremath{\textsf{initiation}}}
\newcommand{\stateRedeeming}{\ensuremath{\textsf{redeeming}}}
\newcommand{\hashPrivatehSet}{\ensuremath{\textit{H}_\text{qr}}}
\newcommand{\hashVarQuery}{\ensuremath{\textit{q}}}
\newcommand{\hashVarResponse}{\ensuremath{\textit{r}}}
\newcommand{\biName}{\text{mbp}}
\newcommand{\biInternalPreimages}[1]{\ensuremath{\textit{pub}_{#1}^{\biName}}}
\newcommand{\biInternalImages}[1]{\ensuremath{\textit{dig}_{#1}^{\biName}}}
\newcommand{\biResult}{\ensuremath{\textit{res}^{\biName}}}
\newcommand{\biInit}{\ensuremath{\textit{init}^{\biName}}}
\newcommand{\biPublished}{\ensuremath{\textit{publish}^{\biName}}}
\newcommand{\biTx}{\ensuremath{\textit{tx}^{\biName}}}
\newcommand{\publishingFunctionName}{\ensuremath{\textit{rPredicate}}}
\newcommand{\publishingFunctionHashMatches}[1]{\ensuremath{\textit{h}_{#1}}}
\newcommand{\redeemPathIndex}{\ensuremath{\textit{path}}}
\newcommand{\protocolName}{\ensuremath{\text{prot}}}
\newcommand{\protocolState}[1]{\ensuremath{\textit{s}_{#1}^{\protocolName}}}
\newcommand{\protocolPreimagePassedValue}[1]{\ensuremath{\textit{p}_{#1}^{\protocolName}}}
\newcommand{\protocolReceivedPublishedParams}{\ensuremath{\textit{received}^{\protocolName}}}
\newcommand{\mhFName}{\text{mh}}
\newcommand{\mhFInternalPreimages}[1]{\ensuremath{\textit{pub}_{#1}^{\mhFName}}}
\newcommand{\mhFResult}{\ensuremath{\textit{res}^{\mhFName}}}
\newcommand{\mhFSentPreimageA}{\ensuremath{\textit{shared}^{\mhFName}}}
\newcommand{\mhFPublished}{\ensuremath{\textit{published}^{\mhFName}}}
\newcommand{\mhFInit}{\ensuremath{\textit{init}^{\mhFName}}}
\newcommand{\mhFSetupA}{\ensuremath{\textit{setup}_{a}^{\mhFName}}}
\newcommand{\mhFSetupB}{\ensuremath{\textit{setup}_{b}^{\mhFName}}}
\newcommand{\propertyInternalPreimages}[1]{\ensuremath{\textsl{pub}_{#1}}}
\newcommand{\propertySentPreimageA}{\ensuremath{\textsl{shared}}}
\newcommand{\simName}{\text{sim}}
\newcommand{\simSentPreimageA}{\ensuremath{\textit{shared}^{\simName}}}
\newcommand{\simInit}{\ensuremath{\textit{init}^{\simName}}}
\newcommand{\simPublished}{\ensuremath{\textit{published}^{\simName}}}
\newcommand{\simSetupA}{\ensuremath{\textit{setup}_{a}^{\simName}}}
\newcommand{\simSetupB}{\ensuremath{\textit{setup}_{b}^{\simName}}}
\newcommand{\simPreimagePassedValue}[1]{\ensuremath{\textit{p}_{#1}^{\simName}}}
\newcommand{\simAliceSecret}{\ensuremath{\textsl{pre}_{a}^{\simName}}}
\newcommand{\simBobSecret}{\ensuremath{\textsl{pre}_{b}^{\simName}}}
\newcommand{\simContractSecretAlice}{\ensuremath{\textsl{dig}_{a}^{\simName}}}
\newcommand{\simContractSecretBob}{\ensuremath{\textsl{dig}_{b}^{\simName}}}
\newcommand{\simTx}{\ensuremath{\textsl{tx}^{\simName}}}
\newcommand{\mhSetupA}{\ensuremath{\textsf{setup-A}}}
\newcommand{\mhSetupB}{\ensuremath{\textsf{setup-B}}}
\newcommand{\mhInit}{\ensuremath{\textsf{init}}}
\newcommand{\mhTransaction}{\ensuremath{\textsf{redeem}}}
\newcommand{\mhSharePreimage}{\ensuremath{\textsf{share}}}
\newcommand{\mhPublish}{\ensuremath{\textsf{publish}}}
\newcommand{\mhUpdate}{\ensuremath{\textsf{update}}}
\newcommand{\mhRedeemPathOne}{\ensuremath{\textsl{dep-}\mathcal{A}}\xspace}
\newcommand{\mhRedeemPathTwo}{\ensuremath{\textsl{dep-}\mathcal{B}}\xspace}
\newcommand{\mhRedeemPathThree}{\ensuremath{\textsl{dep-}\mathcal{M}}\xspace}
\newcommand{\mhRedeemPathFour}{\ensuremath{\textsl{col-}\mathcal{B}}\xspace}
\newcommand{\mhRedeemPathFive}{\ensuremath{\textsl{col-}\mathcal{M}}\xspace}
\newcommand{\htlcRedeemPathOne}{\ensuremath{\textsl{htlc-}\mathcal{A}}}
\newcommand{\htlcRedeemPathTwo}{\ensuremath{\textsl{htlc-}\mathcal{B}}}
\newcommand{\htlcTransactionName}{\ensuremath{\text{h}}}
\newcommand{\combineTransactionWithName}[2]{\ensuremath{#1^{#2}}}
\newcommand{\aliceTransactionMadAliceBob}{\ensuremath{\combineTransactionWithName{\aliceTransaction}{\depositTransactionName}}}
\newcommand{\bobTransactionMadBoth}{\ensuremath{\combineTransactionWithName{\bobTransaction}{\bothContractsTransactionName}}}
\newcommand{\bobTransactionMadDeposit}{\ensuremath{\combineTransactionWithName{\bobTransaction}{\collateralTransactionName}}}
\newcommand{\bobTransactionMadAliceBob}{\ensuremath{\combineTransactionWithName{\bobTransaction}{\depositTransactionName}}}
\newcommand{\aliceTransactionHTLC}{\ensuremath{\combineTransactionWithName{\aliceTransaction}{\htlcTransactionName}}}
\newcommand{\bobTransactionHTLC}{\ensuremath{\combineTransactionWithName{\bobTransaction}{\htlcTransactionName}}}
\newcommand{\minerTransactionMadBoth}{\ensuremath{\combineTransactionWithName{\minerTransaction}{\bothContractsTransactionName}}}
\newcommand{\minerTransactionMadDeposit}{\ensuremath{\combineTransactionWithName{\minerTransaction}{\collateralTransactionName}}}
\newcommand{\minerTransactionMadAliceBob}{\ensuremath{\combineTransactionWithName{\minerTransaction}{\depositTransactionName}}}
\newcommand{\aliceTxFee}{\ensuremath{\txFee{}_a}}
\newcommand{\bobTxFee}{\ensuremath{\txFee{}_b}}
\newcommand{\aliceFeeMadAliceBob}{\ensuremath{\combineTransactionWithName{\aliceTxFee}{\depositTransactionName}}}
\newcommand{\bobFeeMadBoth}{\ensuremath{\combineTransactionWithName{\bobTxFee}{\bothContractsTransactionName}}}
\newcommand{\bobFeeMadDeposit}{\ensuremath{\combineTransactionWithName{\bobTxFee}{\collateralTransactionName}}}
\newcommand{\bobFeeMadAliceBob}{\ensuremath{\combineTransactionWithName{\bobTxFee}{\depositTransactionName}}}
\newcommand{\aliceFeeHTLC}{\ensuremath{\combineTransactionWithName{\aliceTxFee}{\htlcTransactionName}}}
\newcommand{\bobFeeHTLC}{\ensuremath{\combineTransactionWithName{\bobTxFee}{\htlcTransactionName}}}
\newcommand{\publicKey}{\ensuremath{\textsl{pk}}}
\newcommand{\alicePublicKey}{\ensuremath{\publicKey_a}}
\newcommand{\bobPublicKey}{\ensuremath{\publicKey_b}}
\newcommand{\secretKey}{\ensuremath{\textit{sk}}}
\newcommand{\aliceSecretKey}{\ensuremath{\secretKey_a}}
\newcommand{\bobSecretKey}{\ensuremath{\secretKey_b}}
\newcommand{\hashFunction}{\ensuremath{H}}
\newcommand{\hashFunctionWithInput}[1]{\ensuremath{\hashFunction\left(#1\right)}}
\newcommand{\strategyOfEntity}{\ensuremath{\sigma}}
\newcommand{\strategyProfile}{\ensuremath{\bar{\strategyOfEntity}}}
\newcommand{\strategyAllMiners}{\ensuremath{\bar{\strategyOfEntity{}}}}
\newcommand{\utilityOfEntity}[3]{\ensuremath{u_{#1}\left(#2,#3\right)\xspace}}
\newcommand{\gameDefNameGeneric}{\ensuremath{\Gamma{}}}
\newcommand{\gameDefNameHTLC}{\ensuremath{{\gameDefNameGeneric}^{\textit{H}}}}
\newcommand{\gameDefNameDHTLC}{\ensuremath{\gameDefNameGeneric}^{\textit{MH}}}
\newcommand{\contractState}{\ensuremath{s}}
\newcommand{\subgameIndistinguishable}{\ensuremath{\cdot}}
\newcommand{\gameDefHTLC}[2]{\ensuremath{\gameDefNameHTLC{}\left({#1},{#2}\right)}}
\newcommand{\gameDefDHTLC}[2]{\ensuremath{\gameDefNameDHTLC{}\left({#1},{#2}\right)}}
\newcommand{\depositPredicateMathSpace}{\ensuremath{\; }}
\newcommand{\collateralPredicateMathSpace}{\ensuremath{\;  \; \; \;}}
\newcommand{\htlcPredicateMathSpace}{\ensuremath{\;  \; \; \;}}
\newcommand{\OPHASH}{\ensuremath{\texttt{OP\_HASH160}}}
\newcommand{\OPEQUAL}{\ensuremath{\texttt{OP\_EQUAL}}}
\newcommand{\OPIF}{\ensuremath{\texttt{OP\_IF}}}
\newcommand{\OPELSE}{\ensuremath{\texttt{OP\_ELSE}}}
\newcommand{\OPCHECKSEQUENCEVERIFY}{\ensuremath{\texttt{OP\_CHECKSEQUENCEVERIFY}}}
\newcommand{\OPDROP}{\ensuremath{\texttt{OP\_DROP}}}
\newcommand{\OPENDIF}{\ensuremath{\texttt{OP\_ENDIF}}}
\newcommand{\OPCHECKSIG}{\ensuremath{\texttt{OP\_CHECKSIG}}}
\newcommand{\OPSWAP}{\ensuremath{\texttt{OP\_SWAP}}}
\newcommand{\OPVERIFY}{\ensuremath{\texttt{OP\_VERIFY}}}
\newcommand{\OPONE}{\ensuremath{\texttt{OP\_1}}}
\newcommand{\OPZERO}{\ensuremath{\texttt{OP\_0}}}
\begin{document}
	\title{\huge \emph{MAD-HTLC}: Because \htlc{} is Crazy-Cheap to Attack (extended version)}

\sloppy


\author{\IEEEauthorblockN{Itay Tsabary}
	\IEEEauthorblockA{\textit{Technion, IC3} \\
		\textit{sitay@campus.technion.ac.il}}
	\and
	\IEEEauthorblockN{Matan Yechieli}
	\IEEEauthorblockA{\textit{Technion, IC3} \\
		\textit{matany@campus.technion.ac.il}}
	\and
	\IEEEauthorblockN{Alex Manuskin}
	\IEEEauthorblockA{\textit{ZenGo-X} \\
		\textit{alex@manuskin.org}}
	\and
	\IEEEauthorblockN{Ittay Eyal}
	\IEEEauthorblockA{\textit{Technion, IC3} \\
		\textit{ittay@technion.ac.il}}
}



\maketitle

\begin{abstract}
\emph{Smart Contracts} and \emph{transactions} allow users to implement elaborate constructions on \emph{cryptocurrency} \emph{blockchains} like Bitcoin and Ethereum. 
Many of these constructions, including operational payment channels and atomic swaps, use a building block called \emph{Hashed Time-Locked Contract} (\htlc{}).

In this work, we distill from \htlc{} a specification (\htlcSpec{}), and present an implementation called \emph{Mutual-Assured-Destruction Hashed Time-Locked Contract} (\madhtlc{}). 
\madhtlc{} employs a novel approach of utilizing the existing blockchain operators, called \emph{miners}, as part of the design.
If a user misbehaves, \madhtlc{} incentivizes the miners to confiscate all her funds. 
We prove \madhtlc{}'s security using the UC framework and game-theoretic analysis.
We demonstrate \madhtlc{}'s efficacy and analyze its overhead by instantiating it on Bitcoin's and Ethereum's operational blockchains. 

Notably, current miner software makes only little effort to optimize revenue, since the advantage is relatively small. 
However, as the demand grows and other revenue components shrink, miners are more motivated to fully optimize their fund intake. 
By patching the standard Bitcoin client, we demonstrate such optimization is easy to implement, making the miners natural enforcers of \madhtlc{}. 

Finally, we extend previous results regarding \htlc{} vulnerability to \emph{bribery attacks}. 
An attacker can incentivize miners to prefer her transactions by offering high \emph{transaction fees}. 
We demonstrate this attack can be easily implemented by patching the Bitcoin client, and use game-theoretic tools to qualitatively tighten the known cost bound of such bribery attacks in presence of rational miners. 
We identify bribe opportunities occurring on the Bitcoin and Ethereum main networks where a few dollars bribe could yield tens of thousands of dollars in reward (e.g., \$2 for over \$25K).

\end{abstract}


	\section{Introduction}

Blockchain-based cryptocurrencies like Bitcoin~\cite{nakamoto2008bitcoin} and Ethereum~\cite{buterin2013ethereum} are monetary systems with a market cap of~\$400B~\cite{cryptoslate2020marketCap}.
They enable simple \emph{transactions} of internal tokens and implementation of more elaborate \emph{smart contracts}. 
The transactions create the smart contracts and interact with them. 
Entities called \emph{miners} create data structures called \emph{blocks} that contain transactions.
They publish and order the blocks to form a blockchain, thus \emph{confirming} the included transactions and achieving system progress.
The system state is obtained by parsing the transactions according to the block order. 
Blockchain security relies on \emph{incentives}, rewarding miners with tokens for carrying out their tasks.

A prominent smart-contract design pattern is the \emph{Hashed Time-Locked Contract} (\htlc{}), set up for two participants, Alice,~$\userA$, and Bob,~$\userB$~(\S\ref{sec:previous_work}).
It asserts that~$\userA$ gets tokens for presenting a hash preimage of a specific value before a certain timeout, otherwise~$\userB$ gets them. 
A variety of more elaborate smart-contract designs rely on \htlc{} as a building block.
These include high-frequency payment channels~\cite{poon2016bitcoin,decker2015duplex,green2017bolt,mccorry2016towards,miller2019sprites,dziembowski2018general,dziembowski2017perun}, atomic swaps~\cite{herlihy2018atomic,malavolta2019anonymous,van2019specification,miraz2019atomic,zie2019extending}, contingent payments~\cite{maxwell2016firstZKCP,campanelli2017zero,banasik2016efficient,fuchsbauer2019wi,bursuc2019contingent}, and vaults~\cite{moser2016bitcoin, mccorry2018preventing,bishop2019vaults,zamyatin2019xclaim}.
We identify the specification required by the variety of contracts using~\htlc{} and call it~\htlcSpec{}. 

Unfortunately, \htlc{} is vulnerable to \emph{incentive manipulation attacks}~\cite{bonneau2016buy,mccorry2018smart,judmayer2019pay}. 
Winzer et al.~\cite{winzer2019temporary} showed that~$\userB$ can bribe miners using designated smart contracts to ignore~$\userA$'s transactions until the timeout elapses.
Similarly, Harris and Zohar~\cite{harris2020flood} show that~$\userB$ can delay~$\userA$'s transaction confirmation by overloading the system with his own transactions.
Both of these allow~$\userB$ to obtain the \htlc{} tokens while depriving~$\userA$ of them, even if~$\userA$ published the preimage.

In this work, we provide a secure implementation of~\htlcSpec{}, and further analyze \htlc{}'s susceptibility to bribery.

We begin by describing the model~(\S\ref{sec:model}) for an underlying blockchain mechanism like that of Bitcoin or Ethereum.
The system's state is a set of contracts; each contract comprises a token amount and a predicate; transactions redeem contract tokens by providing inputs that satisfy their predicates.
Users publish transactions initiating new contracts, assigning them with the redeemed tokens while also offering some as fees.
In each round one miner adds a block with a transaction to the chain and receives its fee.

We proceed to present \madhtlc{}, our \htlcSpec{} implementation~(\S\ref{sec:mad_htlc_design}).
\madhtlc{} relies on the fact that miners can participate in a smart contract execution, and thus their interests should be taken into account.
\madhtlc{} utilizes miners as enforcers of its correct execution, allowing and incentivizing them to seize its contract tokens in case of any bribery attempt.
That, in turn, incentivizes~$\userA$ and~$\userB$ to refrain from such attempts and to interact with \madhtlc{} as intended. 
To the best of our knowledge, this is the first work to utilize miner incentives in this way.

In addition to the preimage specified by~\htlcSpec{}, which we denote~\aliceSecret, \madhtlc{} uses a second preimage, \bobSecret, known only to~$\userB$. 
\madhtlc{} comprises a main~\emph{deposit} contract (\depositContract{}) and an auxiliary~\emph{collateral} contract (\collateralContract{}), which work as follows.
\depositContract{} has three so-called \emph{redeem paths}. 
First, it allows~$\userA$ to redeem it with a transaction including the predefined preimage,~$\aliceSecret$. 
Alternatively, it allows~$\userB$ to redeem it after a timeout with a transaction including the other preimage,~$\bobSecret$.
This is essentially the specification, but \depositContract{} provides another option, allowing any user, and specifically any miner, to redeem it herself with a transaction including both~$\aliceSecret$ and~$\bobSecret$.

Now, if both~$\userA$ and~$\userB$ try to redeem \depositContract{}, then their transactions must reveal preimages \aliceSecret and \bobSecret, respectively.
Any miner can then simply take these preimages and issue her own transaction that uses the third redeem path to seize the tokens for herself.
Specifically, if~$\userA$ tries to redeem \depositContract{}, then~$\userB$ is assured that she cannot do so~-- if she tries to redeem the tokens then the miners would get them instead.
Assuming~$\userB$ is benign, i.e., rational but prefers to act honestly for the same reward, then this construction is sufficient to satisfy \htlcSpec{}.
But we can do better.

If~$\userB$ is spiteful, then he will prefer to reduce~$\userA$'s reward if it does not affect his.
When~$\userA$ knows~\aliceSecret and tries to redeem \depositContract{},~$\userB$ cannot redeem it as well, but he can publish a redeeming transaction nonetheless allowing the miners to collect the tokens instead of~$\userA$.

We strengthen \madhtlc{} such that~$\userB$ is strictly incentivized to refrain from such deviant behavior with the auxiliary contract \collateralContract{}.
It can be redeemed only after the same timeout as \depositContract{}, either by a transaction of~$\userB$, or by any miner that provides both~$\aliceSecret$ and~$\bobSecret$.
Now, if~$\userA$ knows~\aliceSecret then she can redeem~\depositContract{} and~$\userB$ can redeem \collateralContract{}. 
If instead~$\userB$ contends with~$\userA$ for \depositContract{}, both still lose the \depositContract{} for the miners; 
but now both~\aliceSecret and~\bobSecret are revealed, allowing miners to seize the \collateralContract{} tokens as well. 
$\userB$ is therefore strictly incentivized not to contend, allowing~$\userA$ to receive the \depositContract{} tokens as required. 

This means the \madhtlc{} construction is secure against the known incentive manipulation attacks~\cite{bonneau2016buy,mccorry2018smart,judmayer2019pay,winzer2019temporary,harris2020flood}~--~$\userB$ cannot incentivize miners to exclude~$\userA$'s transaction and getting his confirmed instead. 

\madhtlc{} utilizes the~\emph{mutual assured destruction}~\cite{deudney1983whole,asgaonkar2019solving} principle: 
If a party misbehaves then all parties lose everything. 
Although penalizing the well-behaved party as well, this mechanism design~\cite{tadelis2013game} technique ensures rational players act as intended.

To prove the security of \madhtlc{}~(\S\ref{sec:mad_htlc_analysis}), we first bound the possible leakage and interactions of system entities using the UC framework~\cite{canetti2007universally}. 
These interactions include the setup, initiation and redeeming of \madhtlc{}.
Then, we formalize \madhtlc{} as a game played by~$\userA$,~$\userB$ and the miners, where the action space comprises the aforementioned possible interactions.
We model all parties as rational non-myopic players, and show the prescribed behavior is~\emph{incentive-compatible}~\cite{roughgarden2010algorithmic}.

We prove the efficacy of~\madhtlc{} by implementing it both in the less expressive Bitcoin Script~\cite{wiki2020bitcoinScript} and in the richer Ethereum Solidity~\cite{solidity} smart-contract languages~(\S\ref{sec:mad_htlc_implementation}).
We deploy it on Bitcoin's and Ethereum's main networks, and show it bears negligible overhead (e.g., 2.2e-6 BTC) compared to the secured amount (e.g., 2.6 BTC).
Specifically for payment-channels~\cite{poon2016bitcoin,decker2015duplex,green2017bolt,mccorry2016towards,miller2019sprites,dziembowski2018general,dziembowski2017perun}, this negligible overhead is only incurred in the abnormal case of a dispute.

\madhtlc{} relies on miners non-myopically optimizing their transaction choices, often referred to as~\emph{Miner Extractable Value} (\emph{MEV})~\cite{daian2020flash,cointelegraph2020mev,felten2020mev}.
While such optimizations are common in the Ethereum network, as of today, Bitcoin's default cryptocurrency client only offers basic optimization. 
Changes in miners' revenue structure will make better optimizations more important. 
To demonstrate miners can easily enhance transaction choice optimization once they choose to do so, we patch the standard Bitcoin client~\cite{coindance2020bitcoinClientDistribution} to create \emph{Bitcoin-MEV infrastructure}, allowing for easy additions of elaborate logic. 
In particular, we implement the logic enabling miners to benefit from enforcing the correct execution of \madhtlc{}.

We then revisit the security of the prevalent \htlc{} implementation and refine previous results~\cite{winzer2019temporary} regarding its vulnerability to bribing attacks~(\S\ref{sec:original_htlc}).
We show that \htlc{} is vulnerable even in blockchains with limited Script-like languages, and bribe can be made using the built-in transaction fee mechanism. 
We analyze miners' behavior as a game for the \htlc{} timeout duration. 
Each suffix of the game can be analyzed as a subgame, and all players have perfect knowledge of the system state. 
$\userB$ can take advantage of this setting to incentivize miners to withhold~$\userA$'s transaction until the timeout, making this the single \emph{subgame perfect equilibrium}~\cite{selten1965spieltheoretische}. 
So, in presence of rational non-myopic miners the required bribe cost is independent of the timeout duration.
This matches the lower bound, and qualitatively tightens the exponential-in-timeout upper bound, both presented by Winzer et al.~\cite{winzer2019temporary}.

In our Bitcoin-compatible attack variation, miners only have to be non-myopic for the attack to succeed, a simple optimization we implement by patching the standard Bitcoin client with merely~150 lines of code.
We identify several potential bribe opportunities on the Bitcoin and Ethereum main networks, including examples of a few dollars bribe would have yielded tens of thousands of dollars in payout (e.g., payment channel where bribe of \$2 could have yielded payout of~\$25K).

We conclude by discussing future directions~(\S\ref{sec:future_directions}), including attacks and mitigations in a weaker model where~$\userA$ or~$\userB$ have mining capabilities, and using \madhtlc{} to reduce latency in systems using \htlcSpec{}.

In summary, we make the following contributions:
\begin{itemize}

	\item We formalize the specification \htlcSpec{} of the prevalent \htlc{} contract;
	
	\item present \madhtlc{} that satisfies \htlcSpec{} utilizing miners as participants;
	
	\item prove \madhtlc{} is secure and incentive compatible;

	\item implement, deploy, and evaluate \madhtlc{} on the Bitcoin and Ethereum main networks;

	\item patch the prevalent Bitcoin Client to create Bitcoin-MEV infrastructure, and to specifically support enforcing correct \madhtlc{} execution;

	\item prove \htlc{} is vulnerable to bribery attacks in limited smart-contract environments; and

	\item qualitatively tighten the bound of Winzer et al.~\cite{winzer2019temporary} and implement the required rational miner behavior.

\end{itemize}

\paragraph*{Open Source and Responsible Disclosure} 
We completed a responsible-disclosure process with prominent blockchain development groups. 
We intend to open source our code, subject to security concerns of the community. 


	\section{Related Work}
	\label{sec:previous_work}

We are not aware of prior work utilizing miners' incentives to use them as enforcers of correct smart contract execution. 
We review previous work on bribing attacks in blockchains~(\S\ref{sec:related_work_bribes}), detail exhibited and postulated mining behavior with respect to transaction selection~(\S\ref{sec:related_work_transaction_optimization}), and present systems and applications using~\htlcSpec{}~(\S\ref{sec:related_work_htlc}). 

		\subsection{Bribery Attacks}
		\label{sec:related_work_bribes}

Winzer et al.~\cite{winzer2019temporary} present attacks that delay confirmation of specific transactions until a given timeout elapses.
Their attacks apply to \htlc{} where~$\userB$ delays the confirmation~of~$\userA$'s redeeming transaction until he can redeem it himself.
Their presented attack requires predicates available only in rich smart contract languages like Ethereum's Solidity~\cite{solidity,dannen2017introducing} and Libra's Move~\cite{blackshear2019move,baudet2018state}, but not Bitcoin's Script~\cite{wiki2020bitcoinScript}.
Specifically, the attack requires setting a \emph{bribe contract} that monitors what blocks are created and rewards participants accordingly. 

In contrast, our attack variation works with Bitcoin's Script as well, as we demonstrate by implementation. 
It therefore applies a wider range of systems~\cite{litecoin2013site,hopwood2016zcash,bitcoinCash2020website}.

Winzer et al.~\cite{winzer2019temporary} present two results regarding the attack costs.
First, they show that~$\userB$'s attack cost for making miner's collaboration with the attack a Nash-equilibrium grows linearly with the \emph{size} (i.e., the  relative mining capabilities) of the smallest miner. 
However, all miners not cooperating with the attack is also a Nash equilibrium. 
Therefore, they analyze~$\userB$'s cost for making the attack a dominant strategy, i.e., to incentivize a to support the attack irrespective of the other miners' strategies. 
This bound grows linearly with relative miner sizes, and exponentially with the \htlc{} timeout. 

Our analysis improves this latter bound by taking into account the miners all know the system state and each others' incentives.
This insight allows us to use the \emph{subgame perfect equilibrium}~\cite{rosenthal1981games,fudenberg1991game,myerson1991game,selten1965spieltheoretische,van2002strategic,watson2002strategy,cerny2014playing,bernheim1984rationalizable,asgaonkar2019solving,roughgarden2010algorithmic} solution concept, a refinement of Nash-equilibrium suitable for games of dynamic nature.
We consider the game played by non-myopic rational participants aware of the game dynamics, and show that a linear-in-miner-size cost (as in~\cite{winzer2019temporary}) suffices for the existence of a \emph{unique} subgame perfect equilibrium.

Other work~\cite{bonneau2016buy,mccorry2018smart,judmayer2019pay} analyzes bribing attacks on the consensus mechanism of cryptocurrency blockchains.
Unlike this work, bribes in these papers compete with the total block reward (not just a single transaction's fee) and lead miners to violate predefined behavior.
These attacks are therefore much more costly and more risky than the bribery we consider, where a miner merely prioritizes transactions for confirmation.

A recent and parallel work~\cite{khabbazian2021timelocked} also suggests using Bitcoin's fee mechanism to attack~$\htlc$.
It assumes miners below a certain hash-rate threshold are myopic  (sub-optimal) while those above it are non-myopic; it presents safe timeout values given Bitcoin's current hash-rate distribution.
In this work, we assume all miners are non-myopic and prove that in this model the attack costs are independent of the timeout.
We also present~$\madhtlc$, which is secure against these attacks with both myopic and non-myopic miners.

		\subsection{Transaction-Selection Optimization}
		\label{sec:related_work_transaction_optimization}

\madhtlc{} incentivizes rational entities to act in a desired way. 
It relies on the premise that all involved parties are rational, and specifically, that they monitor the blockchain state and issue transactions accordingly.

Indeed, previous work~\cite{winzer2019temporary,daian2020flash,robinson2020darkForest,malinova2017market,doweck2020multiparty,eskandari2019sok,zhou2020high,munro2018fomo3ds} shows this premise is prominent, and that system users and miners engage in carefully-planned transaction placing, manipulating their publication times and offered fees to achieve their goals.
Other work~\cite{bentov2019tesseract,prestwich2018minersArentFriends,tsabary2018thegapgame,sliwinskiblockchains,arvindcutoff,tsabary2019heb,easley2019mining} asserts the profitability of such actions is expected to rise as the underlying systems mature, enabling constructions such as \madhtlc{}, which rely on these optimizations.

		\subsection{\htlcSpec{} usage}
		\label{sec:related_work_htlc}

A variety of smart contracts~\cite{maxwell2016firstZKCP,campanelli2017zero,banasik2016efficient,fuchsbauer2019wi,bursuc2019contingent,moser2016bitcoin, mccorry2018preventing,bishop2019vaults,zamyatin2019xclaim} critically rely on \htlcSpec{}. 
To the best of our knowledge, all utilize \htlc{}, making them vulnerable once miners optimize their transaction choices. 
We review some prominent examples. 

\paragraph{Off-chain state channels}
\label{sec:related_work_htlc_offchain_state_channels}
A widely-studied smart contract construction~\cite{decker2015duplex,miller2019sprites,dziembowski2017perun,dziembowski2018general,green2017bolt,khalil2017revive,gudgeon2019sok,malavolta2019anonymous,mccorry2016towards,aumayrgeneralized} with implementations on various blockchains~\cite{poon2016bitcoin,blockstream2020lightningImp,lightningLabs2020lightningImp,acinq2020lightningImp,raiden2020raidenImp,omg2020blockchainDesign} is that of an \emph{off-chain channel} between two parties,~$\userA$ and~$\userB$.

The channel has a \emph{state} that changes as~$\userA$ and~$\userB$ interact, e.g., pay one another by direct communication. 
In the simplest case, the state is represented by a \emph{settlement transaction} that~$\userB$ can place on the blockchain. 
The settlement transaction terminates the channel by placing its final state back in the blockchain. 
The transaction initiates an \htlc{} with a hash digest of~$\userB$'s choice.
$\userB$ can redeem the contract after the timeout or, alternatively,~$\userA$ can redeem it before the timeout if~$\userB$ had shared the preimage with her.

When~$\userA$ and~$\userB$ interact and update the channel state,~$\userB$ \emph{revokes} the previous settlement transaction by sending his preimage to~$\userA$.
This guarantees that if~$\userB$ places a revoked settlement transaction on the blockchain,~$\userA$ can redeem the tokens within the timeout.
Alternatively, if~$\userA$ becomes unresponsive,~$\userB$ can place the transaction on the blockchain and redeem the tokens after the timeout elapses.

Note that this scheme assumes synchronous access to the blockchain~--~$\userA$ should monitor the blockchain, identify revoked-state transactions, and issue her own transaction before the revocation timeout elapses.
To remove this burden, services called \emph{Watchtowers}~\cite{avarikioti2020cerberus,mccorry2019pisa,khabbazian2019outpost} offer to replace~$\userA$ in monitoring the blockchain and issuing transactions when needed.
However, these also require the same synchronous access to the blockchain, and the placement of transactions is still at the hands of bribable miners. 
\madhtlc{} can be viewed as turning the miners themselves into watchtowers~-- watchtowers that directly confirm the transactions, without a bribable middleman.

\paragraph{Atomic swaps}

These contracts enable token exchange over multiple blockchain systems~\cite{herlihy2018atomic,malavolta2019anonymous,van2019specification,miraz2019atomic,zie2019extending,wagner2019dispute}, where a set of parties transact their assets in an atomic manner, i.e., either all transactions occur, or none.

Consider two users,~$\userA$ and~$\userB$, that want to have an atomic swap over two blockchains.
$\userA$ picks a preimage and creates an \htlc{} on the first blockchain with timeout~$\timeout{}_1$.
Then,~$\userB$ creates an \htlc{} requiring the same preimage ($\userB$ knows only its hash digest) and a timeout~$\timeout{}_2 < \timeout{}_1$ on the second blockchain.
$\userA$ publishes a transaction redeeming the \htlc{} on the second blockchain, revealing the preimage and claiming the tokens.
$\userB$ learns the preimage from~$\userA$'s published transaction, and publishes a transaction of his own on the first blockchain.
If~$\userA$ does not publish her transaction before~$\timeout{}_2$ elapses, then the swap is canceled.

	\section{Model}
	\label{sec:model}

We start by describing the system participants and how they form a chain of blocks that contain transactions~(\S\ref{sec:model:blockchain}). 
Next, we explain how the transactions are parsed to define the system state~(\S\ref{sec:model:state}). 
Finally, we detail the required contract specification~\htlcSpec{}~(\S\ref{sec:model:spec}). 

		\subsection{Blockchain, Transactions and Miners} 
		\label{sec:model:blockchain} 

We assume an existing blockchain-based cryptocurrency system, facilitating \emph{transactions} of internal system \emph{tokens} among a set of \emph{entities}.
All entities have access to a digital signature scheme~\cite{badertscher2017bitcoin} with a security parameter~$\secParameter$.
Additionally, they have access to a hash function~$\hashFunction : \rangeGeneral \rightarrow \rangeSecParameter$, mapping inputs of arbitrary length to outputs of length~$\secParameter$.
We assume the value of~$\secParameter$ is sufficiently large such that the standard cryptographic assumptions hold: the digital signature scheme is existentially unforgeable under chosen message attacks (\emph{EU-CMA})~\cite{goldwasser1988digital,badertscher2017bitcoin}, and that~$\hashFunction$ upholds preimage resistance~\cite{rogaway2004cryptographic,dziembowski2018fairswap}.

The blockchain serves as an append-only ledger storing the system state. 
It is implemented as a linked list of elements called \emph{blocks}.
A subset of the entities are called \emph{miners}, who aside from transacting tokens also extend the blockchain by creating new blocks.
We refer to non-mining entities as \emph{users}.

There is a constant set of~$\numberOfMiners$ miners.
Each miner is associated a number representing its relative block-creation rate, or \emph{mining power}. 
Denote the mining power of miner~$i$ by~$\prob{i}$, where~$\sum_{i = 1}^{\numberOfMiners}{\prob{i}} = 1$. 
Denote the minimal mining power by~$\probMin = \min\limits_{i} \prob{i}$.
As in previous work~\cite{winzer2019temporary,eyal2014majority,tsabary2018thegapgame,sapirshtein2016optimal}, these rates are common knowledge, as in practice miners can monitor the blockchain and infer them~\cite{blockchain2020bitcoinPools}.

Block creation is a discrete-time, memoryless stochastic process.
At each time step exactly one miner creates a block.
As in previous work~\cite{dziembowski2017perun,dziembowski2018general,tsabary2018thegapgame,poon2016bitcoin,herlihy2018atomic}, we disregard miners deliberately~\cite{eyal2014majority,Andes2011kryptonite,mirkin2020bdos} or unintentionally~\cite{garay2015backbone,pass2017analysis,kiffer2018better} causing transient inconsistencies (called \emph{forks} in the literature).

Blocks are indexed by their location in the blockchain.
We denote the first block by~\blockIndex{1} and the~$j$'th block by~$\blockIndex{j}$.

Transactions update the system state. 
An entity creates a transaction locally, and can \emph{publish} it to the other entities. 
Transaction publication is instantaneous, and for simplicity we abstract this process by considering published transactions to be part of a publicly-shared data structure called the \emph{mempool}.
As in previous work~\cite{dziembowski2018general,dziembowski2017perun,tsabary2018thegapgame}, all entities have synchronous access to the mempool and the blockchain.

Unpublished and mempool transactions are \emph{unconfirmed}, and are yet to take effect. 
Miners can include unconfirmed transactions of their choice when creating a block, thus \emph{confirming} them and executing the stated token reassignment.

The system limits the number of included transactions per block, and to simplify presentation we consider this limit to be one transaction per block.

The system progresses in steps.
Each step~$j$ begins with system entities publishing transactions to the mempool.
Then, a single miner is selected at random proportionally to her mining power, i.e., miner~$i$ is selected with probability~$\prob{i}$.
The selected miner creates block~$\blockIndex{j}$, either empty or containing a single transaction, and adds it to the blockchain.
This confirms the transaction, reassigning its tokens and awarding that miner with its fee.
The system then progresses to the next step.

		\subsection{System State} 
		\label{sec:model:state} 

The system state is a set of token and \emph{\predicateName{}} pairs called \emph{\contractNamePlural{}}.
Transactions \emph{reassign} tokens from one contract to another. 
We say that a transaction \emph{redeems} a \contractName{} if it reassigns its tokens to one or more new \emph{initiated} \contractNamePlural{}.

To redeem a contract, a transaction must supply input values such that the contract predicate evaluated over them is true. 
Transactions that result in negative predicate value are \emph{invalid}, and cannot be included in a block. 
We simply disregard such transactions.

We say that an entity \emph{owns} tokens if she is the only entity able to redeem their \contractName{}, i.e., the only entity that can provide input data in a transaction that results in positive evaluation of the \contractName{}'s \predicateName{}.

Transactions reassign tokens as follows.
Each transaction lists one or more input \contractNamePlural{} that it redeems, each with its respective provided values.
Each transaction also lists one or more output \contractNamePlural{} that it initiates.
A transaction is only valid if the aggregate amount in the output contracts is not larger than the amount in its redeemed input contracts. 
The difference between the two amounts is the transaction's \emph{fee}. 
The fee is thus set by the entity that creates the transaction.

The system state is derived by parsing the transactions in the blockchain by their order. 
Each transaction reassigns tokens, thus updating the contract set. 
Transaction fees are reassigned to a contract supplied by the confirming miner.

Two transactions \emph{conflict} if they redeem the same contract. 
Both of them might be valid, but only one can be placed in the blockchain. 
Once one of them is confirmed, a block containing the other is invalid. 
We disregard such invalid blocks, and assume miners only produce valid ones. 

There is always at least one unconfirmed valid transaction in the mempool~\cite{blockchain2018mempoolCount,easley2019mining,lavi2019redesigning,arvindcutoff,tsabary2018thegapgame}, and the highest offered fee by any mempool transaction is~$\txFee$, referred to as the~\emph{base} fee.
Miners act rationally to maximize their received fees~(see \S\ref{sec:related_work_transaction_optimization}). 
Users are also rational, and offer the minimal sufficient fee for having their transactions confirmed.

\predicateNamePluralCap{} have access to three primitives of interest: 
\begin{itemize}
	\item $\verifySig{\sig{}}{\publicKey}$: validate that a digital signature~$\sig{}$ provided by the transaction (on the transaction, excluding~$\sig{}$) matches a public key~$\publicKey$ specified in the contract.
	
	\item $\verifyPreImage{\hashInputVar{}}{\hashOutputVar{}}$: validate that a preimage~$\hashInputVar{}$ provided by the transaction matches a hash digest~$\hashOutputVar{}$ specified in the contract, i.e., that~$\hashFunctionWithInput{\hashInputVar{}} = \hashOutputVar{}$.
	
	\item $\verifyTimeout{\timeout}$: validate that the transaction trying to redeem the \contractName{} is in a block at least~$\timeout$ blocks after the transaction initiating it.

\end{itemize}

A predicate can include arbitrary logic composing those primitives. 
In \predicateNamePlural{} that offer multiple redeem options via \emph{or} conditions, we refer to each option as a \emph{redeem path}.

We note that once a transaction is published, its content becomes available to all entities.
We say that an entity \emph{knows} data if it is available to it.

		\subsection{\htlcSpec{} Specification}
        \label{sec:model_desired_functionality}
		\label{sec:model:spec} 

We formalize as~\htlcSpec{} the following \contractName{} specification, used in variety of blockchain-based systems and algorithms~\cite{poon2016bitcoin,mccorry2016towards,miller2019sprites,dziembowski2018general,dziembowski2017perun,campanelli2017zero,banasik2016efficient,fuchsbauer2019wi,herlihy2018atomic,malavolta2019anonymous,van2019specification,miraz2019atomic}.
\htlcSpec{} is specified for two users,~$\userA$ and~$\userB$.
It is parameterized by a hash digest and a timeout, and contains a certain \emph{deposit} amount,~$\depositTokens$. 
$\userA$ gets the deposit if she publishes a matching preimage before the timeout elapses, otherwise~$\userB$ does.

In a blockchain setting,~$\userA$ and~$\userB$ redeem the deposit with a transaction that offers a fee.
We assume the contract token amount~$\depositTokens$ is larger than the base fee~$\txFee$, otherwise the contract is not applicable.

The redeeming transaction by~$\userA$ or~$\userB$ (according to the scenario) should require a fee negligibly larger than the base fee~$\txFee$.
Specifically, the fee amount is independent of~$\depositTokens$.

To construct \htlcSpec{},~$\userA$ and~$\userB$ choose the included hash digest, the timeout, and the token amount,~$\depositTokens$. 
Then either of them issues a transaction that generates the contract with~\depositTokens tokens and the parameterized predicate. 
Either~$\userA$ or~$\userB$ initially knows the preimage, depending on the scenario.

For simplicity, we assume that~$\userA$ either knows the preimage when the transaction initiating \htlcSpec{} is confirmed on the blockchain, or she never does.

	\section{\madhtlc{} Design}
	\label{sec:mad_htlc_design}

We present \madhtlc{}, an implementation of \htlcSpec{}.
\madhtlc{} comprises two sub \contractNamePlural{}\footnote{Separating \madhtlc{} into two sub contracts is for Bitcoin compatibility; these can be consolidated to a single contract in blockchains supporting, richer smart-contract languages, see \S\ref{sec:mad_htlc_implementation}.}~--- \depositContract{}, the core implementation of the \htlcSpec{} functionality, and \collateralContract{}, an auxiliary \contractName{} for collateral, used to disincentivize spiteful behavior by~$\userB$.

\madhtlc{} includes additional variables and parameters along those of \htlcSpec{}, facilitating its realization.
It includes two preimages,~$\aliceSecret$ and~$\bobSecret$; the former corresponds to the preimage of \htlcSpec{}; the latter is an addition in \madhtlc{}, chosen by~$\userB$, used in the various redeem paths.
It also includes the \htlcSpec{} deposit token amount~$\depositTokens$, but also utilizes~$\collateralTokens$ collateral tokens.

Essentially, \depositContract{} lets either~$\userA$ redeem~$\depositTokens$ with preimage~$\aliceSecret$, or~$\userB$ after the timeout with preimage~$\bobSecret$, or any party with both preimages~$\aliceSecret$ and~$\bobSecret$.
\collateralContract{} has~$\collateralTokens$ redeemable only after the timeout, either by~$\userB$, or by any party with both preimages~$\aliceSecret$ and~$\bobSecret$.

We present protocol~$\protocolForBlockchain$ for setup, initiation and redeeming of a~\madhtlc{}~(\S\ref{sec:mad_htlc_design_protocol}), and detail the specifics of \depositContract{}~(\S\ref{sec:mad_htlc_design_alice_bob}) and \collateralContract{}~(\S\ref{sec:mad_htlc_design_deposit}).

		\subsection{Protocol~$\protocolForBlockchain$}
\label{sec:mad_htlc_design_protocol}
Recall that~$\htlcSpec$ is used in several scenarios differing in which party chooses the preimage, when that chosen preimage is shared, and who initiates the contract on the blockchain~(\S\ref{sec:related_work_htlc}).
However, in all scenarios, once the contract is initiated,~$\userA$ can redeem~$\depositTokens$ by publishing the preimage before the timeout elapses, and~$\userB$ can redeem them only after.

So, there are several variants for any protocol that implements~$\htlcSpec$, and we focus on the variant where~$\userB$ picks the first preimage~$\aliceSecret$, potentially shares it with~$\userA$, either~$\userA$ or~$\userB$ can initiate the contract on chain, and either can redeem it using the various redeem paths.
This corresponds to the~\emph{off-chain payment channels} scenario~(\S\ref{sec:related_work_htlc}).

Protocol~$\protocolForBlockchain$~(Protocol~\ref{protocol:madhtlc_presentation}) progresses in phases, and is parameterized by the timeout~$\timeout$ and the token amounts~$\depositTokens$ and~$\collateralTokens$.
First, in the~$\stateSetup$ phase,~$\userB$ randomly draws (denoted by~$\getsRandom$) the two preimages~$\aliceSecret$ and~$\bobSecret$.
He then derives their respective hash digests~$\contractSecretAlice \gets \hashFunctionWithInput{\aliceSecret}$ and~$\contractSecretBob \gets \hashFunctionWithInput{\bobSecret}$, shares~$\contractSecretAlice$ and~$\contractSecretBob$ with~$\userA$.
Upon~$\userA$'s confirmation~$\userB$ creates a transaction~$\initTxName$ that initiates a~$\madhtlc$ with parameters~$\timeout, \contractSecretAlice, \contractSecretBob, \depositTokens, \collateralTokens$ and shares~$\initTxName$ with~$\userA$.

In the following~$\stateInit$ phase,~$\userB$ can share~$\aliceSecret$ with~$\userA$. 
Additionally, either~$\userA$ or~$\userB$ can publish~$\initTxName$ to the mempool, allowing miners to confirm it and initiate the~$\madhtlc$.

In the final~$\stateRedeeming$ phase, once the~$\madhtlc$ is initiated,~$\userA$ and~$\userB$ can redeem~$\depositTokens$ and~$\collateralTokens$ from~$\depositContract$ and~$\collateralContract$, respectively.
Specifically,~$\userA$ redeems~$\depositTokens$ only if she received~$\aliceSecret$ from~$\userB$, and otherwise~$\userB$ redeems~$\depositTokens$.
Either way,~$\userB$ redeems~$\collateralTokens$.

\begin{protocolEnv} 
	
	Protocol~$\protocolForBlockchain$ run by~$\userA$ and~$\userB$ details the setup, initiation and redeeming of a~$\madhtlc$ in the scenario where~$\userB$ picking~$\aliceSecret$.
	It is parameterized by timeout~$\timeout$, and token amounts~$\depositTokens$ and~$\collateralTokens$.

	\headline{$\stateSetup$}
	
	$\userB$ draws~$\aliceSecret \getsRandom \rangeSecParameter, \bobSecret \getsRandom \rangeSecParameter$ and sets~$\contractSecretAlice \gets \hashFunctionWithInput{\aliceSecret},\contractSecretBob \gets \hashFunctionWithInput{\bobSecret}$.
	Then~$\userB$ sends~$\contractSecretAlice$ and~$\contractSecretBob$ to~$\userA$ for confirmation.
	Afterwards~$\userB$, compiles a transaction~$\initTxName$ that initiates a~$\madhtlc$ (both~\depositContract{} and~\collateralContract{}) with~$\contractSecretAlice, \contractSecretBob, \timeout,\depositTokens,\collateralTokens$ as parameters and shares it with~$\userA$.
	$\initTxName$ is not published yet.

	\headline{$\stateInit$}	
	
	$\userB$ can send~$\aliceSecret$ to~$\userA$.
	If so,~$\userA$ expects to receive~$\generalSecret$ such that~$\contractSecretAlice = \hashFunctionWithInput{\generalSecret}$, and ignores other values.

	Either~$\userA$ or~$\userB$ publish~$\initTxName$ to the mempool, and it is eventually included in a block~$\blockIndex{j}$, initiating~$\madhtlc$.
	
	\headline{$\stateRedeeming$}	
	
	If~$\userA$ had received~$\aliceSecret$, she creates and publishes~$\aliceTransactionMadAliceBob$, a transaction redeeming~\depositContract{} using the~$\mhRedeemPathOne$ redeem path.

	$\userB$ waits for the creation of block~$\blockIndex{j+\timeout{} - 1}$.
	If by then~$\userA$ did not publish~$\aliceTransactionMadAliceBob$ then~$\userB$ publishes~$\bobTransactionMadBoth$, redeeming both \depositContract{} and \collateralContract{} through~$\mhRedeemPathTwo$ and~$\mhRedeemPathFour$ redeem paths, respectively.
	If~$\userA$ did publish~$\aliceTransactionMadAliceBob$ then~$\userB$ publishes~$\bobTransactionMadDeposit$, redeeming only \collateralContract{} using the~$\mhRedeemPathFour$ redeem path.

	\caption{$\protocolForBlockchain$}
	\label{protocol:madhtlc_presentation}
\end{protocolEnv} 

		\subsection{\depositContract{}}
		\label{sec:mad_htlc_design_alice_bob}

The \depositContract{} contract is initiated with~$\depositTokens$ tokens. 
Its \predicateName{} is parameterized with~$\userA$'s and~$\userB$'s public keys,~$\alicePublicKey$ and~$\bobPublicKey$, respectively; a hash digest of the predefined preimage~$\contractSecretAlice = \hashFunctionWithInput{\aliceSecret}$ such that any entity other than~$\userA$ and~$\userB$ does not know~$\aliceSecret$, and~$\userA$ or~$\userB$ know~$\aliceSecret$ according to on the specific use case; 
another hash digest~$\contractSecretBob$ such that~$\hashFunctionWithInput{\bobSecret} = \contractSecretBob$, where only~$\userB$ knows~$\bobSecret$; and a timeout~$\timeout$.
The contract has three redeem paths, denoted by~$\mhRedeemPathOne, \mhRedeemPathTwo$ and~$\mhRedeemPathThree$, and presented in Predicate~\ref{alg:modified_redeem_paths}.
Table~\ref{fig:mad_alicebob_table} shows the possible redeeming entities of \depositContract{}.

In the~$\mhRedeemPathOne$ path~(line~\ref{alg:modified_redeem_path_alice}),~$\userA$ can redeem \depositContract{} by creating a transaction including~$\aliceSecret$ and~$\aliceSig$, a signature created using her secret key~$\aliceSecretKey$.
Such a transaction can be included even in the next block~$\blockIndex{j+1}$.
This path is only available to~$\userA$, since only she ever knows~$\aliceSecretKey$.  

In the~$\mhRedeemPathTwo$ path~(line~\ref{alg:modified_redeem_path_bob}),~$\userB$ can redeem \depositContract{} by creating a transaction including~$\bobSecret$ and~$\bobSig$, a signature created using his secret key~$\bobSecretKey$.
Such a transaction can be included in a block at least~$\timeout$ blocks after \depositContract{}'s initiation, that is, not earlier than block~$\blockIndex{j+\timeout}$.
This path is only available to~$\userB$, since only he ever knows~$\bobSecretKey$.

In the~$\mhRedeemPathThree$ path~(line~\ref{alg:modified_redeem_path_miner}), any entity can redeem \depositContract{} by creating a transaction including both~$\aliceSecret$ and~$\bobSecret$.
A transaction taking this redeem path does not require a digital signature, and can be included even in the next block~$\blockIndex{j+1}$.
This path is therefore available to any entity, and specifically to any miner, that knows both~$\aliceSecret$ and~$\bobSecret$.

\begin{algorithm}[t]
	\renewcommand{\algorithmcfname}{Predicate}
	\DontPrintSemicolon
	\SetAlgoNoLine
	\predicateTextSize
	
	\nonl Parameters: $\alicePublicKey, \bobPublicKey, \timeout, \contractSecretAlice, \contractSecretBob$ \\
	
	\nonl $\depositContract{} \left(\hashInputVar{1}, \hashInputVar{2}, \sig{} \right) \coloneqq $ \\
	
	$\depositPredicateMathSpace \left(\verifyPreImage{\hashInputVar{1}}{\contractSecretAlice} \land \verifySig{\sig{}}{\alicePublicKey} \right)$ \label{alg:modified_redeem_path_alice} $\lor$ \tcp*{$\mhRedeemPathOne$} 
	
	$\depositPredicateMathSpace \left(\verifyPreImage{\hashInputVar{2}}{\contractSecretBob} \land \verifySig{\sig{}}{\bobPublicKey} \land \verifyTimeout{\timeout} \right)$ \label{alg:modified_redeem_path_bob} $\lor$ \tcp*{$\mhRedeemPathTwo$}
	
	$\depositPredicateMathSpace \left(\verifyPreImage{\hashInputVar{1}}{\contractSecretAlice} \land \verifyPreImage{\hashInputVar{2}}{\contractSecretBob} \right)$ \label{alg:modified_redeem_path_miner} \tcp*{$\mhRedeemPathThree$}

	\caption{\depositContract{}}
	\label{alg:modified_redeem_paths}
\end{algorithm}

\begin{table}[!t]
	\scriptsize

	\begin{minipage}{\linewidth}
		\centering
		\captionof{table}{Possible redeeming entity of \depositContract{}.}	
		\negspace			
		\begin{tabular}{| c | c | c |} 
			\hline
			\textit{} & \textit{$\bobSecret$ published} & \textit{$\bobSecret$ not published} \\
			\hline
			\textit{$\aliceSecret$ published}		   & Any entity   & $\userA$ \\ \hline
			\textit{$\aliceSecret$ not published}	   & $\userB$   & --- \\ \hline	 
		\end{tabular}

		\label{fig:mad_alicebob_table}
	\end{minipage}%

\vspace{0.5\baselineskip}

	\begin{minipage}{\linewidth}
		\centering
		\captionof{table}{Possible redeeming entity of \collateralContract{}.}
		\negspace		
		\begin{tabular}{| c | c | c |} 
			\hline
			\textit{} & \textit{$\bobSecret$ published} & \textit{$\bobSecret$ not published} \\
			\hline
			\textit{$\aliceSecret$ published}		   & Any entity   & $\userB$ \\ \hline
			\textit{$\aliceSecret$ not published}	   & $\userB$   & $\userB$ \\ \hline	 
		\end{tabular}

		\label{fig:mad_deposit_table}
	\end{minipage} 

\end{table}

		\subsection{\collateralContract{}}
		\label{sec:mad_htlc_design_deposit}

The \collateralContract{} contract is initiated with~$\collateralTokens$ tokens.
Its \predicateName{} is parameterized with~$\userB$'s public key~$\bobPublicKey$; the hash digest of the predefined secret~$\contractSecretAlice = \hashFunctionWithInput{\aliceSecret}$ such that any entity other than~$\userA$ and~$\userB$ does not know~$\aliceSecret$, and~$\userA$ and~$\userB$ know~$\aliceSecret$ based on the specific use case; the hash digest~$\contractSecretBob$ such that~$\hashFunctionWithInput{\bobSecret} = \contractSecretBob$, where only~$\userB$ knows~$\bobSecret$; and a timeout~$\timeout$.
It has two redeem paths, denoted by~$\mhRedeemPathFour$ and~$\mhRedeemPathFive$, and presented in Predicate~\ref{alg:mad_deposit_redeem_paths}.
Table~\ref{fig:mad_deposit_table} shows the possible redeeming entities of \collateralContract{}.

Both paths are constrained by the timeout~$\timeout$, meaning a redeeming transaction can only be included in a block at least~$\timeout$ blocks after the \collateralContract{} initiation~(line~\ref{alg:mad_deposit_timeout}).

In the~$\mhRedeemPathFour$ path~(line~\ref{alg:mad_deposit_redeem_path_bob}),~$\userB$ can redeem \collateralContract{} by creating a transaction including~$\bobSig$, a signature created using his secret key~$\bobSecretKey$.
Only~$\userB$ can redeem \collateralContract{} using this path as he is the only one able to produce such a signature.
This path allows~$\userB$ to claim the collateral tokens in case either he or~$\userA$, but not both, publish a transaction redeeming \depositContract{}.

The~$\mhRedeemPathFive$ path~(line~\ref{alg:mad_deposit_redeem_path_miners}) allows any entity to redeem \collateralContract{} by creating a transaction including both~$\aliceSecret$ and~$\bobSecret$, not requiring any digital signature.
This path allows miners to claim the \collateralContract{} tokens in case~$\userB$ tries contesting~$\userA$ on redeeming \depositContract{}, thus disincentivizing his attempt.

\begin{algorithm}[t]
	\renewcommand{\algorithmcfname}{Predicate}
	\DontPrintSemicolon
	\SetAlgoNoLine
	\predicateTextSize
	
	\nonl Parameters: $\bobPublicKey, \timeout, \contractSecretAlice, \contractSecretBob$ \\
	
	\nonl $\collateralContract{} \left(\hashInputVar{1}, \hashInputVar{2}, \sig{} \right) \coloneqq $ \\
	
	$\collateralPredicateMathSpace \verifyTimeout{\timeout} \land \label{alg:mad_deposit_timeout}$ \\ 
	
	$\; \; \collateralPredicateMathSpace   \big[\verifySig{\sig{}}{\bobPublicKey} $ \label{alg:mad_deposit_redeem_path_bob} $ \lor $ \tcp*{$\mhRedeemPathFour$}
	
	$\; \; \; \collateralPredicateMathSpace  \left( \verifyPreImage{\hashInputVar{1}}{\contractSecretAlice} \land \verifyPreImage{\hashInputVar{2}}{\contractSecretBob} \right)\big] $ \tcp*{$\mhRedeemPathFive$} \label{alg:mad_deposit_redeem_path_miners} 
	
	\caption{\collateralContract{}}
	\label{alg:mad_deposit_redeem_paths}
\end{algorithm}


	\section{\madhtlc{} Security Analysis}
	\label{sec:mad_htlc_analysis}

To prove the security of \madhtlc{} we first show what actions the participants can take to interact with it~(\S\ref{sec:mad_htlc_possible_actions}).
We prove with the UC framework~\cite{canetti2007universally} the security of the setup, initiation and redeeming of a \madhtlc{}.
This analysis yields a set of conditions on which entity can redeem tokens from \madhtlc{}.

Then, we move to analyze how the entities should act to maximize their gains.
We formalize the redeeming of an initiated \madhtlc{} as a game played by~$\userA$,~$\userB$ and the miners~(\S\ref{sec:mad_htlc_game}), and show that they are all incentivized to act as intended~(\S\ref{sec:mad_htlc_proof}).


		\subsection{Setup, Initiation and Redeeming Transactions Security}
		\label{sec:mad_htlc_possible_actions}

Our first goal is to prove the setup and initiation of~$\madhtlc{}$ are secure and to show which valid transactions each participant can generate based on the mempool and blockchain state.
We present an overview of the security claims and proofs, and bring the details in Appendix~\ref{appendix:sec_proof}.

Like prior work~\cite{kiayias2016fair,cheng2019ekiden,kiayias2020composable,badertscher2017bitcoin,dziembowski2018general,luu2016making,wohrer2018smart,huang2019smart,zhou2018security,delmolino2016step,dziembowski2018fairswap,winzer2019temporary,asgaonkar2019solving,atzei2017survey,nikolic2018finding,breidenbach2017parityBug,bhargavan2016formal,tsabary2018thegapgame,poon2016bitcoin}, we assume the blockchain and predicate security holds, including the digital signature scheme and the hash function.

We make the following observation: Transaction invalidity due to~\verifyTimeoutName{} is temporal; this predicate becomes true once sufficiently many blocks are created.
In contrast, two valid transactions can conflict, so only one of them can be confirmed.
We neglect both invalidity reasons and show which valid transactions can be created; clearly, any transaction that is invalid under this relaxation is also invalid without it.
Additionally, we consider only transactions relevant to our protocol, ignoring unrelated transactions.

We formalize parties' ability to redeem the contract under this relaxation using the~$\publishingFunctionName{}\left(\right)$ function~(Eq.~\ref{eq:publishing_function_def}):
Denote~$\redeemPathIndex \in \left\{\mhRedeemPathOne,\mhRedeemPathTwo,\mhRedeemPathThree,\mhRedeemPathFour,\mhRedeemPathFive\right\}$; $\anyParty$ the redeeming party;
$\publishingFunctionHashMatches{a} = 1$ if the redeeming party has a suitable preimage for~$\contractSecretAlice$, and~$0$ otherwise; and
$\publishingFunctionHashMatches{b} = 1$ if the redeeming party has a suitable preimage for~$\contractSecretBob$, and~$0$ otherwise.
Then the relaxed contract predicate is expressed by the function~$\publishingFunctionName{}\left(\redeemPathIndex,\anyParty,\publishingFunctionHashMatches{a},\publishingFunctionHashMatches{b}\right)$.
We note redeeming transactions are published in the mempool, hence publish any included preimages.
\begin{equation}
\label{eq:publishing_function_def}
\begin{aligned}
\publishingFunctionName{}&\left(\redeemPathIndex,\anyParty,\publishingFunctionHashMatches{a},\publishingFunctionHashMatches{b}\right) =  \\
&
\begin{cases}
\left(\anyParty = \userA\right) \land \publishingFunctionHashMatches{a}  & \redeemPathIndex = \mhRedeemPathOne\\
\left(\anyParty = \userB\right) \land \publishingFunctionHashMatches{b}  & \redeemPathIndex = \mhRedeemPathTwo \\
\publishingFunctionHashMatches{a} \land \publishingFunctionHashMatches{b} & \redeemPathIndex \in \left\{\mhRedeemPathThree, \mhRedeemPathFive\right\} \\		
\anyParty = \userB										  & \redeemPathIndex = \mhRedeemPathFour 
\end{cases}
\end{aligned}
\end{equation}

We move to consider the setup, initiation, and redeeming of a single (relaxed) contract with respect to the mempool and the blockchain.
We focus on a \emph{mempool and blockchain projection}~(\emph{mbp}) functionality of a relaxed \madhtlc{}, and we model it as a single ideal functionality,~$\idealFunctionalityMempool$.
This functionality captures the parameter setup of a single contract by~$\userA$ and~$\userB$, its initiation, and redeeming transaction validity due to the~\verifyPreImageName{} and~\verifySigName{} predicates, disregarding conflicts and timeouts.
To facilitate the~$\verifyPreImageName$ predicate and its underlying preimage-resistant hash function~$\hashFunction$, we model the latter as a~\emph{global random oracle} ideal functionality~$\idealFunctionalityHash$~\cite{canetti2007universally,canetti2014practical,camenisch2018wonderful}.
We abstract away digital signatures by considering authenticated channels among parties and functionalities.
We consider an adversary that learns messages sent to~$\idealFunctionalityMempool$ and that messages were sent between parties but not their content.
This modeling is similar to previous work~\cite{dziembowski2018fairswap,dziembowski2018general,miller2019sprites,kiayias2020composable}.

We then define the $\left(\idealFunctionalityHash,\idealFunctionalityMempool\right)$-\emph{hybrid world}~\cite{canetti2007universally} (hereinafter, simply \emph{the hybrid world}), where the~$\idealFunctionalityHash$ and~$\idealFunctionalityMempool$ ideal functionalities reside.
In this hybrid world we then define the \emph{relaxed \madhtlc{}} (\emph{rmh}) protocol~$\protocolForUC$ that is similar to~\protocolForBlockchain{} (Protocol~\ref{protocol:madhtlc_presentation}), but 
(1)~it is defined with~$\idealFunctionalityHash$ and~$\idealFunctionalityMempool$; 
(2)~it considers system entities other than~$\userA$ and~$\userB$, and specifically miners, represented as a third party~$\userO$; and 
(3)~it disregards timeouts and transaction conflicts. 
 
The transition from~$\protocolForUC$ to~\protocolForBlockchain{} is straightforward, and we bring~$\protocolForUC$ in Appendix~\ref{appendix:sec_proof}.
 
Then, our goal is to prove the following lemma, detailing the possible valid transactions the entities can create and publish.

\begin{lemma}
	\label{lemma:valid_transactions}
	Let there be a contract setup and initiated as described by~$\protocolForUC$, let~$\propertyInternalPreimages{a}$ and~$\propertyInternalPreimages{b}$ be indicators whether the preimages~$\aliceSecret$ and~$\bobSecret$ were published in~$\idealFunctionalityMempool$, respectively, and let~$\propertySentPreimageA$ indicate if~$\userB$ shared~$\aliceSecret$ with~$\userA$.
	So, initially~$\propertyInternalPreimages{a} \gets 0$,~$\propertyInternalPreimages{b} \gets 0$ and~$\propertySentPreimageA \gets 0$.
	Then, parties~$\userA$,~$\userB$ and~$\userO$ can only create and publish the following valid redeeming transactions:	
	\begin{itemize}
		\item $\userB$ can publish a valid redeeming transaction using the~$\mhRedeemPathTwo$,~$\mhRedeemPathThree$,~$\mhRedeemPathFour$ or~$\mhRedeemPathFive$ redeem paths.
		Doing so with either~$\mhRedeemPathThree$ or~$\mhRedeemPathFive$ sets~$\propertyInternalPreimages{a} \gets 1$, and with either~$\mhRedeemPathTwo$,~$\mhRedeemPathThree$, or~$\mhRedeemPathFive$ sets~$\propertyInternalPreimages{b} \gets 1$.
		In addition to transaction creation and publication,~$\userB$ can share~$\aliceSecret$ with~$\userA$ (and by doing so sets~$\propertySentPreimageA \gets 1$).
		
		\item If~$\propertyInternalPreimages{a} \lor \propertySentPreimageA = 1$ then $\userA$ can publish a valid redeeming transaction using the~$\mhRedeemPathOne$ redeem path (and by doing so she sets~$\propertyInternalPreimages{a} \gets 1$).
		If~$\left(\propertyInternalPreimages{a} \lor \propertySentPreimageA \right) \land \propertyInternalPreimages{b} = 1$, then $\userA$ can publish a valid redeeming transaction with either the~$\mhRedeemPathThree$ or~$\mhRedeemPathFive$ redeem paths (and by doing so sets~$\propertyInternalPreimages{a} \gets 1$ and~$\propertyInternalPreimages{b} \gets 1$).
		
		\item If~$\propertyInternalPreimages{a} \land \propertyInternalPreimages{b} = 1$ then $\userO$ can publish a valid redeeming transaction with either~$\mhRedeemPathThree$ or~$\mhRedeemPathFive$. 
	\end{itemize}
\end{lemma}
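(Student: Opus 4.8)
The plan is to prove the lemma by induction over an execution of $\protocolForUC$ in the $(\idealFunctionalityHash,\idealFunctionalityMempool)$-hybrid world, maintaining a knowledge invariant that records exactly which preimages each of $\userA$, $\userB$, $\userO$ can supply to $\idealFunctionalityMempool$ at any moment, and then reading the admissible redeem paths off the relaxed predicate $\publishingFunctionName$ of Eq.~\ref{eq:publishing_function_def}. Working in this hybrid world is what makes forging impossible by construction: a transaction handed to $\idealFunctionalityMempool$ is accepted only when the matching case of $\publishingFunctionName$ is satisfied, its $\verifyPreImageName$ checks are resolved through the global random oracle $\idealFunctionalityHash$ (so a party can answer a digest only if it has genuinely obtained the preimage), and its $\verifySigName$ checks are abstracted by authenticated channels (so $\aliceSig$ is producible only by $\userA$ and $\bobSig$ only by $\userB$).

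First I would state the invariant. Throughout the execution $\userB$ knows both $\aliceSecret$ and $\bobSecret$, since he drew them in $\stateSetup$; $\userA$ knows $\aliceSecret$ exactly when $\propertyInternalPreimages{a}\lor\propertySentPreimageA = 1$ and knows $\bobSecret$ exactly when $\propertyInternalPreimages{b} = 1$; and $\userO$ knows $\aliceSecret$ exactly when $\propertyInternalPreimages{a}=1$ and $\bobSecret$ exactly when $\propertyInternalPreimages{b}=1$. The base case is immediate: directly after $\stateSetup$ and $\stateInit$, all three indicators are $0$, only $\userB$ holds the preimages, $\userA$ can sign with $\aliceSecretKey$ and $\userB$ with $\bobSecretKey$, and neither $\userA$ nor $\userO$ has resolved either digest through $\idealFunctionalityHash$.

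Next I would run the inductive step over the state-changing actions, of which there are only two kinds: $\userB$ sharing $\aliceSecret$ (which sets $\propertySentPreimageA\gets 1$ and adds $\aliceSecret$ to $\userA$'s knowledge), and a party submitting a redeeming transaction, which $\idealFunctionalityMempool$ accepts iff the submitter possesses precisely the preimages and signature that the chosen case of $\publishingFunctionName$ demands. Combining acceptance with the invariant yields the three bullets directly. For $\userB$: he may always take $\mhRedeemPathTwo$ and $\mhRedeemPathFour$ (he signs) and $\mhRedeemPathThree$ and $\mhRedeemPathFive$ (he holds both preimages), but never $\mhRedeemPathOne$ (which needs $\aliceSig$); a publication on $\mhRedeemPathThree$ or $\mhRedeemPathFive$ exposes $\aliceSecret$ so $\propertyInternalPreimages{a}\gets 1$, and one on $\mhRedeemPathTwo$, $\mhRedeemPathThree$ or $\mhRedeemPathFive$ exposes $\bobSecret$ so $\propertyInternalPreimages{b}\gets 1$. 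For $\userA$: $\mhRedeemPathOne$ opens once $\propertyInternalPreimages{a}\lor\propertySentPreimageA = 1$, while $\mhRedeemPathThree$ and $\mhRedeemPathFive$ open only once she additionally holds $\bobSecret$, i.e. $(\propertyInternalPreimages{a}\lor\propertySentPreimageA)\land\propertyInternalPreimages{b}=1$; the signature paths $\mhRedeemPathTwo$ and $\mhRedeemPathFour$ stay closed to her. For $\userO$: only the signature-free $\mhRedeemPathThree$ and $\mhRedeemPathFive$ are reachable, and only once $\propertyInternalPreimages{a}\land\propertyInternalPreimages{b}=1$ --- note $\userO$ never acquires $\aliceSecret$ by sharing, since that channel runs only from $\userB$ to $\userA$. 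Each indicator update after a publication holds because the preimages carried by a published transaction become mempool-public.

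The main obstacle is the completeness direction --- that these really are the \emph{only} valid transactions, so that no party can satisfy a path whose preimages or signature it lacks. Within the hybrid world this is settled by the definitions of $\idealFunctionalityHash$ and $\idealFunctionalityMempool$, but the cryptographic weight lies in justifying that these idealizations faithfully capture $\protocolForBlockchain$. I would discharge this with a UC simulator $\simulator$ and argue that the real and ideal executions (the latter against $\idealFunctionalityMadhtlc$) are indistinguishable, so that the only two ways an environment could violate the characterization --- presenting a preimage of $\contractSecretAlice$ or $\contractSecretBob$ that was never received, or forging a counterpart's signature --- reduce to preimage-resistance of $\hashFunction$ and to EU-CMA unforgeability, each occurring with negligible probability. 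Throughout I would keep the timeout and conflict relaxations of Eq.~\ref{eq:publishing_function_def} in force and then recover the unrelaxed claim by the observation that any transaction invalid under the relaxation is invalid without it.
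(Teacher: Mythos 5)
Your proposal is correct in substance, but it organizes the argument differently from the paper. You prove the lemma \emph{directly} in the $\left(\idealFunctionalityHash,\idealFunctionalityMempool\right)$-hybrid world by an induction that maintains a per-party knowledge invariant and reads admissibility off~$\publishingFunctionName$, with completeness supplied by the random oracle (no preimage without having received it, except with negligible probability) and authenticated channels (no signature forgery). The paper inverts this: it first observes that the lemma's properties hold \emph{by construction} in the ideal world, since~$\idealFunctionalityMadhtlc$ literally encodes the bullets via~$\publishingFunctionName$ and its indicator updates, and then transfers the statement to the hybrid world by proving UC-realization (its Lemma~\ref{lemma:indistinguishability}), constructing explicit simulators for each corruption set that internally simulate~$\idealFunctionalityHash$ and~$\idealFunctionalityMempool$ and keep~$\mhFInternalPreimages{1},\mhFInternalPreimages{2}$ synchronized with~$\biInternalPreimages{1},\biInternalPreimages{2}$, using the~$\mhUpdate$ influence port to absorb publications by corrupted parties. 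Your induction is more elementary and self-contained for the lemma as literally stated (which lives in the hybrid world), whereas the paper's detour buys a uniform treatment of arbitrary corruption sets~-- in particular, a corrupted party revealing a correct preimage through an unexpected route is handled mechanically by the simulator rather than by your invariant, which you would otherwise have to verify against out-of-band communication among colluding corrupted parties. Two placement corrections: the paper's simulation argument runs between the hybrid and ideal worlds, not between~$\protocolForBlockchain$ and the ideal world as your last paragraph suggests; and consequently no EU-CMA reduction ever appears in the formal proof, since signatures are abstracted as authenticated channels throughout~-- the step down to~$\protocolForBlockchain$ is left informal in the paper (``the transition is straightforward''), so your reductions to preimage resistance and unforgeability belong to that unformalized transition, not to the proof of the lemma itself.
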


To prove Lemma~\ref{lemma:valid_transactions} we consider an ideal world, where we define a \emph{relaxed \madhtlc{}} ideal functionality~$\idealFunctionalityMadhtlc$~(Functionality~\ref{functionality:game}) that implements the setup, initiation and redeeming of a relaxed~$\madhtlc$ contract.
%
%

$\idealFunctionalityMadhtlc$ maintains indicators~$\mhFSetupA$, $\mhFSetupB$, $\mhFSentPreimageA$, $\mhFPublished$, $\mhFInit$, and~$\mhFInternalPreimages{1}$ and~$\mhFInternalPreimages{2}$, corresponding to execution of~$\madhtlc$~(Protocol~\ref{protocol:madhtlc_presentation}):~$\mhFSetupA$ and~$\mhFSetupB$ correspond to 
to~$\userA$ and~$\userB$ completing their setup;
$\mhFSentPreimageA$ is set if~$\userB$ shared~$\aliceSecret$ with~$\userB$;
$\mhFPublished$ and $\mhFInit$ indicate if the execution reached the~$\stateInit$ and~$\stateRedeeming$ phases, respectively; 
and~$\mhFInternalPreimages{1}$ and~$\mhFInternalPreimages{2}$ are set if~$\aliceSecret$ and~$\bobSecret$ are published with a transaction.

$\idealFunctionalityMadhtlc$ leaks messages to~$\simulator$ and receives a special~$\mhUpdate$ instruction that sets either~$\mhFInternalPreimages{1}$ or~$\mhFInternalPreimages{2}$.
Looking ahead, this allows the simulator to notify~$\idealFunctionalityMadhtlc$ of a publication by a corrupted party.

\begin{functionalityEnv} 
	
	Ideal functionality~$\idealFunctionalityMadhtlc$ in the ideal world represents the setup, initiation and redeeming transaction publication of the contract for session id~$\id$.
	It interacts with parties~$\userA, \userB, \userO$, and simulator~$\simulator$.
	It internally stores indicators~$\mhFSetupA,\mhFSetupB,\mhFSentPreimageA,\mhFPublished,\mhFInit$,~$\mhFInternalPreimages{1}$ and~$\mhFInternalPreimages{2}$, all with initial value of~$0$.
	
	\begin{itemize}
		
		\item Upon receiving~$\left(\mhSetupB,\id\right)$ from~$\userB$ when~$\mhFSetupB = 0$, set~$\mhFSetupB \gets 1$ and leak~$\left(\mhSetupB,\id\right)$ to~$\simulator$. 
		
		\item Upon receiving~$\left(\mhSetupA,\id\right)$ from~$\userA$ when~$\mhFSetupB = 1 \land \mhFSetupA = 0$, set~$\mhFSetupA \gets 1$ and leak~$\left(\mhSetupA,\id\right)$ to~$\simulator$. 
		
		\item Upon receiving~$\left(\mhSharePreimage,\id\right)$ from~$\userB$ when~$\mhFSetupA = 1 \land \mhFSentPreimageA = 0$, set~$\mhFSentPreimageA \gets 1$, and leak~$\left(\mhSharePreimage,\id\right)$ to~$\simulator$.
		
		\item Upon receiving~$\left(\mhPublish,\id\right)$ from either~$\userA$ or~$\userB$ when~$\mhFSetupA = 1 \land \mhFPublished = 0$, set~$\mhFPublished \gets 1$, and leak~$\left(\mhPublish,\id\right)$ to~$\simulator$.
		
		\item Upon receiving~$\left(\mhInit,\id\right)$ from~$\userO$ when~$\mhFPublished = 1 \land \mhFInit = 0$, set~$ \mhFInit \gets 1$, and leak~$\left(\mhInit,\id\right)$ to~$\simulator$.
		
		\item Upon receiving~$\left(\mhTransaction,\id,\redeemPathIndex\right)$ from any party~$\anyParty$ such that~$\redeemPathIndex\in \left\{\mhRedeemPathOne,\mhRedeemPathTwo,\mhRedeemPathThree,\mhRedeemPathFour,\mhRedeemPathFive\right\}$ when~$\mhFInit = 1$, set~$\mhFInternalPreimages{1} \gets \mhFInternalPreimages{1} \lor \left(\left(\anyParty = \userA\right) \land \mhFSentPreimageA \land \left(\redeemPathIndex = \mhRedeemPathOne\right) \right) \lor \left(\left(\anyParty = \userB\right) \land \redeemPathIndex \in \left\{\mhRedeemPathOne,\mhRedeemPathThree, \mhRedeemPathFive\right\} \right)$ and~$\mhFInternalPreimages{2} \gets \mhFInternalPreimages{2} \lor \left(\left(\anyParty = \userB\right) \land \redeemPathIndex \in \left\{\mhRedeemPathTwo, \mhRedeemPathThree, \mhRedeemPathFive\right\} \right)$, denote~$\mhFResult \gets \publishingFunctionName{}\left(\redeemPathIndex,\anyParty,\mhFInternalPreimages{1},\mhFInternalPreimages{2}\right)$, leak~$\left(\mhTransaction,\id,\redeemPathIndex,\anyParty\right)$ to~$\simulator$, and return~$\mhFResult$ to~$\anyParty$.
		
		\item Upon receiving~$\left(\mhUpdate,\id,i\right)$ for~$i\in \left\{0,1\right\}$ from~$\simulator$ through the influence port, set~$\mhFInternalPreimages{i} \gets 1$.
		
	\end{itemize}

	\caption{$\idealFunctionalityMadhtlc$ in the ideal world.}
	\label{functionality:game}
\end{functionalityEnv} 

The construction of~$\idealFunctionalityMadhtlc$ and the definition of~$\publishingFunctionName{}$~(Eq.~\ref{eq:publishing_function_def}) imply that the properties described by Lemma~\ref{lemma:valid_transactions} trivially hold in the ideal world.

We then prove~$\protocolForUC$ UC-realizes~$\idealFunctionalityMadhtlc$, i.e., for any PPT adversary~$\adversary$, there exists a PPT simulator~$\simulator$ such that for any PPT environment~$\funcEnv$, the execution of~$\protocolForUC$ in the hybrid world with~$\adversary$ is computationally indistinguishable from the execution of~$\idealFunctionalityMadhtlc$ in the ideal world with~$\simulator$.

We prove the aforementioned by showing how to construct such a~$\simulator$ for any~$\adversary$, and the derived indistinguishability towards~$\funcEnv$.
$\simulator$ internally manages two preimages on its own, which are indistinguishable from to those chosen by~$\userB$ in the hybrid world: for an honest~$\userB$,~$\simulator$ draws these two preimages from the same distribution as in the real world; for a corrupted~$\userB$,~$\simulator$ learns the chosen preimages throughout the execution.
Additionally,~$\simulator$ internally-simulates~\idealFunctionalityHash{} and~\idealFunctionalityMempool{}, and interacts with~$\idealFunctionalityMadhtlc$ through leakage and influence ports.

The existence of these simulators shows Lemma~\ref{lemma:valid_transactions} applies to the hybrid world as well, meaning it details the possible valid redeeming transactions of a relaxed \madhtlc{}.

Recall the relaxed version considers only the~$\verifyPreImageName$ and~$\verifySigName$ predicates while disregarding $\verifyTimeoutName$ and transactions conflicts, which we now move to consider.

		\subsection{\madhtlc{} Game}
		\label{sec:mad_htlc_game}

The \madhtlc{} construction within the blockchain system gives rise to a game: the participants are~$\userA$,~$\userB$ and the system miners; their utilities are their tokens; and the action space is detailed by Lemma~\ref{lemma:valid_transactions} while considering the timeout constraints and transaction conflicts.

Note that Lemma~\ref{lemma:valid_transactions} considers party~$\userO$ representing any system miner, while the upcoming analysis considers all the miners and their individual rewards.

The \madhtlc{} game begins when the \depositContract{} and \collateralContract{} contracts are initiated in some block~$\blockIndex{j}$.
The game, which we denote by~$\gameDefNameDHTLC{}$, comprises~$\timeout$ rounds, representing the creation of blocks~$\blockIndex{j +1},...,\blockIndex{j +\timeout}$.
Each round begins with~$\userA$ and~$\userB$ publishing redeeming transactions, followed by a miner creating a block including a transaction of her choice.

$\userA$ and~$\userB$'s strategies are their choices of published transactions~-- which transactions to publish, when, and with what fee.
Miner strategies are the choices of which transaction to include in a block if they are chosen to create one.

To accommodate for the stochastic nature of the game~\cite{mertens1981stochastic} we consider entity utilities as the expected number of tokens they own at game conclusion, i.e., after the creation of~$\timeout$ blocks.
$\userA$ and~$\userB$'s utilities depend on the inclusion of their transactions and their offered fees, and miner utilities on their transaction inclusion choices.

We present the game details~(\S\ref{sec:mad_htlc_game_details}) and the suitable solution concept~(\S\ref{sec:mad_htlc_game_sc}).

			\subsubsection{Game Details}
\label{sec:mad_htlc_game_details}

The game progresses in rounds, where each round comprises two steps. 
First,~$\userA$ and~$\userB$ alternately publish transactions, until neither wishes to publish any more.

Note that all published transactions of the current and previous rounds are in the mempool.
Since miners prefer higher fees, for the analysis we ignore any transaction~$\transactionOne$ if there is another transaction~$\transactionTwo$ such that both were created by the same entity, both redeem the same contracts, and~$\transactionTwo$ pays a higher fee than~$\transactionOne$ or arrives before~$\transactionOne$.

Tokens are discrete, hence there is a finite number of fees~$\userA$ and~$\userB$ may offer, meaning the publication step is finite.

Then, a single miner is picked at random proportionally to her mining power and gets to create a block including a transaction of her choice, receiving its transaction fees.
She can also create a new transaction and include it in her block.

\paragraph{Subgames}
The dynamic and turn-altering nature of the game allows us to define \emph{subgames}, representing suffixes of~$\gameDefNameDHTLC{}$.
For any~$\blockCountSpecificVal \in \left[1,\timeout{}\right]$ we refer to the game starting just before round~$\blockCountSpecificVal$ as the~$\blockCountSpecificVal$'th subgame (Fig.~\ref{fig:generic_flow_chart}). 

Note that as miners create blocks and confirm transactions, the system state, including the state of \madhtlc{}, changes.
Specifically, if the \depositContract{} is already redeemed, future blocks do not allow inclusion of conflicting transactions that redeem the \depositContract{} as well.

Hence, when considering \madhtlc{} states we distinguish whether \depositContract{} is redeemable or irredeemable, which we denote by~$\trueConst$ and~$\falseConst$, respectively.
We also note that \collateralContract{} cannot be redeemed until the very last~$\timeout$'th subgame.

Consequently, each subgame~$\blockCountSpecificVal\in \left[1,\timeout{}\right]$ is defined by the number of remaining blocks to be created~$\blockCountSpecificVal$, and the \depositContract{} state~$\contractState \in \{ \trueConst, \falseConst \}$.
We denote such a subgame by~$\gameDefDHTLC{\blockCountSpecificVal}{\contractState}$.

We use~$\subgameIndistinguishable$ to denote sets of subgames, e.g.,~$\gameDefDHTLC{\subgameIndistinguishable}{\trueConst}$ denotes the set of subgames where the contract state~$\contractState$ is~$\trueConst$.

\begin{figure}[!t]
	\centering
	\resizebox{ 0.5 \textwidth }{!}{{\includegraphics[trim=0 0 0 0,clip,angle=0]{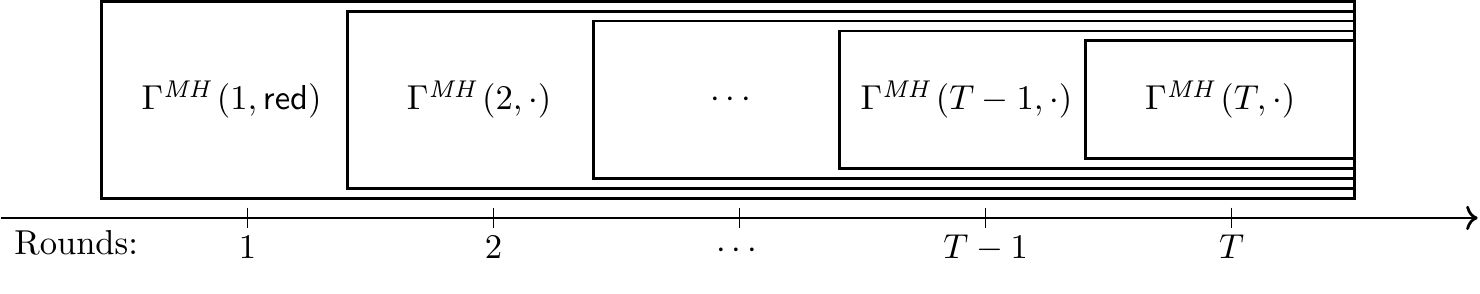}}}
	\caption{$\gameDefNameDHTLC{}$ subgames. }
	\label{fig:generic_flow_chart}
\end{figure}

We refer to~$\gameDefDHTLC{\timeout}{\subgameIndistinguishable}$ as the \emph{final} subgames, as once played, the full game~$\gameDefNameDHTLC{}$ is complete.
We refer to all other subgames as \emph{non-final}.

The game begins when there are~$\timeout$ blocks to be created,~$\userA$ and~$\userB$ did not publish any transactions, and the \depositContract{} is redeemable.
Thus, the initial, complete game is~$\gameDefDHTLC{1}{\trueConst}$.

Once the first round of a non-final subgame is complete, the system transitions to the \emph{subsequent} subgame. 

\paragraph{Actions}
\label{sec:model_generic_game_actions}

$\userA$ and~$\userB$'s actions are the publication of transactions in any~$\gameDefDHTLC{\subgameIndistinguishable}{\subgameIndistinguishable}$ subgame.

$\userA$ can only redeem \depositContract{} and only if she has~$\aliceSecret$ ($\propertySentPreimageA = 1$), hence has a single transaction of interest~$\aliceTransactionMadAliceBob$, offering fee of~$\aliceFeeMadAliceBob$ tokens.
Note~$\aliceTransactionMadAliceBob$ has to outbid unrelated transactions and thus has to offer a fee~$\aliceFeeMadAliceBob > \txFee$, however, cannot offer more tokens than the redeemed ones, so~$ \aliceFeeMadAliceBob < \depositTokens$.
This transaction utilizes the~$\mhRedeemPathOne$ redeem path of \depositContract{}, hence publishing it also publishes~$\aliceSecret$.

$\userB$ can redeem \depositContract{}, \collateralContract{} or both.
We thus consider three transactions of interest:~$\bobTransactionMadAliceBob$, redeeming~$\depositContract{}$ while offering fee~$\bobFeeMadAliceBob$;~$\bobTransactionMadDeposit$, redeeming \collateralContract{} while offering fee~$\bobFeeMadDeposit$; and~$\bobTransactionMadBoth$, redeeming both \depositContract{} and \collateralContract{} while offering fee~$\bobFeeMadBoth$.
To redeem \depositContract{}~$\userB$ uses the~$\mhRedeemPathTwo$ redeem path, hence publishing transactions~$\bobTransactionMadAliceBob$ or~$\bobTransactionMadBoth$ also publishes~$\bobSecret$.
Redeeming \collateralContract{} is by the~$\mhRedeemPathFour$ redeem path.
Similarly to~$\userA$'s fee considerations,~$\userB$'s transactions have to outbid unrelated transactions, and cannot offer more tokens than they redeem, so~$\txFee < \bobFeeMadAliceBob < \depositTokens$,~$\txFee < \bobFeeMadDeposit < \collateralTokens$ and~$\txFee < \bobFeeMadBoth < \depositTokens + \collateralTokens$.

A miner's action is the choice of a transaction to include if she is chosen to create a block.
First, she can include a transaction unrelated to \madhtlc{} in any~$\gameDefDHTLC{\subgameIndistinguishable}{\subgameIndistinguishable}$ subgame.

She can also include any of the following transactions, assuming they were previously published by~$\userA$ or~$\userB$, and as a function of the contract state:
$\aliceTransactionMadAliceBob$ if~\depositContract{} is redeemable, that is, in any~$\gameDefDHTLC{\subgameIndistinguishable}{\trueConst}$;
$\bobTransactionMadDeposit$ if the timeout has elapsed, that is, in any~$\gameDefDHTLC{\timeout}{\subgameIndistinguishable}$; and  
$\bobTransactionMadAliceBob$ or~$\bobTransactionMadBoth$ if the timeout has elapsed and~\depositContract{} is redeemable, that is, in~$\gameDefDHTLC{\timeout}{\trueConst}$.

Conditioned on knowing~$\aliceSecret$ and~$\bobSecret$ through published transactions, a miner can also create and include the following transactions, redeeming the contracts herself:

\begin{itemize}	
	\item Transaction~$\minerTransactionMadAliceBob$ redeeming \depositContract{}, using the~$\mhRedeemPathThree$ redeem path, and getting the~$\depositTokens$ tokens of \depositContract{} as reward.
	This action is only available if the miner knows both~$\aliceSecret$ and~$\bobSecret$, and if \depositContract{} is redeemable, that is, in any~$\gameDefDHTLC{\subgameIndistinguishable}{\trueConst}$ subgame where~$\aliceTransactionMadAliceBob$ and either of~$\bobTransactionMadAliceBob$ or~$\bobTransactionMadBoth$ were published.
	
	\item Transaction~$\minerTransactionMadDeposit$ redeeming \collateralContract{}, using the~$\mhRedeemPathFive$ redeem path, and getting the~$\collateralTokens$ tokens of \collateralContract{} as reward.
	This action is only available if the miner knows both~$\aliceSecret$ and~$\bobSecret$, and the timeout has elapsed, that is, in any~$\gameDefDHTLC{\timeout}{\trueConst}$ subgame where~$\aliceTransactionMadAliceBob$ and either of~$\bobTransactionMadAliceBob$ or~$\bobTransactionMadBoth$ were published.

	\item Transaction~$\minerTransactionMadBoth$ redeeming both \depositContract{} and \collateralContract{}, using the~$\mhRedeemPathThree$ and~$\mhRedeemPathFive$ redeem paths, and getting the~$\depositTokens + \collateralTokens$ tokens of \depositContract{} and \collateralContract{} as reward.
	This action is only available if the miner knows both~$\aliceSecret$ and~$\bobSecret$, the \depositContract{} is redeemable, and the timeout has elapsed, that is, in subgame~$\gameDefDHTLC{\timeout}{\trueConst}$ where~$\aliceTransactionMadAliceBob$ and either of~$\bobTransactionMadAliceBob$ or~$\bobTransactionMadBoth$ were published.

\end{itemize}

We disregard actions that are trivially dominated~\cite{watson2002strategy}, such as~$\userA$ and~$\userB$ sharing their secret keys or publishing the relevant preimages not via a transaction; a miner including a transaction of another entity that redeems either of the contracts using the two preimages instead of redeeming it herself; and a miner creating an empty block instead of including an unrelated transaction.

\paragraph{Strategy}
A \emph{strategy}~$\strategyOfEntity$ is a mapping from each subgame to a respective feasible action, stating that an entity takes that action in the subgame.
We call the strategy vector of all entities in a game a \emph{strategy profile}, denoted by~$\strategyProfile$.

\paragraph{Utility}
Recall an entity's utility is her expected accumulated token amount at game conclusion.
We define the utility of an entity in a subgame as the expected token amount she accumulates within the subgame until its conclusion.
We denote the utility of entity~$i$ when all entities follow~$\strategyProfile$ in subgame~$\gameDefDHTLC{\blockCountSpecificVal}{\contractState}$ by~$\utilityOfEntity{i}{\strategyAllMiners}{\gameDefDHTLC{\blockCountSpecificVal}{\contractState}}$.

			\subsubsection{Solution Concept}
			\label{sec:mad_htlc_game_sc}


Note that block-creation rates, entity utilities and their rationality are all common knowledge, and that when choosing an action an entity is aware of the current system state.
That means any subgame~$\gameDefDHTLC{\blockCountSpecificVal}{\contractState}$ is of \emph{perfect information}~\cite{shoham2008multiagent,osborne1994course}. 
We are thus interested in strategy profiles that are \emph{subgame perfect equilibria}~\cite{rosenthal1981games,fudenberg1991game,myerson1991game,selten1965spieltheoretische,van2002strategic,watson2002strategy,cerny2014playing,bernheim1984rationalizable}.

A strategy profile~$\strategyProfile$ is a subgame perfect equilibrium in~$\gameDefDHTLC{\blockCountSpecificVal}{\contractState}$ if, for any subgame, no entity can increase her utility by deviating to a different strategy, where it knows how the other players would react based on their perfect knowledge.
This implies that for each subgame, the actions stated by~$\strategyProfile$ are a Nash equilibrium.

We say that a prescribed strategy profile is \emph{incentive compatible}~\cite{roughgarden2010algorithmic} if it is a subgame perfect equilibrium, and the utility of each player is not lower than her utility in any other subgame perfect equilibrium.
So an entity cannot deviate to increase her utility, and there are no other more favorable equilibria.

Our analysis utilizes the common technique of \emph{backward induction}~\cite{zermelo1913anwendung,aumann1995backward,kaminski2017backward,bernheim1984rationalizable}, suitable for perfect-information finite games.
Intuitively, to determine her best action, a player analyzes the game outcome for each possible action, repeating the process recursively for each possible game suffix. 


		\subsection{\madhtlc{} Incentive Compatibility}
		\label{sec:mad_htlc_proof}

We now show the \madhtlc{} prescribed behavior~(Protocol~\ref{protocol:madhtlc_presentation}) is incentive compatible and implements \htlcSpec{}.

We first analyze~$\userA$'s and~$\userB$'s utilities when both follow the prescribed strategy, starting with the scenario where~$\userA$ knows the preimage~$\aliceSecret$ (i.e., when~$\propertySentPreimageA = 1$).

\begin{lemma}
	\label{lemma:mad_proof_prescribed_utilities_when_alice_knows}
	In~$\gameDefDHTLC{1}{\trueConst}$, if~$\userA$ knows~$\aliceSecret$ and~$\userA$ and~$\userB$ both follow the prescribed strategies, then miners' best-response strategy leads to~$\userA$ redeeming \depositContract{} for~$\depositTokens -\aliceFeeMadAliceBob$ tokens, and~$\userB$ redeeming \collateralContract{} for~$\collateralTokens - \bobFeeMadDeposit$ tokens.
\end{lemma}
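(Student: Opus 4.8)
The plan is to trace the transactions that actually reach the mempool under the prescribed strategies, observe that $\bobSecret$ is never published, and then use this to collapse the miners' option set before applying backward induction. First I would unfold the prescribed play in $\gameDefDHTLC{1}{\trueConst}$ for the case $\propertySentPreimageA = 1$. Since $\userA$ holds $\aliceSecret$, she publishes $\aliceTransactionMadAliceBob$, which redeems $\depositContract{}$ through $\mhRedeemPathOne$ and hence reveals $\aliceSecret$ (setting $\propertyInternalPreimages{a} \gets 1$) but nothing about $\bobSecret$. Because $\userA$ did publish her redeeming transaction, $\userB$'s prescribed response is to publish $\bobTransactionMadDeposit$, which redeems only $\collateralContract{}$ through $\mhRedeemPathFour$; that path requires merely the signature $\bobSig$ (via $\verifySig{\sig{}}{\bobPublicKey}$) and not the preimage, so it leaves $\propertyInternalPreimages{b} = 0$. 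The crucial observation is therefore that under the prescribed play $\bobSecret$ is never published.

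Given this, I would invoke Lemma~\ref{lemma:valid_transactions}: the only redeem paths through which a miner (party $\userO$) can create its own redeeming transaction are $\mhRedeemPathThree$ and $\mhRedeemPathFive$, and both require $\propertyInternalPreimages{a} \land \propertyInternalPreimages{b} = 1$. Since $\propertyInternalPreimages{b} = 0$ throughout, no miner can ever self-redeem either contract. Consequently the only $\madhtlc{}$-related transactions any selected miner can include are $\aliceTransactionMadAliceBob$ (fee $\aliceFeeMadAliceBob$, redeeming $\depositContract{}$) and $\bobTransactionMadDeposit$ (fee $\bobFeeMadDeposit$, redeeming $\collateralContract{}$), besides an unrelated base-fee transaction worth $\txFee$.

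I would then close by backward induction over the subgames. In the final subgame $\gameDefDHTLC{\timeout}{\subgameIndistinguishable}$, where the timeout has elapsed, $\collateralContract{}$ is redeemable and $\bobTransactionMadDeposit$ is the sole transaction redeeming it; since $\bobFeeMadDeposit > \txFee$ and self-redemption via $\mhRedeemPathFive$ is impossible, the selected miner strictly prefers to confirm it, so $\userB$ nets $\collateralTokens - \bobFeeMadDeposit$. In every non-final subgame $\gameDefDHTLC{\blockCountSpecificVal}{\trueConst}$ with $\blockCountSpecificVal < \timeout$, the transaction $\aliceTransactionMadAliceBob$ is already in the mempool and the selected miner's best $\depositContract{}$-related payoff is capped at $\aliceFeeMadAliceBob$ (again because $\minerTransactionMadBoth$ and the other self-redeeming transactions are unavailable without $\bobSecret$); confirming $\aliceTransactionMadAliceBob$ now yields $\aliceFeeMadAliceBob > \txFee$ with certainty and strictly dominates withholding it or including an unrelated transaction. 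Hence $\aliceTransactionMadAliceBob$ is confirmed in the first round, $\userA$ nets $\depositTokens - \aliceFeeMadAliceBob$, and the two confirmations are compatible because the transactions redeem different contracts.

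The main obstacle is the miner best-response argument, i.e., ruling out that some miner profitably withholds $\aliceTransactionMadAliceBob$ in order to grab both contracts later via $\minerTransactionMadBoth$. I expect the key observation that $\bobSecret$ never becomes public to do the heavy lifting here: it removes every redeem path ($\mhRedeemPathThree$, $\mhRedeemPathFive$) through which a miner could ever out-earn the simple fee $\aliceFeeMadAliceBob$, so the withholding deviation cannot be profitable and the induction goes through cleanly. A minor point to verify is that the argument is insensitive to which miner is drawn in each round, so the expected-utility statement reduces to this deterministic outcome.
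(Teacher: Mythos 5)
Your proposal is correct and follows essentially the same route as the paper's proof: both hinge on the observation that under prescribed play $\userB$ never publishes~$\bobTransactionMadAliceBob$ or~$\bobTransactionMadBoth$, so~$\bobSecret$ stays hidden and miners cannot self-redeem via~$\mhRedeemPathThree$ or~$\mhRedeemPathFive$, after which backward induction with fees exceeding~$\txFee$ yields inclusion of~$\aliceTransactionMadAliceBob$ in its published round and~$\bobTransactionMadDeposit$ in the final round. Your explicit treatment of the withholding deviation (a miner delaying~$\aliceTransactionMadAliceBob$ hoping to grab both contracts via~$\minerTransactionMadBoth$) spells out what the paper compresses into ``using backward induction,'' but it is the same argument.
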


\begin{proof}
	The prescribed strategy states that~$\userA$ publishes~$\aliceTransactionMadAliceBob$ during the first~$\timeout-1$ rounds, and that~$\userB$ publishes~$\bobTransactionMadDeposit$ in round~$\timeout$.
	
	Note that~$\userB$ does not publish~$\bobTransactionMadAliceBob$ and~$\bobTransactionMadBoth$, hence miners do not know~$\bobSecret$.
	The transactions~$\aliceTransactionMadAliceBob$ and~$\bobTransactionMadDeposit$ offer~$\aliceFeeMadAliceBob$ and~$\bobFeeMadDeposit$ fees, respectively, both greater than the base fee~$\txFee$.

	The induced subgames therefore enable miners to include~$\aliceTransactionMadAliceBob$ in one of the first~$\timeout -1$ blocks, and including~$\bobTransactionMadDeposit$ in the last one.
	Using backward induction shows the subgame perfect equilibrium is to include~$\aliceTransactionMadAliceBob$ in its published round, and~$\bobTransactionMadDeposit$ in the last.
	
	So both~$\aliceTransactionMadAliceBob$ and~$\bobTransactionMadDeposit$ are included in blocks, and~$\userA$ and~$\userB$ get~$\depositTokens -\aliceFeeMadAliceBob$ and~$\collateralTokens - \bobFeeMadDeposit$ tokens, respectively.
\end{proof}

We now consider~$\userA$ that does not know~$\aliceSecret$ (i.e., when~$\propertySentPreimageA = 0$).
\begin{lemma}
	\label{lemma:mad_proof_prescribed_utilities_when_alice_doesnt_know}
	In~$\gameDefDHTLC{1}{\trueConst}$, if~$\userA$ does not know~$\aliceSecret$ and~$\userA$ and~$\userB$ both follow the prescribed strategies, then miners' best-response strategy leads to~$\userB$ redeeming both \depositContract{} and \collateralContract{} for~$\depositTokens + \collateralTokens - \bobFeeMadBoth$ tokens, and~$\userA$ gets none.
\end{lemma}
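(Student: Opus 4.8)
The plan is to mirror the backward-induction argument of Lemma~\ref{lemma:mad_proof_prescribed_utilities_when_alice_knows}, but in the even simpler setting where $\userA$ has no redeeming transaction at all. First I would establish that $\aliceSecret$ is never published during the game. Since $\propertySentPreimageA = 0$, the prescribed strategy has $\userA$ publish nothing, and it has $\userB$ publish only $\bobTransactionMadBoth$, which redeems $\depositContract{}$ and $\collateralContract{}$ through the $\mhRedeemPathTwo$ and $\mhRedeemPathFour$ paths. Inspecting these paths (Predicates~\ref{alg:modified_redeem_paths} and~\ref{alg:mad_deposit_redeem_paths}) shows they reveal $\bobSecret$ but never $\aliceSecret$. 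Invoking Lemma~\ref{lemma:valid_transactions}, this keeps $\propertyInternalPreimages{a} = 0$ throughout, so $\userA$ can never use $\mhRedeemPathOne$ and no miner can ever use $\mhRedeemPathThree$ or $\mhRedeemPathFive$ (all of which require a preimage of $\contractSecretAlice$). The only redeeming transaction that can ever be confirmed is therefore $\bobTransactionMadBoth$, which in particular forces $\userA$ to get none.

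Next I would pin down when $\bobTransactionMadBoth$ becomes confirmable. Both of its redeem paths $\mhRedeemPathTwo$ and $\mhRedeemPathFour$ carry the constraint $\verifyTimeout{\timeout}$, so it can only be included in block $\blockIndex{j+\timeout}$, i.e., in the final subgame $\gameDefDHTLC{\timeout}{\trueConst}$; the prescribed strategy has $\userB$ publish it in round $\timeout$. Consequently, in every non-final subgame $\gameDefDHTLC{\blockCountSpecificVal}{\trueConst}$ with $\blockCountSpecificVal < \timeout$ there is no confirmable \madhtlc{}-related transaction, since $\userA$ has none and $\userB$'s transaction is timeout-locked.

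I would then run backward induction over the subgames $\gameDefDHTLC{\blockCountSpecificVal}{\trueConst}$. In each non-final subgame the chosen miner's only options are an unrelated transaction (fee $\txFee$) or a dominated empty block, so she includes an unrelated transaction; crucially this does not redeem $\depositContract{}$, so the state stays $\trueConst$ and play transitions to $\gameDefDHTLC{\blockCountSpecificVal+1}{\trueConst}$. Hence the game necessarily reaches the final subgame with $\depositContract{}$ still redeemable. In $\gameDefDHTLC{\timeout}{\trueConst}$ the chosen miner may include $\bobTransactionMadBoth$ for fee $\bobFeeMadBoth$ or an unrelated transaction for $\txFee$; since $\bobFeeMadBoth > \txFee$ her unique best response is to confirm $\bobTransactionMadBoth$. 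This yields $\userB$ the tokens of both contracts minus his fee, namely $\depositTokens + \collateralTokens - \bobFeeMadBoth$, while $\userA$ receives nothing, as claimed.

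The only step that needs care is the invariant that $\depositContract{}$ stays redeemable through all $\timeout - 1$ non-final rounds. This is immediate once the first paragraph is in place: the sole party able to redeem $\depositContract{}$ is $\userB$ (via $\mhRedeemPathTwo$, as $\mhRedeemPathOne$ and $\mhRedeemPathThree$ both require $\aliceSecret$), and his path is timeout-locked, so no confirmable deposit-redeeming transaction exists before round $\timeout$ regardless of which miners are drawn. Thus no miner deviation in the non-final rounds can alter the state reaching the final subgame, and the outcome is forced.
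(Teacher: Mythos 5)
Your proposal is correct and follows essentially the same argument as the paper's proof: since~$\userA$ publishes nothing and~$\userB$'s transaction reveals only~$\bobSecret$, the preimage~$\aliceSecret$ is never published, so miners cannot self-redeem via~$\mhRedeemPathThree$ or~$\mhRedeemPathFive$, and their best response is to include~$\bobTransactionMadBoth$ (fee~$\bobFeeMadBoth > \txFee$) in the last round, yielding the stated utilities. The paper compresses the backward induction and the redeemability invariant that you spell out explicitly, but the underlying reasoning is the same.
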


\begin{proof}
	As~$\userA$ does not know~$\aliceSecret$ she does not publish any transaction, hence redeems no contract and receives no tokens.
	
	By the prescribed strategy~$\userB$ publishes~$\bobTransactionMadBoth$, offering fee~$\bobFeeMadBoth > \txFee$ and revealing~$\bobSecret$.
	However,~$\aliceSecret$ is not published, so miners cannot redeem \depositContract{} and \collateralContract{} themselves.
	Therefore, miners maximize their utility by including~$\bobTransactionMadBoth$ in the last round.
	
	That means~$\bobTransactionMadBoth$ is included in a block, and~$\userA$ and~$\userB$ get~$0$ and~$\depositTokens + \collateralTokens - \bobFeeMadBoth$ tokens, respectively.
\end{proof}

We now present three lemmas, considering potential deviations from the prescribed strategy, and showing that any such deviation is strictly dominated.
We provide the gist of the proofs, with the details deferred to Appendix~\ref{appendix:additional_proofs}.

We first show that if~$\userA$ and~$\userB$ contend then the miners do not take their transactions in the last round. 
\begin{lemma}
	\label{lemma:mad_proof_last_round_mutual_destruction}
	In the last round of the game, i.e. subgame~$\gameDefDHTLC{\timeout}{\subgameIndistinguishable}$, if~$\aliceTransactionMadAliceBob$ and either~$\bobTransactionMadAliceBob$ or~$\bobTransactionMadBoth$ are published then miners' best-response strategy is not to include any of~$\userA$'s or~$\userB$'s transactions in this round.
\end{lemma}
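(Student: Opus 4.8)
The plan is to treat subgame $\gameDefDHTLC{\timeout}{\subgameIndistinguishable}$ as a one-shot optimization for whichever miner is selected, since this is the final round: the game ends after this block, so there is no continuation value and the chosen miner simply maximizes her immediate reward. The central first step is to observe that the hypothesis forces \emph{both} preimages into the mempool. Publishing $\aliceTransactionMadAliceBob$ takes the $\mhRedeemPathOne$ path and hence reveals $\aliceSecret$, while publishing either $\bobTransactionMadAliceBob$ or $\bobTransactionMadBoth$ takes the $\mhRedeemPathTwo$ path and hence reveals $\bobSecret$ (this is exactly the publication behaviour recorded in the action description, consistent with Lemma~\ref{lemma:valid_transactions}). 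Since preimages published in a transaction become known to all entities, the selected miner knows both $\aliceSecret$ and $\bobSecret$; and because round $\timeout$ is the first round in which the timeout has elapsed, the $\mhRedeemPathThree$ and $\mhRedeemPathFive$ paths are now available to her.

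Next I would compare the miner's payoff from redeeming the contracts herself against the fee she would collect by including a transaction of $\userA$ or $\userB$, splitting on the deposit state $\contractState$. In the main case $\contractState = \trueConst$, the miner can construct $\minerTransactionMadBoth$ and collect $\depositTokens + \collateralTokens$. Because every admissible fee is strictly bounded by the tokens it redeems, namely $\aliceFeeMadAliceBob < \depositTokens$, $\bobFeeMadAliceBob < \depositTokens$, and $\bobFeeMadBoth < \depositTokens + \collateralTokens$, the value $\depositTokens + \collateralTokens$ strictly exceeds any fee obtainable by including an $\userA$- or $\userB$-transaction (and trivially the base fee $\txFee$ of an unrelated one). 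Redeeming the contracts herself is therefore the \emph{strictly} dominant action, so her unique best response includes none of their transactions. In the residual case $\contractState = \falseConst$ the deposit has already been redeemed, so $\aliceTransactionMadAliceBob$, $\bobTransactionMadAliceBob$, and $\bobTransactionMadBoth$ all conflict with a confirmed transaction and are invalid; there is thus no deposit-spending transaction of $\userA$ or $\userB$ to include, and any remaining collateral is likewise better seized by the miner via $\minerTransactionMadDeposit$ than collected as the strictly smaller fee $\bobFeeMadDeposit < \collateralTokens$. Either way the best response includes none of $\userA$'s or $\userB$'s transactions.

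The hard part is the first step: precisely pinning down which data become public, since this is where the mutual-assured-destruction mechanism does its work and determines what the miner is able to construct. Once both preimages are established as public, the payoff comparison is routine, and the key point I would emphasize is that the fee bounds are \emph{strict}, which upgrades ``weakly prefers'' to ``strictly prefers'' and yields a unique best response rather than an indifference. I would also note explicitly that this conclusion holds for whichever miner is drawn, since all miners observe the same published preimages, so it is genuinely a property of the miners' best-response strategy and not an artifact of a particular chosen miner; and that no intertemporal trade-off enters, since $\gameDefDHTLC{\timeout}{\subgameIndistinguishable}$ is final.
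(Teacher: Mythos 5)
Your proposal is correct and follows essentially the same route as the paper's proof: both argue that the hypothesis forces $\aliceSecret$ and $\bobSecret$ into the mempool, so the selected miner can redeem the contracts herself, and then split on the state of \depositContract{}, using the strict fee bounds ($\bobFeeMadDeposit < \collateralTokens$ in the irredeemable case, all fees below $\depositTokens + \collateralTokens$ in the redeemable case) to conclude strict dominance of $\minerTransactionMadDeposit$ or $\minerTransactionMadBoth$ over including any of $\userA$'s or $\userB$'s transactions. The only nitpick is your remark that $\mhRedeemPathThree$ becomes available because the timeout has elapsed --- that path carries no timeout constraint and is available in every round (only $\mhRedeemPathFive$ is time-locked) --- but this is immaterial in the final round and does not affect the argument.
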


This holds because in the described scenario any miner can simply redeem all the tokens herself.
Then we show~$\userA$'s cannot deviate to increase her utility.
\begin{lemma}
	\label{lemma:mad_proof_alice_plays_prescribed}
	In~$\gameDefDHTLC{1}{\trueConst}$,~$\userA$ cannot increase her utility by deviating from the prescribed strategy.
\end{lemma}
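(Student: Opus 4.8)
The plan is to fix $\userB$'s strategy at the prescribed one and the miners' at their best response, and then enumerate $\userA$'s possible deviations, showing each yields utility no larger than her prescribed payoff. I would first split on whether $\userA$ knows $\aliceSecret$. If she does not (i.e.\ $\propertySentPreimageA = 0$ and $\aliceSecret$ is never published), then by Lemma~\ref{lemma:valid_transactions} her only single-handed path $\mhRedeemPathOne$ is infeasible, while $\mhRedeemPathThree,\mhRedeemPathFive$ require $\bobSecret$, which $\userB$'s prescribed play never reveals in this scenario: he redeems via $\bobTransactionMadBoth$, exposing $\bobSecret$ but not $\aliceSecret$. Hence $\userA$ can create no valid redeeming transaction, her utility is $0$, and by Lemma~\ref{lemma:mad_proof_prescribed_utilities_when_alice_doesnt_know} this already equals her prescribed payoff, so no deviation can help.

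The substantive case is $\userA$ knowing $\aliceSecret$, where by Lemma~\ref{lemma:mad_proof_prescribed_utilities_when_alice_knows} her prescribed payoff is $\depositTokens - \aliceFeeMadAliceBob$. Here I would first argue that her only non-dominated redeeming action is to publish $\aliceTransactionMadAliceBob$ via $\mhRedeemPathOne$: the joint paths $\mhRedeemPathThree,\mhRedeemPathFive$ need $\bobSecret$, which $\userB$ withholds as long as $\userA$ publishes $\aliceTransactionMadAliceBob$ (he then uses $\bobTransactionMadDeposit$ on $\mhRedeemPathFour$, which does not reveal $\bobSecret$). A deviation is therefore a choice of the fee offered on $\aliceTransactionMadAliceBob$, the round in which it is published, or refraining entirely. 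For the fee, since there is always an unrelated transaction at base fee $\txFee$ and miners maximize fees, any offer not strictly exceeding $\txFee$ risks non-inclusion, while any offer above the prescribed minimal increment strictly lowers $\depositTokens - \aliceFeeMadAliceBob$; hence the prescribed fee is optimal.

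For timing, I would use backward induction across the subgames. If $\userA$ publishes $\aliceTransactionMadAliceBob$ at the prescribed fee in any of the first $\timeout-1$ rounds, then throughout those rounds $\bobSecret$ is absent from the mempool, so no miner can use $\mhRedeemPathThree$; the transaction is then merely a fee-$\aliceFeeMadAliceBob > \txFee$ candidate, and the miners' best response (exactly as in Lemma~\ref{lemma:mad_proof_prescribed_utilities_when_alice_knows}) is to confirm it in the first block after publication, giving the prescribed payoff. If instead $\userA$ withholds $\aliceTransactionMadAliceBob$ through round $\timeout-1$, then $\userB$'s prescribed strategy publishes $\bobTransactionMadBoth$ in round $\timeout$, revealing $\bobSecret$; now either $\userA$ still refrains (payoff $0$) or she publishes $\aliceTransactionMadAliceBob$ in the final subgame $\gameDefDHTLC{\timeout}{\trueConst}$, in which case both preimages are public and by Lemma~\ref{lemma:mad_proof_last_round_mutual_destruction} the miners seize all tokens, again leaving $\userA$ with $0$. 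Since $\aliceFeeMadAliceBob < \depositTokens$, both outcomes are strictly worse, so delaying is dominated.

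The main obstacle is precisely this timing deviation, because it is where $\userA$'s choice genuinely couples to $\userB$'s prescribed reaction and the miners' non-myopic best response. The care needed is to justify, via backward induction, two facts at once: that while $\userA$ publishes early no miner ever gains access to $\bobSecret$, so a miner cannot profit by withholding $\aliceTransactionMadAliceBob$ in the hope of later using $\mhRedeemPathThree$; and that any late publication necessarily collides with $\userB$'s $\bobTransactionMadBoth$, triggering the mutual-destruction outcome of Lemma~\ref{lemma:mad_proof_last_round_mutual_destruction}. The fee-optimality and no-preimage cases are routine by comparison.
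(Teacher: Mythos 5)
Your proof is correct and follows essentially the same route as the paper's: case-split on whether~$\userA$ knows~$\aliceSecret$, dismiss non-publication as strictly dominated (using Lemma~\ref{lemma:mad_proof_prescribed_utilities_when_alice_knows} for the prescribed payoff), and kill the delayed-publication deviation via the mutual-destruction outcome of Lemma~\ref{lemma:mad_proof_last_round_mutual_destruction}; your added fee-optimality and miner-withholding observations are details the paper delegates to the game definition and to the backward induction inside Lemma~\ref{lemma:mad_proof_prescribed_utilities_when_alice_knows}. One cosmetic slip: in the no-preimage case the joint paths~$\mhRedeemPathThree$ and~$\mhRedeemPathFive$ are unavailable to~$\userA$ because she lacks~$\aliceSecret$, which is never published~--- not because~$\bobSecret$ is withheld, since, as your own final clause correctly notes,~$\bobTransactionMadBoth$ does reveal~$\bobSecret$.
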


This holds as publishing at the last round or not publishing at all results with~$\userA$ not getting any tokens.
Similarly, we claim~$\userB$ does not gain from deviating.
\begin{lemma}
	\label{lemma:mad_proof_bob_plays_prescribed}
	In~$\gameDefDHTLC{1}{\trueConst}$,~$\userB$ cannot increase his utility by deviating from the prescribed strategy.
\end{lemma}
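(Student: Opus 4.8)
The plan is to fix $\userA$ at her prescribed behavior (justified afterwards by Lemma~\ref{lemma:mad_proof_alice_plays_prescribed}) and, by backward induction over the subgames of $\gameDefDHTLC{1}{\trueConst}$, show that every feasible deviation of $\userB$ yields utility no larger than the prescribed one. First I would split on whether $\userA$ knows $\aliceSecret$, since the prescribed play and payoff differ: when $\propertySentPreimageA = 1$, Lemma~\ref{lemma:mad_proof_prescribed_utilities_when_alice_knows} pins $\userB$'s prescribed utility at $\collateralTokens - \bobFeeMadDeposit$, and when $\propertySentPreimageA = 0$, Lemma~\ref{lemma:mad_proof_prescribed_utilities_when_alice_doesnt_know} pins it at $\depositTokens + \collateralTokens - \bobFeeMadBoth$. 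By Lemma~\ref{lemma:valid_transactions} the redeeming transactions available to $\userB$ are exactly $\bobTransactionMadAliceBob$, $\bobTransactionMadBoth$ and $\bobTransactionMadDeposit$ (together with the choice of valid fee, round, and whether to share $\aliceSecret$), so the deviation space is finite and can be enumerated.

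For the case $\propertySentPreimageA = 1$, $\userA$ publishes $\aliceTransactionMadAliceBob$ during the first $\timeout-1$ rounds, so $\aliceSecret$ sits in the mempool from round one. The decisive observation is that any deviation in which $\userB$ publishes $\bobTransactionMadAliceBob$ or $\bobTransactionMadBoth$ also reveals $\bobSecret$ (Lemma~\ref{lemma:valid_transactions}), making both preimages public. I would then argue, by backward induction, that with both preimages public a miner strictly prefers to redeem $\depositContract$ herself through $\mhRedeemPathThree$ (collecting $\depositTokens > \aliceFeeMadAliceBob$), and by Lemma~\ref{lemma:mad_proof_last_round_mutual_destruction} the collateral is likewise seized through $\mhRedeemPathFive$ in the last round; such a deviation therefore drives $\userB$'s utility to $0 < \collateralTokens - \bobFeeMadDeposit$. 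The remaining deviations --- publishing $\bobTransactionMadDeposit$ with a different still-valid fee, in a different round, or not at all --- leave the timeout-constrained confirmation in round $\timeout$ unchanged, shrink $\userB$'s net collateral proceeds, or forfeit the collateral, so none exceeds the prescribed payoff.

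For the case $\propertySentPreimageA = 0$, $\userA$ never publishes and $\aliceSecret$ stays secret, so the mutual-destruction paths $\mhRedeemPathThree$ and $\mhRedeemPathFive$ are never enabled for miners. The key structural constraint is that all of $\userB$'s redeeming transactions are timeout-constrained and can be confirmed only in the single last block $\blockIndex{j+\timeout}$, while at most one transaction fits per block; hence $\userB$ can have exactly one of them confirmed. Comparing single-transaction outcomes, $\bobTransactionMadBoth$ yields $\depositTokens + \collateralTokens - \bobFeeMadBoth$, while $\bobTransactionMadAliceBob$ yields only $\depositTokens - \bobFeeMadAliceBob$ and $\bobTransactionMadDeposit$ only $\collateralTokens - \bobFeeMadDeposit$; at the minimal valid fees each alternative forfeits an entire contract whose value dominates any fee saving, so the bundled transaction is optimal. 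Publishing two separate transactions cannot help, since only one clears the last block, and not publishing yields $0$. I would also note that $\userB$ sharing $\aliceSecret$ here merely transitions to the $\propertySentPreimageA = 1$ outcome, giving him only $\collateralTokens - \bobFeeMadDeposit$, which is strictly worse.

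I expect the main obstacle to be the backward-induction bookkeeping in the first case: one must track the contract state $\contractState \in \{\trueConst,\falseConst\}$ across rounds as a function of \emph{when} $\userB$ reveals $\bobSecret$, because once both preimages are public while $\depositContract$ is still redeemable, the miner seizure via $\mhRedeemPathThree$ may occur in \emph{any} round rather than only the last, and I must verify that every induced subgame reproduces the seizure conclusion of Lemma~\ref{lemma:mad_proof_last_round_mutual_destruction}. A secondary subtlety, routine but necessary to rule out off-equilibrium miner responses, is confirming that no interleaving of publications or variation within the finite fee ladder helps $\userB$. Closing these branches and combining the two cases shows no deviation strictly increases $\userB$'s utility, completing the lemma.
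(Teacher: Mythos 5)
Your proposal is correct and follows essentially the same route as the paper's proof: fix $\userA$'s prescribed play (Lemma~\ref{lemma:mad_proof_alice_plays_prescribed}), split on whether $\userB$ shared~$\aliceSecret$, invoke Lemma~\ref{lemma:mad_proof_last_round_mutual_destruction} to show that publishing~$\bobTransactionMadAliceBob$ or~$\bobTransactionMadBoth$ alongside~$\aliceTransactionMadAliceBob$ forfeits everything to the miners, and compare the remaining options against the prescribed payoffs of Lemmas~\ref{lemma:mad_proof_prescribed_utilities_when_alice_knows} and~\ref{lemma:mad_proof_prescribed_utilities_when_alice_doesnt_know}. The only cosmetic difference is that in the case where~$\userA$ does not know~$\aliceSecret$ the paper shortcuts your last-block enumeration by observing that the prescribed strategy already attains~$\userB$'s maximal possible payoff~$\depositTokens + \collateralTokens - \bobFeeMadBoth$, so no deviation can improve on it.
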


Intuitively, if~$\userB$ publishes when~$\userA$ also does then~$\userB$ loses all the tokens, whilst refraining from doing so earns him the collateral.

Following directly from Lemma~\ref{lemma:mad_proof_alice_plays_prescribed} and Lemma~\ref{lemma:mad_proof_bob_plays_prescribed}), we obtain:

\begin{corollary}
	\label{corollary:mad_proof_incentive_compatible}
	The prescribed strategy of \madhtlc{} is a unique subgame perfect equilibrium, and as such, incentive compatible.
\end{corollary}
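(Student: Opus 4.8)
The plan is to establish Corollary~\ref{corollary:mad_proof_incentive_compatible} by assembling the three deviation lemmas into the two ingredients that the definition of incentive compatibility (\S\ref{sec:mad_htlc_game_sc}) demands: that the prescribed strategy profile is a subgame perfect equilibrium, and that no other subgame perfect equilibrium gives any player a strictly higher utility. First I would invoke Lemma~\ref{lemma:mad_proof_alice_plays_prescribed} and Lemma~\ref{lemma:mad_proof_bob_plays_prescribed} to conclude that neither~$\userA$ nor~$\userB$ can profitably deviate from the prescribed transactions, and I would pair this with the observation (already used in Lemma~\ref{lemma:mad_proof_prescribed_utilities_when_alice_knows} and Lemma~\ref{lemma:mad_proof_prescribed_utilities_when_alice_doesnt_know}) that the miners' transaction-inclusion choices are best responses under the prescribed play. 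Since every player is best-responding in every subgame reachable under the prescribed profile, the profile is a Nash equilibrium in each subgame, which is exactly subgame perfection for this perfect-information game.

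The second ingredient is \textbf{uniqueness}, and I would argue it by backward induction down the subgame tree, which is the technique flagged in \S\ref{sec:mad_htlc_game_sc}. Starting from the final subgames~$\gameDefDHTLC{\timeout}{\subgameIndistinguishable}$, Lemma~\ref{lemma:mad_proof_last_round_mutual_destruction} pins down the miners' unique best response whenever~$\userA$ and~$\userB$ have both published contending transactions (the miner seizes all tokens via~$\mhRedeemPathThree$/$\mhRedeemPathFive$), and the non-contending cases give the miners a unique maximizing choice as well since the fees~$\aliceFeeMadAliceBob,\bobFeeMadDeposit,\bobFeeMadBoth$ all strictly exceed the base fee~$\txFee$. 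Propagating these forced responses backward, Lemma~\ref{lemma:mad_proof_alice_plays_prescribed} and Lemma~\ref{lemma:mad_proof_bob_plays_prescribed} show that any equilibrium in~$\gameDefDHTLC{1}{\trueConst}$ must prescribe exactly~$\userA$ publishing~$\aliceTransactionMadAliceBob$ in the first~$\timeout-1$ rounds and~$\userB$ withholding contention and redeeming only~\collateralContract{} via~$\bobTransactionMadDeposit$; any alternative yields strictly less for the deviator, so the equilibrium strategy profile is unique up to payoff-equivalent relabelings of which early round carries~$\aliceTransactionMadAliceBob$. Because the outcome is unique, the utility-maximality clause of incentive compatibility holds vacuously — there are no competing equilibria with different payoffs.

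The \textbf{main obstacle} I anticipate is not the algebra but the bookkeeping of what ``unique'' means: the prescribed profile leaves genuine freedom in which of the first~$\timeout-1$ blocks actually includes~$\aliceTransactionMadAliceBob$ (and in the exact fees, subject to the strict inequalities), so I must phrase uniqueness as uniqueness of the \emph{equilibrium outcome and utility vector} rather than of the literal strategy mapping, and then confirm that all such payoff-equivalent profiles deliver the same~$\depositTokens-\aliceFeeMadAliceBob$ to~$\userA$ and~$\collateralTokens-\bobFeeMadDeposit$ to~$\userB$ as computed in Lemma~\ref{lemma:mad_proof_prescribed_utilities_when_alice_knows}. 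A secondary subtlety is ensuring the backward induction genuinely covers the off-path subgames~$\gameDefDHTLC{\subgameIndistinguishable}{\falseConst}$ reached after an early redemption, but these are terminal for the~\depositContract{} contention and are handled directly by the miners' fee-maximization together with the fact that~\collateralContract{} only becomes redeemable in round~$\timeout$; I would note this explicitly so that subgame perfection is verified at every node and not merely along the equilibrium path.
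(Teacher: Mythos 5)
Your proposal is correct and follows essentially the same route as the paper, which derives the corollary directly from Lemma~\ref{lemma:mad_proof_alice_plays_prescribed} and Lemma~\ref{lemma:mad_proof_bob_plays_prescribed} (with the miners' best responses already pinned down by Lemma~\ref{lemma:mad_proof_prescribed_utilities_when_alice_knows}, Lemma~\ref{lemma:mad_proof_prescribed_utilities_when_alice_doesnt_know}, and Lemma~\ref{lemma:mad_proof_last_round_mutual_destruction}); you merely make explicit the backward-induction assembly that the paper leaves implicit. Your added care in reading \emph{uniqueness} as uniqueness of the equilibrium outcome and utility vector (rather than of the literal strategy mapping, given the freedom in which early round carries~$\aliceTransactionMadAliceBob$) is a precision the paper's terse statement glosses over, but it does not change the argument.
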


We are now ready to prove our main theorem:

\begin{theorem}
	\label{theorem:mad_satisfies_tlc}
	\madhtlc{} satisfies \htlcSpec{} with rational PPT participants.
\end{theorem}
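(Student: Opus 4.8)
The plan is to treat this theorem as the synthesis step that glues together the two halves of the analysis: the cryptographic (UC) bound on the achievable action space from Lemma~\ref{lemma:valid_transactions}, and the game-theoretic characterization of rational play from Corollary~\ref{corollary:mad_proof_incentive_compatible}. Concretely, I would first recall that Lemma~\ref{lemma:valid_transactions}, established through the UC simulation argument relating~$\protocolForUC$ and~$\idealFunctionalityMadhtlc$, already pins down for any PPT~$\userA$,~$\userB$, and miners~$\userO$ exactly which valid redeeming transactions can be produced and under which conditions on~$\propertyInternalPreimages{a}$,~$\propertyInternalPreimages{b}$, and~$\propertySentPreimageA$. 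This restricts the action space to precisely the one over which the game~$\gameDefDHTLC{1}{\trueConst}$ and its subgames are defined. Then, since Corollary~\ref{corollary:mad_proof_incentive_compatible} shows the prescribed strategy is the \emph{unique} subgame perfect equilibrium, rational non-myopic players must follow it, and I can read off the resulting outcome from the utility lemmas.

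Second, I would carry out the case split dictated by the assumption (stated in~\S\ref{sec:model:spec}) that~$\userA$ either knows~$\aliceSecret$ when~$\initTxName$ is confirmed or never does. In the case~$\propertySentPreimageA = 1$, Lemma~\ref{lemma:mad_proof_prescribed_utilities_when_alice_knows} gives that~$\userA$ redeems~$\depositContract{}$ for~$\depositTokens - \aliceFeeMadAliceBob$ tokens while~$\userB$ recovers the collateral~$\collateralTokens - \bobFeeMadDeposit$; thus~$\userA$ obtains the deposit, exactly as~\htlcSpec{} demands of its preimage-before-timeout branch. In the case~$\propertySentPreimageA = 0$, Lemma~\ref{lemma:mad_proof_prescribed_utilities_when_alice_doesnt_know} gives that~$\userB$ redeems both contracts for~$\depositTokens + \collateralTokens - \bobFeeMadBoth$ while~$\userA$ gets none; hence~$\userB$ obtains the deposit, matching the ``otherwise'' branch. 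As these two cases are exhaustive, the deposit always lands with the party~\htlcSpec{} prescribes.

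Third, I would verify the efficiency clause of~\htlcSpec{}: the redeeming fee must be negligibly larger than the base fee~$\txFee$ and independent of~$\depositTokens$. Here I would invoke the fee-range constraints from~\S\ref{sec:mad_htlc_game_details}, noting that the winning party need only outbid the unrelated mempool transaction paying~$\txFee$, so by discreteness of tokens the minimal sufficient fee (either~$\aliceFeeMadAliceBob$ or~$\bobFeeMadBoth$) is~$\txFee$ plus one indivisible unit, independent of~$\depositTokens$. I would also stress that in the honest case the collateral~$\collateralTokens$ is fully returned to~$\userB$ up to its own negligible fee, so~$\collateralContract{}$ imposes no net cost and serves purely as the deterrent whose effect was analyzed in Lemmas~\ref{lemma:mad_proof_last_round_mutual_destruction}--\ref{lemma:mad_proof_bob_plays_prescribed}.

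I expect the main obstacle to be the clean reconciliation of the \emph{computational} adversary in the UC layer with the \emph{rational} player in the game layer. Lemma~\ref{lemma:valid_transactions} constrains the action space only against PPT adversaries and only up to negligible probability, whereas the backward-induction argument behind Corollary~\ref{corollary:mad_proof_incentive_compatible} reasons over a fixed, idealized action set. The crux is therefore to argue that a PPT rational player cannot profit by stepping outside this set---for instance by forging a signature or inverting~$\hashFunction$ to manufacture~$\aliceSecret$ or~$\bobSecret$: any such deviation succeeds with only negligible probability under EU-CMA and preimage resistance, so its contribution to expected utility is negligible and cannot overturn the strict inequalities that make the prescribed profile the unique equilibrium. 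Formalizing this ``negligible utility gain implies no equilibrium shift for PPT players'' step, leaning on the indistinguishability already proven for~$\protocolForUC$ versus~$\idealFunctionalityMadhtlc$, is where the real care is needed; the remaining work is then the bookkeeping that maps the two game outcomes onto the two branches of~\htlcSpec{}.
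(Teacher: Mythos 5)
Your proposal is correct and takes essentially the same route as the paper's own proof: Lemma~\ref{lemma:valid_transactions} fixes the action space of PPT participants, Corollary~\ref{corollary:mad_proof_incentive_compatible} forces rational players onto the prescribed strategy, and Lemmas~\ref{lemma:mad_proof_prescribed_utilities_when_alice_knows} and~\ref{lemma:mad_proof_prescribed_utilities_when_alice_doesnt_know} map the two cases onto the two branches of \htlcSpec{}, with the fee clause checked exactly as the paper does. Your closing discussion of why negligible-probability deviations (forging signatures, inverting~$\hashFunction$) cannot overturn the strict equilibrium inequalities is a careful elaboration of a step the paper's terse proof leaves implicit rather than a different approach.
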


\begin{proof}

	Lemma~\ref{lemma:valid_transactions} shows the possible redeeming transactions for PPT participants, disregarding invalidity due to timeouts and transaction conflicts.
	Consequently, the game description considering the timeouts and conflicts~(\S\ref{sec:mad_htlc_game}) captures the possible redeeming transactions of PPT participants. 
	
	The game analysis~(Corollary~\ref{corollary:mad_proof_incentive_compatible}) shows the prescribed strategy~(Protocol~\ref{protocol:madhtlc_presentation}) is incentive compatible, and
	Lemma~\ref{lemma:mad_proof_prescribed_utilities_when_alice_knows} and 
	Lemma~\ref{lemma:mad_proof_prescribed_utilities_when_alice_doesnt_know} show the prescribed strategy matches \htlcSpec{}.
	Note that matching \htlcSpec{}, Protocol~\ref{protocol:madhtlc_presentation} states the redeeming transaction fee should be negligibly larger than~$\txFee$, and is independent of~$\depositTokens$.
\end{proof}

\paragraph*{Myopic Miners}
\madhtlc{}'s design deters~$\userB$ from bribe attempts as he knows rational non-myopic miners will seize his funds if he acts dishonestly.

However, even in the presence of unsophisticated, myopic miners, \madhtlc{} still satisfies \htlcSpec{}.
The common transaction selection logic~\cite{bitcoincore2020website,geth2020website,parity2020website,ethstackexchange2020txordering} as of today has miners myopically optimize for the next block. 
Since~$\userB$'s transaction can only be confirmed in the last round, these miners will simply include~$\userA$'s transaction, achieving the desired outcome. 

Only miners that are sophisticated enough to be non-myopic but not sophisticated enough to take advantage of the~$\mhRedeemPathThree$ path would cooperate with the attack. 
But even in the presence of such miners, it is sufficient for one miner (or user) to take advantage of the~$\mhRedeemPathThree$ path during the~$\timeout$ rounds in order to thwart the attack.

	\section{\madhtlc{} Implementation}
	\label{sec:mad_htlc_implementation}	

We demonstrate the efficacy of \madhtlc{} by evaluating it in Bitcoin and Ethereum. 
We discuss the deployment of \madhtlc{} and its overhead~(\S\ref{sec:implementaiton_overheard_deployment}), and our implementation of a framework for implementing MEV infrastructure~\cite{daian2020flash,cointelegraph2020mev,felten2020mev} on Bitcoin~(\S\ref{sec:rational_miner_implementation}), used to facilitate \madhtlc{} guarantees.

		\subsection{Contract Implementation, Overhead and Deployment}
		\label{sec:implementaiton_overheard_deployment}

We implement \depositContract{} and \collateralContract{} in Bitcoin's Script~\cite{wiki2020bitcoinScript} and Ethereum's Solidity~\cite{solidity,dannen2017introducing} smart contract languages.
We also implement a version of the standard \htlc{} for reference.
We bring the code in Appendix~\ref{appendix:bitcoin_implementation}.

We briefly discuss these implementations, show their transaction-fee overhead is negligible compared to the secured amounts, and present main network deployments.

\paragraph*{Bitcoin implementation}
Bitcoin's transaction fees are determined by the transaction sizes. 
Our contracts use \emph{P2SH}~\cite{p2sh2013bip} (non SegWit~\cite{segwit2017bip}) addresses, so the initiating transactions contain only the hashes of the scripts, and each contract initiation within a transaction requires~28 bytes.
The redeeming transactions provide the full predicate script along with its inputs.
Table~\ref{tab:contract_comparison_bitcoin} presents the script and redeeming transaction sizes of \htlc{}, \depositContract{} and \collateralContract{}.

A transaction redeeming \depositContract{} is about~50 bytes larger than one redeeming~\htlc{}. 
At the current Bitcoin common fees~\cite{bitcoinTxFees2020} and exchange rate~\cite{coindesk2020bitcoinPrice} implies an additional cost of a mere \$0.02.
Including the auxiliary \collateralContract{} implies an additional cost of about \$0.10. 

Further size-reduction optimizations such as using SegWit transactions and merging multiple transactions can also be made, but are outside the scope of this work.
\begin{table}[t]
	\scriptsize
	\captionof{table}{Bitcoin contract and redeeming transaction sizes.}	
	\negspace
	\negspace	
	\begin{center}
		\begin{tabular}{| >{\centering\arraybackslash}p{0.09\linewidth} | >{\centering\arraybackslash}m{0.16\linewidth} | >{\centering\arraybackslash}m{0.19\linewidth} | >{\centering\arraybackslash}m{0.31\linewidth} |} 
			\hline
			\textbf{Contract} & \textbf{Size [bytes]} & \textbf{Redeem path} & \textbf{Redeeming tx [bytes]} \\
			\hline
			
			\multicolumn{1}{|l|}{\multirow{2}{*}{\htlc{}}} & \multirow{2}{*}{99} & $\htlcRedeemPathOne$ & 291  \\ 
			
			& 					   & $\htlcRedeemPathTwo$   & 259 \\ \hline
			
			\multicolumn{1}{|l|}{\multirow{3}{*}{\depositContract{}}} & \multirow{3}{*}{129} & $\mhRedeemPathOne$ & 323  \\ 
			
			& 					       & $\mhRedeemPathTwo$   & 322 \\ 
			
			& 					       & $\mhRedeemPathThree$ & 282 \\ \hline			 
			
			\multicolumn{1}{|l|}{\multirow{2}{*}{\collateralContract{}}} & \multirow{2}{*}{88} & $\mhRedeemPathFour$   & 248  \\ 
			
			& 					     & $\mhRedeemPathFive$ & 241 \\ \hline
			
		\end{tabular}
		\label{tab:contract_comparison_bitcoin}
		\negspace
		\negspace
	\end{center}
\end{table}

\paragraph*{Ethereum implementation}

Compared to Bitcoin's Script, Solidity~\cite{solidity,dannen2017introducing} is a richer smart contract language, allowing \madhtlc{} to be expressed as a single contract consolidating \depositContract{} and \collateralContract{}.

On the Ethereum platform transactions pay fees according to their so-called~\emph{gas} usage, an inner form of currency describing the cost of each operation the transaction performs.
We compare the initiation and redeeming costs of \htlc{} and \madhtlc{}.
Note that \madhtlc{} contains about twice the code of \htlc{}, and as expected, its operations are more gas-consuming.
We bring the details in Table~\ref{tab:contract_comparison_ethereum}.

We stress these numbers regard the most basic, straight-forward implementation, and that Ethereum and Solidity enable further optimizations~-- for example, deploying a more elaborate~\emph{library contract} once~\cite{ethFoundation2020smartcontractlibraries}, and simpler~\emph{contract instances} that use the library, achieving significantly reduced amortized costs.
More importantly, the additional fee costs are independent of (e.g., \$0.2~\cite{liquality2020atomicSwapEth}), and can be negligible compared to, the secured amounts (e.g.,~\$6.2K~\cite{liquality2020atomicSwapEth}).

\begin{table}[t]
	\scriptsize
	\captionof{table}{Ethereum gas for contract initiation and redeeming.}
	\negspace
	\negspace	
	\begin{center}
		\begin{tabular}{| >{\centering\arraybackslash}p{0.09\linewidth} | >{\centering\arraybackslash}m{0.21\linewidth} | >{\centering\arraybackslash}m{0.19\linewidth} | >{\centering\arraybackslash}m{0.23\linewidth} |} 
			\hline
			\textbf{Contract} & \textbf{Initiation [gas]} & \textbf{Redeem path} & \textbf{Redeeming [gas]} \\
			\hline
			
			\multicolumn{1}{|l|}{\multirow{2}{*}{\htlc{}}} & \multirow{2}{*}{362,000} & $\htlcRedeemPathOne$ & 34,734  \\ 
			
			& 					   & $\htlcRedeemPathTwo$   & 32,798 \\ \hline
			
			\multicolumn{1}{|l|}{\multirow{5}{*}{\madhtlc{}}} & \multirow{5}{*}{600,000} & $\mhRedeemPathOne$ & 58,035  \\ 
			
			& 					       & $\mhRedeemPathTwo$   & 58,885 \\ 
			
			& 					       & $\mhRedeemPathThree$ & 59,043 \\
			
			&                          & $\mhRedeemPathFour$  & 41,175  \\ 
			
			& 					       & $\mhRedeemPathFive$ & 44,887 \\ \hline
			
		\end{tabular}
		\label{tab:contract_comparison_ethereum}
		\negspace
		\negspace		
	\end{center}
\end{table}

Recall that for off-chain channels this overhead is incurred only in the abnormal unilateral channel closure.

\paragraph*{Main network deployment}

We deployed \madhtlc{} on both blockchains (Appendix~\ref{appendix:main_network_deployment} details the transaction IDs).

For Bitcoin, we deployed three \depositContract{} instances on the main network and redeemed them using its three redeem paths.
We also deployed two \collateralContract{} instances and redeemed using its two redeem paths.

For Ethereum, we deployed a consolidated~\madhtlc{}, and posted transactions redeeming the~$\depositTokens$ through both~$\mhRedeemPathOne$ and~$\mhRedeemPathTwo$.
These transactions offered relatively low fees, so were not included in a block by any miner, and only revealed~$\aliceSecret$ and~$\bobSecret$.
At this point there were no other transactions trying to redeem the \madhtlc{}, although users and miners monitoring the blockchain could have created a transaction redeeming the~$\depositTokens$ using the~$\mhRedeemPathThree$ with the revealed~$\aliceSecret$ and~$\bobSecret$.
We deduce this optimization currently does not take place on the Ethereum main network.

Then, we published a transaction of our own using~$\mhRedeemPathThree$, revealing (again)~$\aliceSecret$ and~$\bobSecret$, offering a relatively-high fee. 
Nevertheless, our transaction was slightly out-bid by another transaction, which also used~\mhRedeemPathThree, and took the deposit. 
It was likely published by a front-running bot~\cite{robinson2020darkForest,zhou2020high,daian2020flash},
presenting yet another example of entities monitoring the blockchain looking for MEV opportunities~\cite{robinson2020darkForest,zhou2020high,daian2020flash}, as required for \madhtlc{} security. 


		\subsection{Bitcoin-MEV Infrastructure}
		\label{sec:rational_miner_implementation}

By default, cryptocurrency clients~\cite{bitcoincore2020website,geth2020website,parity2020website,ethstackexchange2020txordering} only perform myopic transaction-inclusion optimizations, trying to generate a single maximal-fee block each time. 
As recently shown~\cite{robinson2020darkForest,zhou2020high,daian2020flash} (including in our deployment above), miners and other entities perform more sophisticated optimizations on the Ethereum network.

In contrast, we are not aware of similar optimizations taking place on the Bitcoin network.
Specifically, Bitcoin Core, which is used by roughly 97\% of current Bitcoin nodes~\cite{coindance2020bitcoinClientDistribution}, maintains a local \emph{mempool} data structure that only contains unconfirmed transactions whose timeouts (if any) have elapsed.
This implementation prevents miners from optimizing based on transaction pending on a timeout. 
However, this limitation is not a consensus rule, but an implementation choice. 
Taking more elaborate considerations into account when choosing transactions is not a violation of legitimate miner behavior. 

As noted~(\S\ref{sec:related_work_transaction_optimization}), optimizing transaction revenue is becoming more important for miners over time. 
To demonstrate the ease of achieving broader optimizations, including non-myopic considerations, we implemented \emph{Bitcoin-MEV}, an infrastructure allowing to easily incorporate any logic over Bitcoin Core received transactions.

Bitcoin-MEV's main design goal is to enable users to deploy their own optimization algorithms.
It comprises a patched (140~LoC) C++ Bitcoin Core node with additional RPCs, and a Python script (Fig.~\ref{fig:rmi_design}, new components shaded), working as follows.

When the node receives a new transaction, instead of directly placing it in its mempool, it pushes the transaction to a designated \emph{new transaction queue}.
The Python script monitors this queue with a dedicated RPC, fetches new transactions and parses them.
Then, based on the implemented optimization algorithm, it can instruct the node how to handle the transaction~-- insert it to the mempool, discard it, or keep it for future use.
The Python script can also generate new transactions and send them to the node.

We implemented and locally tested a Python script (350~LoC) for enforcing \madhtlc{} by taking advantage of the opportunities it provides miners to increase their revenue.
We screen received transactions, tease out~$\userA$ and~$\userB$'s preimages, and create a transaction redeeming the \madhtlc{} contracts using the extracted preimages.

\begin{figure}[!t]
	\centering
	\resizebox{ 0.6 \textwidth }{!}{{\includegraphics[trim=85 212 20 75,clip]{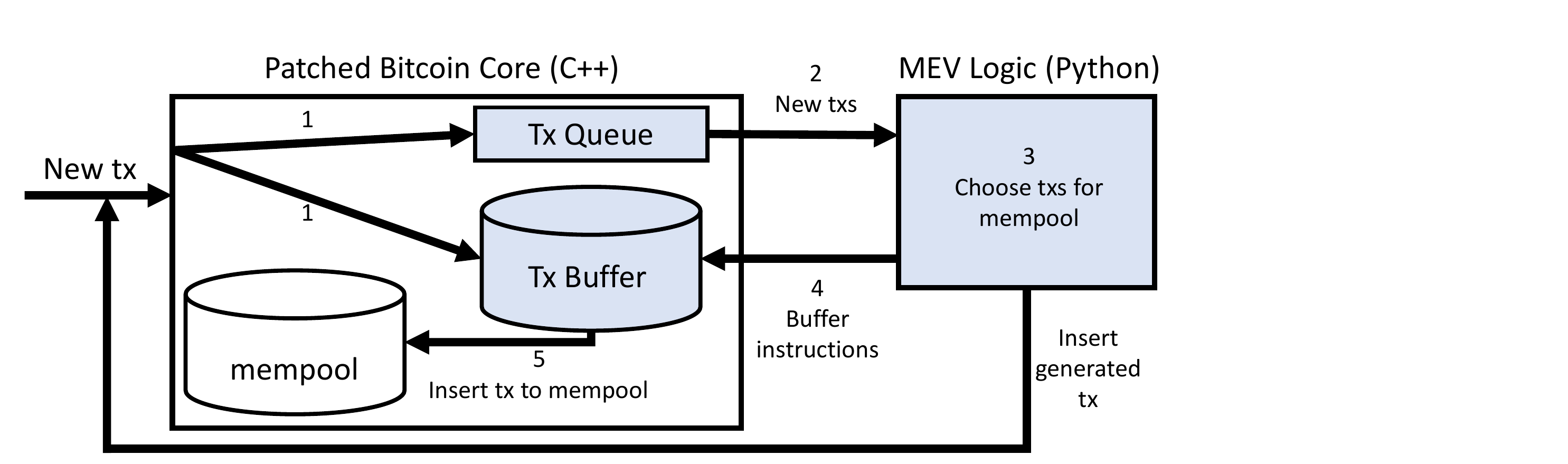}}}
	\caption{Bitcoin-MEV, new components shaded. }
	\label{fig:rmi_design}
\end{figure}


	\section{\htlc{}}
	\label{sec:original_htlc}

The prevalent implementation of~\htlcSpec{} is a direct translation of the specification to a single contract called~\htlc{}~(\S\ref{sec:original_htlc_definition}). 

It relies on the premise that miners benevolently enforce the desired execution, namely include~$\userA$'s transaction in a block before the timeout elapses. 
However, this assumption contradicts the core principle of cryptocurrency permissionless systems~--- miners operate to make profit~\cite{malinova2017market,tsabary2018thegapgame,arvindcutoff,doweck2020multiparty,tsabary2019heb,bentov2019tesseract,prestwich2018minersArentFriends,sliwinskiblockchains}, and include transactions that benefit their personal gains~\cite{daian2020flash,eskandari2019sok,munro2018fomo3ds,robinson2020darkForest}.
Specifically,~$\userB$ can incentivize miners with a bribe~\cite{winzer2019temporary,mccorry2018smart,judmayer2019pay} to exclude~$\userA$'s transaction until the timeout elapses, and then redeem the \htlc{} himself.

We analyze the security of~\htlc{} by formalizing the game played by the entities~(\S\ref{sec:original_htlc_game}), and showing how cheap~$\userB$'s required bribe is~(\S\ref{sec:original_htlc_attack}). 
We show miner fee optimization is easy by implementing a bribery-accepting (i.e., rational and non-myopic) miner~(\S\ref{sec:htlc_bribe_implementation}), and conclude by estimating the actual attack cost using numbers from operational contracts~(\S\ref{sec:original_htlc_examples}).

		\subsection{Construction}
\label{sec:original_htlc_definition}		

$\userA$ and~$\userB$ execute \htlcSpec{} by having an~\htlc{} contracted with~\depositTokens tokens placed in some block~$b_j$. 
The \htlc{}'s \predicateName{} is parameterized with~$\userA$'s and~$\userB$'s public keys,~$\alicePublicKey$ and~$\bobPublicKey$, respectively; a hash digest of the predefined secret~$\contractSecretAlice = \hashFunctionWithInput{\aliceSecret}$ such that any entity other than~$\userA$ and~$\userB$ does not know~$\aliceSecret$ ($\userA$ or~$\userB$ know~$\aliceSecret$ based on the specific use case); and a timeout~$\timeout$.

\htlc{} has two redeem paths, denoted~$\htlcRedeemPathOne$ and~$\htlcRedeemPathTwo$, and presented in Predicate~\ref{alg:original_redeem_paths}.
In~$\htlcRedeemPathOne$~(line~\ref{alg:original_redeem_path_alice}),~$\userA$ can redeem with a transaction including~$\aliceSecret$ and~$\aliceSig$, a signature with her secret key~$\aliceSecretKey$.
In~$\htlcRedeemPathTwo$~(line~\ref{alg:original_redeem_path_bob}),~$\userB$ can redeem with a transaction including~\bobSig, a signature with his secret key~$\bobSecretKey$.
This transaction can only be included in a block at least~$\timeout$ blocks after \htlc{}'s initiation, that is, block~$\blockIndex{j+\timeout}$.

As only~$\userA$ and~$\userB$ know their respective secret keys, other entities cannot redeem the contract.

\begin{algorithm}[t]
	\renewcommand{\algorithmcfname}{Predicate}
	\DontPrintSemicolon
	\SetAlgoNoLine
	\predicateTextSize
	
	\nonl Parameters: $\alicePublicKey, \bobPublicKey, \timeout, \contractSecretAlice$ \\

	\nonl $\htlc{} \left(\hashInputVar{}, \sig{} \right) \coloneqq $ \\
	
	\Numberline $\htlcPredicateMathSpace  \left( \verifyPreImage{\hashInputVar{}}{\contractSecretAlice} \land \verifySig{\sig{}}{\alicePublicKey} \right) \lor$ \label{alg:original_redeem_path_alice} \tcp*{$\htlcRedeemPathOne$}
	
	\Numberline $\htlcPredicateMathSpace \left( \verifySig{\sig{}}{\bobPublicKey} \land \verifyTimeout{\timeout} \right)$ \label{alg:original_redeem_path_bob}  \tcp*{$\htlcRedeemPathTwo$}

	\caption{\htlc{}}
	\label{alg:original_redeem_paths}
\end{algorithm}		

The intended way~$\userA$ and~$\userB$ should interact with \htlc{} is as follows.
If~$\userA$ knows the predefined preimage~$\aliceSecret$, she publishes a transaction~$\aliceTransactionHTLC$ offering a fee~$\aliceFeeHTLC > \txFee$ that redeems the \htlc{}.
She publishes this transaction right after the creation of block~$\blockIndex{j}$, that is, before the creation of block~$\blockIndex{j+1}$.
If~$\userA$ does not know the predefined preimage~$\aliceSecret$ she does not publish any transactions.

$\userB$ observes the published transactions in the mempool, watching for~$\aliceTransactionHTLC$.
If by block~$\blockIndex{j+\timeout{} - 1}$~$\userA$ did not publish~$\aliceTransactionHTLC$ then~$\userB$ publishes~$\bobTransactionHTLC$ with a fee~$\bobFeeHTLC > \txFee$, redeeming the \htlc{}.
If~$\userA$ did publish~$\aliceTransactionHTLC$ by block~$\blockIndex{j+\timeout{} - 1}$ then~$\userB$ does not publish any transactions.

		\subsection{\htlc{} Game}
\label{sec:original_htlc_game}		

\htlc{} operation gives rise to a game, denoted by~$\gameDefNameHTLC$, played among~$\userA$,~$\userB$ and the miners.
It is similar to that of the \madhtlc{} game~(\S\ref{sec:mad_htlc_game}), so we present the differences.

\paragraph{Subgames}

The game state is simply the number of blocks~($k$) created so far and state of the \htlc{}, which can be either redeemable~(\trueConst) or irredeemable~(\falseConst), so denoted~$\gameDefHTLC{k}{\trueConst/\falseConst}$. 

The game begins when one block (initiating the~\htlc{}) was created,~$\userA$ and~$\userB$ did not publish any transactions, and the \htlc{} is redeemable.
Thus, the initial, complete game is~$\gameDefHTLC{1}{\trueConst}$.

\paragraph{Actions}

$\userA$ can redeem the \htlc{} with a transaction~$\aliceTransactionHTLC$, offering~$\aliceFeeHTLC$ tokens as fee.
Note~$\aliceTransactionHTLC$ has to outbid unrelated transactions and thus has to offer a fee~$\aliceFeeHTLC > \txFee$, however, cannot offer more tokens than the redeemed ones, so~$ \aliceFeeHTLC < \depositTokens$.
$\userA$ redeems \htlc{} using the~$\htlcRedeemPathOne$ redeem path, so~$\aliceTransactionHTLC$ can be confirmed in any round. 

$\userB$ can redeem \htlc{} with a transaction~$\bobTransactionHTLC$, offering~$\bobFeeHTLC$ tokens as fee.
Similarly,~$\bobFeeHTLC$ is bounded such that~$\txFee < \bobFeeHTLC < \depositTokens$.
$\userB$ redeems \htlc{} using the~$\htlcRedeemPathTwo$ redeem path, so~$\bobTransactionHTLC$ can only be confirmed in the last round.

Any miner can include the following transactions:
an unrelated transaction in any~$\gameDefHTLC{\subgameIndistinguishable}{\subgameIndistinguishable}$ subgame;
$\aliceTransactionHTLC$ in any~$\gameDefHTLC{\subgameIndistinguishable}{\trueConst}$ subgame; and
$\bobTransactionHTLC$ in the~$\gameDefHTLC{\timeout}{\trueConst}$ subgame.

		\subsection{Bribe Attack Analysis}
        \label{sec:original_htlc_attack}		

We now show the \htlc{} prescribed strategy~(\S\ref{sec:original_htlc_definition}) is not incentive compatible. 
Specifically, we show that if~$\userA$ commits to the prescribed strategy, then~$\userB$ strictly gains by publishing a conflicting transaction, outbidding~$\userA$'s fee, thus incentivizing miners to exclude~$\userA$'s transaction and include his instead.

Let~$\userA$ publish~$\aliceTransactionHTLC$ with fee~$\aliceFeeHTLC$ in the first round, and~$\userB$ publish a transaction~$\bobTransactionHTLC$ with fee~$\bobFeeHTLC > \tfrac{\aliceFeeHTLC - \txFee}{\probMin} + \txFee$.
Focusing on miner actions, we show through a series of lemmas they are incentivized to include~$\bobTransactionHTLC$ and to exclude~$\aliceTransactionHTLC$, resulting with lower utility for~$\userA$, higher utility for~$\userB$, and a violation of the \htlcSpec{}.

First, we show miner utilities for subgames where the HTLC is irredeemable. 
Denote by~$\strategyProfile$ the best response strategy of all miners in this setting.

\begin{lemma}
	\label{lemma:htlc_include_unrelated_in_irredeemable}
	For any~$\blockCountSpecificVal \in \left[1,\timeout\right]$, the utility of miner~$i$ in subgame~$\gameDefHTLC{\blockCountSpecificVal}{\falseConst}$ is~$\utilityOfEntity{i}{\strategyProfile}{\gameDefHTLC{\blockCountSpecificVal}{\falseConst}} = \prob{i} \left(\timeout - \blockCountSpecificVal + 1 \right) \txFee$.
\end{lemma}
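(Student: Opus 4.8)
The plan is to treat this as the base case of the backward-induction analysis: once the \htlc{} is irredeemable the game state is frozen (it can only remain irredeemable), so every remaining round yields the same per-round expectation for each miner, and the total is obtained by multiplying by the number of remaining rounds.

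First I would observe that in any subgame $\gameDefHTLC{\blockCountSpecificVal}{\falseConst}$ the \htlc{} has already been redeemed, so any transaction of $\userA$ or $\userB$ that redeems it (namely $\aliceTransactionHTLC$ or $\bobTransactionHTLC$) conflicts with the already-confirmed redeeming transaction and is therefore invalid and disregarded. Consequently the \htlc{}-related miner actions listed in \S\ref{sec:original_htlc_game} are unavailable in state $\falseConst$, and the only feasible action is to include an unrelated transaction (creating an empty block is trivially dominated). By the model there always exists such a transaction in the mempool, and it pays the base fee $\txFee$; including it is thus the best response, consistent with $\strategyProfile$.

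Next I would note that including an unrelated transaction leaves the \htlc{} state unchanged, so the subgame reached after round $\blockCountSpecificVal$ is $\gameDefHTLC{\blockCountSpecificVal+1}{\falseConst}$ regardless of which miner is selected. In round $\blockCountSpecificVal$ miner $i$ is selected with probability $\prob{i}$ and, when selected, earns exactly $\txFee$; hence her expected reward from this round is $\prob{i}\txFee$. This gives the recursion
\[
\utilityOfEntity{i}{\strategyProfile}{\gameDefHTLC{\blockCountSpecificVal}{\falseConst}} = \prob{i}\txFee + \utilityOfEntity{i}{\strategyProfile}{\gameDefHTLC{\blockCountSpecificVal+1}{\falseConst}},
\]
with base case $\utilityOfEntity{i}{\strategyProfile}{\gameDefHTLC{\timeout}{\falseConst}} = \prob{i}\txFee$ in the final subgame (one round remaining). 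Unrolling over the $\timeout - \blockCountSpecificVal + 1$ remaining rounds — equivalently, applying linearity of expectation across these independent round selections — yields $\utilityOfEntity{i}{\strategyProfile}{\gameDefHTLC{\blockCountSpecificVal}{\falseConst}} = \prob{i}\left(\timeout - \blockCountSpecificVal + 1\right)\txFee$, as claimed.

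There is essentially no hard step here; the entire content lies in correctly enumerating the feasible actions in the irredeemable state (only unrelated, base-fee transactions) and counting remaining rounds. The only points requiring care are (i) justifying that no \htlc{}-redeeming transaction can be included once the contract is irredeemable, so that the per-round reward is uniformly $\prob{i}\txFee$, and (ii) the off-by-one in the number of remaining rounds, $\timeout - \blockCountSpecificVal + 1$, stemming from the convention that subgame $\blockCountSpecificVal$ starts just before round $\blockCountSpecificVal$.
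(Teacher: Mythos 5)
Your proof is correct and takes essentially the same route as the paper: both reduce the irredeemable subgame to the observation that the only feasible action is including an unrelated base-fee transaction, and then compute the expected reward as~$\prob{i}\txFee$ per round over the~$\timeout - \blockCountSpecificVal + 1$ remaining rounds (the paper phrases this directly as a Binomial expectation, while you unroll an equivalent one-step recursion). Your added justification that conflicting redeeming transactions are invalid once the contract is irredeemable is a slightly more explicit version of the paper's one-line claim, not a different argument.
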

		
\begin{proof}
	Since \htlc{} is irredeemable, the only available action for miners is to include an unrelated transaction, yielding a reward of~$\txFee$.
	
	Consider any~$\gameDefHTLC{\blockCountSpecificVal}{\falseConst}$ subgame.
	There are~$\timeout - \blockCountSpecificVal + 1$ remaining blocks to be created, and miner~$i$ creates any of them with probability~$\prob{i}$.
	This scenario can be viewed as a series of~$\timeout - \blockCountSpecificVal + 1$ Bernoulli trials with success probability~$\prob{i}$. 
	The number of successes is therefore Binomially distributed, and the expected number of blocks miner~$i$ creates is~$\prob{i} \left(\timeout - \blockCountSpecificVal + 1\right)$.
	The reward for each block is~$\txFee$, so miner~$i$'s utility is~$\utilityOfEntity{i}{\strategyProfile}{\gameDefHTLC{\blockCountSpecificVal}{\falseConst}} = \prob{i} \left(\timeout - \blockCountSpecificVal + 1\right) \txFee$.
\end{proof}

We now consider miner utilities for~$\gameDefHTLC{\subgameIndistinguishable}{\trueConst}$ subgames, where the \htlc{} is redeemable. 
We begin with the final subgame~$\gameDefHTLC{\timeout}{\trueConst}$, creating block~$B_{j+\timeout}$.

\begin{lemma}
	\label{lemma:htlc_include_bob_in_final_subgame}
	Choosing to include~$\bobTransactionHTLC$ is a unique subgame perfect equilibrium in~$\gameDefHTLC{\timeout}{\trueConst}$, and miner~$i$'s utility when doing so  is~$\utilityOfEntity{i}{\strategyProfile}{\gameDefHTLC{\timeout}{\trueConst}} = \prob{i} \bobFeeHTLC$. 

\end{lemma}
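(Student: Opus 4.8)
The plan is to treat $\gameDefHTLC{\timeout}{\trueConst}$ as the base case of the backward induction: since no subgame follows it, whichever miner is selected to create block $\blockIndex{j+\timeout}$ faces a purely myopic choice and will include whichever feasible transaction carries the largest fee. First I would enumerate the miner's feasible inclusions in this subgame, as already catalogued in \S\ref{sec:original_htlc_game}: an unrelated transaction paying $\txFee$; $\aliceTransactionHTLC$ paying $\aliceFeeHTLC$ (feasible since the \htlc{} is redeemable, i.e.\ the state is $\trueConst$); and $\bobTransactionHTLC$ paying $\bobFeeHTLC$ (feasible since the timeout has elapsed at round $\timeout$ and the \htlc{} is redeemable). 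Both $\aliceTransactionHTLC$ and $\bobTransactionHTLC$ are in the mempool by this round under the analyzed attack scenario, in which $\userA$ published $\aliceTransactionHTLC$ in the first round and $\userB$ published $\bobTransactionHTLC$.

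Next I would establish the strict fee ordering $\txFee < \aliceFeeHTLC < \bobFeeHTLC$. The left inequality is part of the action constraints on $\aliceTransactionHTLC$, and the right one follows from the standing hypothesis $\bobFeeHTLC > \tfrac{\aliceFeeHTLC - \txFee}{\probMin} + \txFee$ together with $\probMin \le 1$: since $\aliceFeeHTLC - \txFee > 0$ and dividing by $\probMin \le 1$ can only increase it, we have $\tfrac{\aliceFeeHTLC - \txFee}{\probMin} \ge \aliceFeeHTLC - \txFee$, whence $\bobFeeHTLC > \aliceFeeHTLC$. Thus $\bobTransactionHTLC$ strictly dominates every other inclusion in immediate reward.

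From here the conclusion is immediate. Because the subgame carries no continuation value, the unique best response of the selected miner is to include the strictly-highest-fee transaction $\bobTransactionHTLC$; strictness of the inequalities rules out any other optimal choice, so including $\bobTransactionHTLC$ is the \emph{unique} subgame perfect equilibrium. To compute the utility I would then note that this subgame consists of the single block $\blockIndex{j+\timeout}$, created by miner $i$ with probability $\prob{i}$, in which case she collects $\bobFeeHTLC$ and otherwise collects nothing from this round. Hence $\utilityOfEntity{i}{\strategyProfile}{\gameDefHTLC{\timeout}{\trueConst}} = \prob{i} \bobFeeHTLC$.

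I do not anticipate a substantive obstacle, as the lemma is precisely the backward-induction base case. The only points requiring care are (i) confirming that $\bobTransactionHTLC$ is genuinely feasible at round $\timeout$, i.e.\ that its $\verifyTimeoutName$ predicate is satisfied exactly once the block index reaches $\blockIndex{j+\timeout}$, and (ii) deriving the \emph{strict} inequality $\bobFeeHTLC > \aliceFeeHTLC$ from the bribe hypothesis, which is what forces a unique equilibrium rather than a tie that would require ad hoc tie-breaking.
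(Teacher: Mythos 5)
Your proposal is correct and follows essentially the same argument as the paper's proof: enumerate the three candidate inclusions in the final subgame, derive the strict ordering $\bobFeeHTLC > \aliceFeeHTLC$ and $\bobFeeHTLC > \txFee$ from the bribe hypothesis $\bobFeeHTLC > \tfrac{\aliceFeeHTLC - \txFee}{\probMin} + \txFee$ together with $0 < \probMin < 1$ and $\aliceFeeHTLC > \txFee$, conclude that including $\bobTransactionHTLC$ is the strictly dominant (hence unique equilibrium) choice, and multiply by the block-creation probability $\prob{i}$ to obtain the utility. Your added care about the $\verifyTimeoutName$ feasibility of $\bobTransactionHTLC$ at round $\timeout$ is a fine clarification but does not change the argument.
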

		
\begin{proof}
	
	In the~$\gameDefHTLC{\timeout}{\trueConst}$ subgame, the miner that creates the block has three transactions to pick from: an unrelated transaction for the base fee~$\txFee$, $\aliceTransactionHTLC$ for~$\aliceFeeHTLC$, or~$\bobTransactionHTLC$ for~$\bobFeeHTLC$.
	
	As~$\bobFeeHTLC > \tfrac{\aliceFeeHTLC - \txFee}{\probMin} + \txFee$,~$ 0<\probMin<1$ and~$\aliceFeeHTLC > \txFee$, it follows that~$\bobFeeHTLC > \aliceFeeHTLC$ and~$\bobFeeHTLC > \txFee$.
	That means including~$\bobTransactionHTLC$ yields strictly greater reward than all other actions, thus being a unique subgame perfect equilibrium in this subgame.
	
	Miner~$i$ creates the block with probability~$\prob{i}$, and so her expected profit, i.e. utility, is~$\utilityOfEntity{i}{\strategyProfile}{\gameDefHTLC{\timeout}{\trueConst}} = \prob{i} \bobFeeHTLC$. 
\end{proof}

We now move on to consider any earlier ($\blockCountSpecificVal \in \left[1,\timeout - 1\right]$) subgame (Blocks~$B_{j+1}$ to~$B_{j+\timeout - 1}$ ) where the \htlc{} is redeemable.  

\begin{lemma}
	\label{lemma:htlc_include_regular_in_nonfinal_subgame}	
	For any~$\blockCountSpecificVal \in \left[1,\timeout - 1\right]$, the unique subgame perfect equilibrium is that every miner includes an unrelated transaction in~$\gameDefHTLC{\blockCountSpecificVal}{\trueConst}$, and miner~$i$'s utility when doing so is~$\utilityOfEntity{i}{\strategyProfile}{\gameDefHTLC{\blockCountSpecificVal}{\trueConst}} = \prob{i} \left( \left(\timeout - \blockCountSpecificVal\right) \txFee + \bobFeeHTLC\right)$.
\end{lemma}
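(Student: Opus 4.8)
The plan is to prove both claims—the utility formula and the uniqueness of the equilibrium—simultaneously by backward induction on the subgame index, running from $\blockCountSpecificVal = \timeout - 1$ down to $\blockCountSpecificVal = 1$. The base of the induction rests on the two preceding lemmas: Lemma~\ref{lemma:htlc_include_bob_in_final_subgame} supplies the continuation value of the redeemable final subgame $\gameDefHTLC{\timeout}{\trueConst}$, and Lemma~\ref{lemma:htlc_include_unrelated_in_irredeemable} supplies the continuation value of any irredeemable subgame $\gameDefHTLC{\subgameIndistinguishable}{\falseConst}$. Since the game is finite and of perfect information, I would invoke the one-shot deviation principle: it suffices to check that the single deciding miner in round $\blockCountSpecificVal$ cannot profit from a one-round deviation, assuming the prescribed strategy is played thereafter.

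First I would pin down the deciding miner's options in $\gameDefHTLC{\blockCountSpecificVal}{\trueConst}$ for $\blockCountSpecificVal \in [1,\timeout-1]$. Because $\bobTransactionHTLC$ is still locked by the timeout, the only transactions available are an unrelated one (fee $\txFee$, leaving the \htlc{} redeemable and transitioning to $\gameDefHTLC{\blockCountSpecificVal+1}{\trueConst}$) and $\aliceTransactionHTLC$ (fee $\aliceFeeHTLC$, redeeming the \htlc{} and transitioning to $\gameDefHTLC{\blockCountSpecificVal+1}{\falseConst}$). On the prescribed (unrelated) path, miner $i$ collects $\txFee$ this round with probability $\prob{i}$ and then inherits the continuation utility $\prob{i}\left((\timeout-\blockCountSpecificVal-1)\txFee + \bobFeeHTLC\right)$ from the induction hypothesis; summing gives exactly $\prob{i}\left((\timeout-\blockCountSpecificVal)\txFee + \bobFeeHTLC\right)$, verifying the stated formula.

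To establish uniqueness I would compare, conditional on miner $i$ being selected to create block $B_{j+\blockCountSpecificVal}$, the payoff of following the prescription, $\txFee + \prob{i}\left((\timeout-\blockCountSpecificVal-1)\txFee + \bobFeeHTLC\right)$, against the payoff of deviating to $\aliceTransactionHTLC$, which yields $\aliceFeeHTLC$ now and then, via Lemma~\ref{lemma:htlc_include_unrelated_in_irredeemable} applied to $\gameDefHTLC{\blockCountSpecificVal+1}{\falseConst}$, a continuation of $\prob{i}(\timeout-\blockCountSpecificVal)\txFee$. After cancelling the common $\prob{i}(\timeout-\blockCountSpecificVal-1)\txFee$ terms, the no-deviation condition collapses to $\txFee + \prob{i}\bobFeeHTLC > \aliceFeeHTLC + \prob{i}\txFee$, i.e.\ $\bobFeeHTLC > \tfrac{\aliceFeeHTLC - \txFee}{\prob{i}} + \txFee$. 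Since $\aliceFeeHTLC > \txFee$, the right-hand side is largest for the smallest mining power, so the hypothesis $\bobFeeHTLC > \tfrac{\aliceFeeHTLC - \txFee}{\probMin} + \txFee$ guarantees a strict inequality for every miner $i$; hence including the unrelated transaction is each miner's strictly optimal action and the equilibrium is unique.

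The conceptual crux—and the step I would be most careful with—is the continuation-value comparison rather than the algebra: the deciding miner's incentive to include $\aliceTransactionHTLC$ now must be weighed against the $\prob{i}$-chance she forfeits of herself harvesting the large bribe $\bobFeeHTLC$ in the final round, and this forfeited expected value is precisely $\prob{i}(\bobFeeHTLC - \txFee)$, the gap between the redeemable and irredeemable continuation utilities at round $\blockCountSpecificVal+1$. Getting the bookkeeping of the random miner selection right (so that the per-round $\prob{i}$ factor attaches to the correct continuation) is where an error would most easily creep in, and making sure the bound binds at $\probMin$ rather than at an arbitrary $\prob{i}$ is the only place the specific form of the assumption on $\bobFeeHTLC$ is used.
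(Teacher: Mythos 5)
Your proof is correct and follows essentially the same route as the paper's: backward induction on~$\blockCountSpecificVal$, with the base case from Lemma~\ref{lemma:htlc_include_bob_in_final_subgame} and Lemma~\ref{lemma:htlc_include_unrelated_in_irredeemable}, reducing the no-deviation condition to~$\txFee + \prob{i}\bobFeeHTLC > \aliceFeeHTLC + \prob{i}\txFee$ and noting it binds at~$\probMin$. The only presentational difference is that you invoke the one-shot deviation principle conditional on miner~$i$ being selected, whereas the paper writes out the full expected utilities including the aggregated behavior of the other miners and shows those terms cancel --- the same comparison either way.
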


To prove this lemma we show that for any~$\blockCountSpecificVal \in \left[1,\timeout - 1\right]$, including~$\userA$'s transaction in subgame~$\gameDefHTLC{\blockCountSpecificVal}{\trueConst}$ results with lower overall utility at game conclusion -- intuitively, it redeems the contract, so in the last subgame miners cannot include~$\userB$'s transaction. 
The proof is by induction on~$\blockCountSpecificVal$, and we bring it in full in Appendix~\ref{appendix:htlc_additional_proofs}.

We conclude with the main theorem regarding \htlc{} susceptibility to bribing attacks:
\begin{theorem}
	\label{theorem:htlc_accepting_bribe_is_spe}
	Alice's prescribed behavior of \htlc{} allows~$\userB$ to bribe miners at a cost of~$\tfrac{\aliceFeeHTLC - \txFee}{\probMin} + \txFee$.
\end{theorem}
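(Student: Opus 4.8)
The plan is to obtain the theorem as a direct corollary of the three subgame lemmas already established, gluing them along the backward-induction path of the full game. First I would fix the scenario assumed in the statement: Alice follows her prescribed strategy, publishing $\aliceTransactionHTLC$ at fee $\aliceFeeHTLC$ in the first round, while $\userB$ deviates by publishing $\bobTransactionHTLC$ at a fee $\bobFeeHTLC$ chosen just above the threshold $\tfrac{\aliceFeeHTLC - \txFee}{\probMin} + \txFee$. This is precisely the fee regime under which Lemma~\ref{lemma:htlc_include_bob_in_final_subgame} and Lemma~\ref{lemma:htlc_include_regular_in_nonfinal_subgame} were proved, so both apply verbatim.

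Next I would observe that the full game $\gameDefNameHTLC$ is exactly the subgame $\gameDefHTLC{1}{\trueConst}$: one block (initiating the \htlc{}) has been created, no redeeming transaction has yet been confirmed, and the contract is redeemable. Applying Lemma~\ref{lemma:htlc_include_regular_in_nonfinal_subgame} with $\blockCountSpecificVal = 1$, and more generally to every $\gameDefHTLC{\blockCountSpecificVal}{\trueConst}$ with $\blockCountSpecificVal \in [1,\timeout - 1]$, gives that the unique subgame perfect equilibrium has every miner include an unrelated transaction. Consequently $\aliceTransactionHTLC$ is never confirmed during the first $\timeout - 1$ rounds, and the \htlc{} stays redeemable, so play necessarily reaches the final subgame $\gameDefHTLC{\timeout}{\trueConst}$. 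There Lemma~\ref{lemma:htlc_include_bob_in_final_subgame} gives that including $\bobTransactionHTLC$ is the unique subgame perfect equilibrium, so $\userB$'s transaction is confirmed in the last round.

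Chaining these conclusions, the unique equilibrium play of the whole game confirms $\bobTransactionHTLC$ and excludes $\aliceTransactionHTLC$ entirely: $\userB$ redeems the \htlc{} for $\depositTokens - \bobFeeHTLC$ tokens while $\userA$, despite knowing the preimage and behaving exactly as prescribed, receives nothing. This violates \htlcSpec{}, and the fee $\userB$ actually pays to achieve it is $\bobFeeHTLC$, which can be driven arbitrarily close to $\tfrac{\aliceFeeHTLC - \txFee}{\probMin} + \txFee$ from above, establishing the stated bribe cost.

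I do not expect a genuine obstacle, since the three lemmas do the real work and the theorem is essentially their synthesis; the only care needed is the boundary case $\timeout = 1$, where $\gameDefHTLC{1}{\trueConst}$ coincides with the final subgame $\gameDefHTLC{\timeout}{\trueConst}$, so the argument collapses to a single application of Lemma~\ref{lemma:htlc_include_bob_in_final_subgame} (Lemma~\ref{lemma:htlc_include_regular_in_nonfinal_subgame} being vacuous). It is also worth remarking, though not strictly required by the statement, that the threshold is feasible for $\userB$ precisely when $\tfrac{\aliceFeeHTLC - \txFee}{\probMin} + \txFee < \depositTokens$, i.e. when the deposit is large enough to cover the bribe; Lemma~\ref{lemma:htlc_include_unrelated_in_irredeemable} enters only indirectly here, as it supplies the miners' off-path valuations in the irredeemable continuations used inside the proof of Lemma~\ref{lemma:htlc_include_regular_in_nonfinal_subgame}.
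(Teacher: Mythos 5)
Your proposal is correct and matches the paper's own proof, which likewise obtains Theorem~\ref{theorem:htlc_accepting_bribe_is_spe} as a direct synthesis of Lemma~\ref{lemma:htlc_include_bob_in_final_subgame} and Lemma~\ref{lemma:htlc_include_regular_in_nonfinal_subgame}: the unique subgame perfect equilibrium from the initial subgame~$\gameDefHTLC{1}{\trueConst}$ has miners include unrelated transactions until round~$\timeout$ and then include~$\userB$'s transaction. Your added remarks on the~$\timeout = 1$ boundary case and the feasibility condition~$\tfrac{\aliceFeeHTLC - \txFee}{\probMin} + \txFee < \depositTokens$ are sound refinements the paper leaves implicit (the latter appearing only in Corollary~\ref{corollary:htlc_safety_bound}).
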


\begin{proof}
	The proof follows directly from Lemma~\ref{lemma:htlc_include_bob_in_final_subgame} and Lemma~\ref{lemma:htlc_include_regular_in_nonfinal_subgame}, both showing that if~$\userA$ naively follows the prescribed strategy then subgame perfect equilibrium of the initial subgame is for all miners to place unrelated transactions until round~$\timeout{}$ and then place~$\userB$'s transaction.
\end{proof}

Note that by Theorem~\ref{theorem:htlc_accepting_bribe_is_spe}, the bribing cost required to attack \htlc{} is independent in~$\timeout$, meaning that simply increasing the timeout does contribute to \htlc{}'s security.

Of course once~$\userA$ sees an attack is taking place she can respond by increasing her fee.
In turn, this could lead to~$\userB$ increasing his fee as well, and so forth.
Instead of focusing on these bribe and counter-bribe dynamics, we conclude by showing that~$\userA$ can preemptively prevent the attack, or assure winning with a counter-bribe, by paying a high fee dependent on~$\depositTokens$,.
We note that such a high fee is in violation of the \htlcSpec{}.

\begin{corollary} 
	\label{corollary:htlc_safety_bound}
	$\userB$ cannot bribe the miners in this manner if~$\userA$'s~$\aliceTransactionHTLC$ offers at least~$\aliceFeeHTLC  > \probMin \left(\depositTokens - \txFee\right) + \txFee$.
\end{corollary}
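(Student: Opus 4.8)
The plan is to argue by feasibility: the bribe analyzed in Theorem~\ref{theorem:htlc_accepting_bribe_is_spe} succeeds only when~$\userB$ can offer a fee clearing the threshold established there, and I would show that the hypothesized fee of~$\userA$ pushes this threshold strictly above the largest fee~$\userB$ could ever offer.

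First I would recall the two governing inequalities. From the bribe analysis (Lemma~\ref{lemma:htlc_include_regular_in_nonfinal_subgame} together with Theorem~\ref{theorem:htlc_accepting_bribe_is_spe}), for the miners' unique subgame perfect equilibrium to be ``wait and include~$\bobTransactionHTLC$ in the final round'' rather than ``include~$\aliceTransactionHTLC$ in its published round,'' the fee~$\userB$ offers must satisfy~$\bobFeeHTLC > \tfrac{\aliceFeeHTLC - \txFee}{\probMin} + \txFee$. On the other hand, from the action constraints of the game~(\S\ref{sec:original_htlc_game}), any transaction~$\userB$ publishes cannot offer more tokens than the contract holds, so~$\bobFeeHTLC < \depositTokens$.

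Next I would combine these. Substituting the hypothesis~$\aliceFeeHTLC > \probMin\left(\depositTokens - \txFee\right) + \txFee$ into the bribe threshold, a one-line rearrangement (subtract~$\txFee$, divide by~$\probMin > 0$, add~$\txFee$) yields~$\tfrac{\aliceFeeHTLC - \txFee}{\probMin} + \txFee > \depositTokens$. Hence any fee large enough to induce the miners to collude would have to exceed~$\depositTokens$, contradicting the feasibility bound~$\bobFeeHTLC < \depositTokens$. Therefore no admissible~$\bobFeeHTLC$ makes the bribe the miners' equilibrium, and by the uniqueness in the earlier lemmas the miners instead include~$\aliceTransactionHTLC$ in its published round, so~$\userA$ redeems the contract and the attack of~\S\ref{sec:original_htlc_attack} fails.

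I expect the only delicate point to be confirming that the strict inequalities line up at the boundary: the bribe condition, the token bound, and the hypothesis are all strict, so I would verify that~$\tfrac{\aliceFeeHTLC - \txFee}{\probMin} + \txFee > \depositTokens > \bobFeeHTLC$ leaves no room for a qualifying bribe even as~$\userB$ drives~$\bobFeeHTLC$ arbitrarily close to~$\depositTokens$. A secondary remark worth a sentence is that this rules out only the specific outbidding attack of~\S\ref{sec:original_htlc_attack}, and that~$\userA$ paying a fee scaling with~$\depositTokens$ is precisely the behavior the corollary flags as violating~\htlcSpec{}, which ties the result back to its motivation.
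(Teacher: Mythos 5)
Your proposal is correct and takes essentially the same approach as the paper's proof: both take the bribe threshold~$\bobFeeHTLC > \tfrac{\aliceFeeHTLC - \txFee}{\probMin} + \txFee$ from Theorem~\ref{theorem:htlc_accepting_bribe_is_spe}, combine it with the feasibility bound~$\bobFeeHTLC < \depositTokens$, and observe that the hypothesized~$\aliceFeeHTLC$ pushes the threshold strictly above~$\depositTokens$, leaving no admissible bribe. Your closing remarks on the strictness of the inequalities and on the fee violating~\htlcSpec{} are sound but add nothing beyond what the paper already states.
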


\begin{proof}
	In order to achieve the attack,~$\userB$ ought to make placing unrelated transactions until~$\timeout{}$ and placing his transaction at~$\timeout{}$ a subgame perfect equilibrium. 
	As shown~(Theorem~\ref{theorem:htlc_accepting_bribe_is_spe}), the threshold to incentivize the smallest miner is~$\bobFeeHTLC > \tfrac{\aliceFeeHTLC - \txFee}{\probMin} + \txFee$. 
	Recall the fee~$\bobFeeHTLC$ of the bribing transaction~$\bobTransactionHTLC$ is upper bounded by the \htlc{} tokens~$\depositTokens$.
	Therefore, to achieve the attack it must hold that~$\depositTokens > \tfrac{\aliceFeeHTLC - \txFee}{\probMin} + \txFee$. 
	By choosing~$\aliceFeeHTLC > \probMin \left(\depositTokens - \txFee\right) + \txFee$,~$\userA$ can prevent~$\userB$ from paying a fee adhering to the bounds. 
\end{proof}

\paragraph*{Myopic Miners} 
This bribery attack variant relies on all miners being rational, hence considering their utility at game conclusion instead of myopically optimizing for the next block.
If a portion of the miners are myopic and any of them gets to create a block during the first~$\timeout-1$ rounds, that miner would include~$\userA$'s transaction and~$\userB$'s bribery attempt would have failed.

In such scenarios the attack succeeds only with a certain probability~-- only if a myopic miner does not create a block in the first~$\timeout-1$ rounds.
The success probability therefore decreases exponentially in~$\timeout$. 
Hence, to incentivize miners to support the attack,~$\userB$ has to increase his offered bribe exponentially in~\timeout.


The analysis relies on assumptions on the mining power distribution, and is outside the scope of this work.
Notably, for the simpler case when all other miners are myopic, miner~$i$ is incentivized to support the attack only when it is her dominant strategy, matching the upper bound of Winzer et al.~\cite{winzer2019temporary}. 

		\subsection{Non-Myopic Bribery-Accepting Miner Implementation}
		\label{sec:htlc_bribe_implementation}

Aside from the Bitcoin-MEV infrastructure~(\S\ref{sec:rational_miner_implementation}), we also implemented a simpler Bitcoin Core patch supporting the mentioned bribe attack on \htlc{}.

When the patched client receives transactions with an unexpired timeout (\emph{waiting} transactions) it stores them in a data structure instead of discarding them. 
When creating a new block, the client first checks if any of the timeouts have elapsed, and if so, moves the relevant transactions to the mempool.
When receiving conflicting transactions, instead of accepting the first and discarding the second, it accepts the transaction that offers a higher fee.
In case of a conflict with a waiting transaction, it chooses based on the condition described in Theorem~\ref{theorem:htlc_accepting_bribe_is_spe}.

The simplicity of this patch (150 LoC, no external modules) demonstrates that miners can trivially achieve non-myopic transaction selection optimization. 
		
		\subsection{Real-World Numbers}
		\label{sec:original_htlc_examples}		

We conclude this section by presenting three examples of \htlc{} being used in running systems, and show the substantial costs to make them resistant against bribery attacks. 

Table~\ref{tab:htlc_examples} presents for each example the \htlc{} tokens~$\depositTokens$, the base fee~$\txFee$, and the ratio of required tokens for bribery resistance (Theorem~\ref{theorem:htlc_accepting_bribe_is_spe}) and the base fee~$ \tfrac{\probMin\left(\depositTokens - \txFee\right) + \txFee}{\txFee}$. 
To estimate the base fee we conservatively take the actual paid fee, which is an upper bound. 
We conservatively estimate~$\probMin=0.01$~\cite{blockchain2020bitcoinPools}; miners with lower mining power are less likely due to economy-of-scale~\cite{arnosti2018bitcoin}.


The first example is of a Bitcoin Lightning channel~\cite{lightning2020topCapacityChannel,lightning2020topCapacityClosingTx}, where the required fee to secure the contract against a bribery is~1.34e4 times the actual fee.
Plugging in \$10K as the average Bitcoin price at the time~\cite{coindesk2020bitcoinPrice}, we get that an attack requires about a \$2 bribe for a payoff of over \$25K.
Note this is just an arbitrary example, and there are plenty of such low-fee, high-capacity channels, in all a few dollars bribe is sufficient to yield tens of thousands of dollars as reward~\cite{lightning2020topCapacityChannel}.
The second example is of a Litecoin atomic swap~\cite{litecoin2017atomicSwap}, requiring 436 times higher fee to be secured against bribes.
The last two examples are the two sides of a BTC-ETH atomic swap conducted by Liquality~\cite{liquality2020site}, requiring more than 300X and 480X fees to be secure, respectively.

\begin{table}[t]
	\scriptsize
	\caption{\htlc{} bribe resistance cost examples.}
	\negspace
	\negspace	
	\begin{center}
		\begin{tabular}{|  >{\centering\arraybackslash} m{0.33\linewidth} |  >{\centering\arraybackslash} m{0.09\linewidth} |  >{\centering\arraybackslash} m{0.1\linewidth} | >{\centering\arraybackslash} m{0.29\linewidth} |} 
			\hline
			
			\textbf{Name} &~$\depositTokens$ & $\txFee$ & $\dfrac{\probMin\left(\depositTokens - \txFee\right) + \txFee}{\txFee}$   \\[11pt]
			\hline

			Lightning channel $\left(\text{BTC}\right)$~\cite{lightning2020topCapacityChannel,lightning2020topCapacityClosingTx} & 2.684 & 2.22e-6 & 1.34e4  \\ \hline

			Litecoin atomic swap $\left(\text{LTC}\right)$~\cite{litecoin2017atomicSwap} & 1.337 & 3.14e-4 & 435.7  \\ \hline

								
			Liquality~\cite{liquality2020site} atomic swap $\left(\text{ETH}\right)$~\cite{liquality2020atomicSwapEth} & 12 & 0.0004 & 301 \\ \hline
			
			Liquality~\cite{liquality2020site} atomic swap $\left(\text{BTC}\right)$~\cite{liquality2020atomicSwapBtc} & 0.278 & 5.76e-6 & 483.63 \\ \hline
														
		\end{tabular}
	\label{tab:htlc_examples}
	\end{center}
\end{table}	


	\section{Future Directions}
	\label{sec:future_directions}

We briefly present two future research directions. 
First, we discuss attacks and mitigations in a weaker model, where either~$\userA$ or~$\userB$ have significant mining power~(\S\ref{sec:mad_htlc_colluding_miners}).
Then, we discuss how using \madhtlc{} can reduce latency in systems utilizing \htlcSpec{}~(\S\ref{sec:latency_reduction})

		\subsubsection{Mining~$\userA$ or~$\userB$}
		\label{sec:mad_htlc_colluding_miners}

As in previous work~\cite{winzer2019temporary,mccorry2018smart}, the security analysis of \madhtlc{} assumes that~$\userA$ and~$\userB$ have no mining capabilities and do not collude with any miner.
Indeed, acquiring mining capabilities (or forming collusion agreements) requires a significant investment, substantially higher than necessary for a simple bribe.
Removing this assumption extends the game space considerably, and brings in timing and probability considerations that are outside the scope of this work.
Nevertheless, we briefly present the issue and a potential low-overhead modification that disincentivizes such attacks.

\userA with mining capabilities that knows~$\aliceSecret$ can stall until the timeout elapses and \userB publishes~$\bobSecret$, and then redeem both~$\depositContract$ (using either~$\mhRedeemPathOne$ or~$\mhRedeemPathThree$) and~$\collateralContract$ (using~$\mhRedeemPathFive$).
This requires \userA to create the block right after the timeout elapses, otherwise another miner would include $\userB$'s transactions.
The potential profit is the \collateralTokens tokens, whose number is in the order of a transaction fee.

\userB with mining capabilities can redeem~$\depositContract$ (using~$\mhRedeemPathThree$) if he knows~$\aliceSecret$.
This requires \userB to create the first block after the~$\madhtlc$ initiation, otherwise another miner would include $\userA$'s transaction.
The potential damage for this case is similar to the \htlc{} bribery~(Winzer et al.~\cite{winzer2019temporary} and~\S\ref{sec:original_htlc_attack}), and note that any miner will be able to redeem~$\collateralContract$ once the timeout elapses.

Both variants require the miner to reveal~$\aliceSecret$ and~$\bobSecret$ by creating a block at a specific height, meaning they only succeed with some probability.
As such, their profitability depends on the relative mining size of the miner, the deposit and collateral amounts, and the transaction fees.

Nevertheless, these are vulnerabilities of \madhtlc{}, and we propose the following countermeasure: Instead of having a single \depositContract{} and a single \collateralContract{}, have multiple of each, all with the same~$\contractSecretAlice$ and~$\contractSecretBob$, but each with a different timeout~$\timeout$, and split~$\depositTokens$ and~$\collateralTokens$ among them.

As one of the timeouts elapse, if the miner attacks then she loses her advantage, as once she exposes~\aliceSecret and~\bobSecret, any miner can compete for the remaining contracts.
Therefore, this mechanism diminishes the attack profitability.

This adjustment's overhead is only due to the fees for creating and redeeming more contracts.
However, those can be small, independent of the secured amount.

		\subsubsection{Latency Reduction}
		\label{sec:latency_reduction}

Systems utilizing \htlcSpec{} must set the timeout parameter~$\timeout$, facing a trade-off.
Too short timeouts result in a security risk~-- $\userB$ might get the tokens unjustly because~$\userA$'s transaction was not yet confirmed. 
Too long timeouts imply an opportunity cost due to the unavailability of the locked coins, and increase susceptibility to various attacks~\cite{tikhomirov2020quantitative,mizrahi2020congestion,malavolta2019anonymous}. 

\madhtlc{} can allow for significantly reduced timeouts compared to \htlc{}, since instead of waiting for confirmation, it now suffices to consider transaction publication. 
The analysis depends on mempool and congestion properties that are outside the scope of this work. 


	\section{Conclusion}

We introduce a novel approach of utilizing miner's rationality to secure smart contracts, and use it to design \madhtlc{}, a contract implementing \htlcSpec{}. 
We show using the UC framework and with game-theoretic analysis that \madhtlc{} is secure.
We also show the prevalent \htlc{} is vulnerable to cheap bribery attacks in a wider variety of systems, and qualitatively tighten the known cost bound in presence of rational miners. 
We demonstrate the efficacy of our approach by implementing and executing \madhtlc{} on Bitcoin and Ethereum. 
We also demonstrate the practicality of implementing a rational miner by patching the standard Bitcoin client. 

Both the attack against \htlc{} and the secure alternative \madhtlc{} have direct impact on a variety of contracts using the \htlcSpec{} design pattern.
As miners' incentives to act rationally increase, those systems will become vulnerable and can directly adopt \madhtlc{} as a plug-in alternative.  


	\section{Acknowledgments}

We thank the anonymous reviewers, Sebastian Faust, and our shepherd Dominique Schröder for their valuable feedback and guidance.
This research was supported by the Israel Science Foundation (grant No. 1641/18), an IC3 research grant, the US-Israel Binational Science Foundation (BSF), and the Technion Hiroshi Fujiwara cyber-security research center.


\bibliographystyle{./IEEEtran}
\bibliography{./IEEEabrv,./btc}

\begin{thebibliography}{100}
\providecommand{\url}[1]{#1}
\csname url@samestyle\endcsname
\providecommand{\newblock}{\relax}
\providecommand{\bibinfo}[2]{#2}
\providecommand{\BIBentrySTDinterwordspacing}{\spaceskip=0pt\relax}
\providecommand{\BIBentryALTinterwordstretchfactor}{4}
\providecommand{\BIBentryALTinterwordspacing}{\spaceskip=\fontdimen2\font plus
\BIBentryALTinterwordstretchfactor\fontdimen3\font minus
  \fontdimen4\font\relax}
\providecommand{\BIBforeignlanguage}[2]{{%
\expandafter\ifx\csname l@#1\endcsname\relax
\typeout{** WARNING: IEEEtran.bst: No hyphenation pattern has been}%
\typeout{** loaded for the language `#1'. Using the pattern for}%
\typeout{** the default language instead.}%
\else
\language=\csname l@#1\endcsname
\fi
#2}}
\providecommand{\BIBdecl}{\relax}
\BIBdecl

\bibitem{nakamoto2008bitcoin}
\BIBentryALTinterwordspacing
S.~Nakamoto, ``Bitcoin: A peer-to-peer electronic cash system,'' 2008.
  [Online]. Available: \url{http://www.bitcoin.org/bitcoin.pdf}
\BIBentrySTDinterwordspacing

\bibitem{buterin2013ethereum}
\BIBentryALTinterwordspacing
V.~Buterin, ``A next generation smart contract \& decentralized application
  platform,'' 2013. [Online]. Available: \url{https://www.ethereum.org/
  pdfs/EthereumWhitePaper.pdf/}
\BIBentrySTDinterwordspacing

\bibitem{cryptoslate2020marketCap}
\BIBentryALTinterwordspacing
cryptoslate.com. (2020) Cryptocurrency market capitalizations. [Online].
  Available: \url{https://cryptoslate.com/coins/}
\BIBentrySTDinterwordspacing

\bibitem{poon2016bitcoin}
J.~Poon and T.~Dryja, ``The bitcoin lightning network: Scalable off-chain
  instant payments,'' 2016.

\bibitem{decker2015duplex}
C.~Decker and R.~Wattenhofer, ``A fast and scalable payment network with
  {Bitcoin} {Duplex} {Micropayment} {Channels},'' in \emph{Stabilization,
  Safety, and Security of Distributed Systems - 17th International Symposium},
  2015.

\bibitem{green2017bolt}
M.~Green and I.~Miers, ``Bolt: Anonymous payment channels for decentralized
  currencies,'' in \emph{Proceedings of the 2017 ACM CCS}, 2017.

\bibitem{mccorry2016towards}
P.~McCorry, M.~M{\"o}ser, S.~F. Shahandasti, and F.~Hao, ``Towards bitcoin
  payment networks,'' in \emph{Australasian Conference on Information Security
  and Privacy}, 2016.

\bibitem{miller2019sprites}
A.~Miller, I.~Bentov, S.~Bakshi, R.~Kumaresan, and P.~McCorry, ``Sprites and
  state channels: Payment networks that go faster than lightning,'' in
  \emph{Financial Cryptography and Data Security}, 2019.

\bibitem{dziembowski2018general}
S.~Dziembowski, S.~Faust, and K.~Host{\'a}kov{\'a}, ``General state channel
  networks,'' in \emph{Proceedings of the 2018 ACM CCS}, 2018.

\bibitem{dziembowski2017perun}
S.~Dziembowski, L.~Eckey, S.~Faust, and D.~Malinowski, ``Perun: Virtual payment
  channels over cryptographic currencies.'' \emph{IACR ePrint}, 2017.

\bibitem{herlihy2018atomic}
M.~Herlihy, ``Atomic cross-chain swaps,'' in \emph{Proceedings of the 2018 ACM
  symposium on principles of distributed computing}, 2018.

\bibitem{malavolta2019anonymous}
G.~Malavolta, P.~Moreno-Sanchez, C.~Schneidewind, A.~Kate, and M.~Maffei,
  ``Anonymous multi-hop locks for blockchain scalability and
  interoperability.'' in \emph{NDSS}, 2019.

\bibitem{van2019specification}
R.~van~der Meyden, ``On the specification and verification of atomic swap smart
  contracts,'' in \emph{IEEE ICBC}, 2019.

\bibitem{miraz2019atomic}
M.~H. Miraz and D.~C. Donald, ``Atomic cross-chain swaps: development,
  trajectory and potential of non-monetary digital token swap facilities,''
  \emph{Annals of Emerging Technologies in Computing (AETiC) Vol}, 2019.

\bibitem{zie2019extending}
J.-Y. Zie, J.-C. Deneuville, J.~Briffaut, and B.~Nguyen, ``Extending atomic
  cross-chain swaps,'' in \emph{Data Privacy Management, Cryptocurrencies and
  Blockchain Technology}, 2019.

\bibitem{maxwell2016firstZKCP}
\BIBentryALTinterwordspacing
G.~Maxwell, ``The first successful zero-knowledge contingent payment.''
  [Online]. Available:
  \url{https://bitcoincore.org/en/2016/02/26/zero-knowledge-contingent-payments-announcement/}
\BIBentrySTDinterwordspacing

\bibitem{campanelli2017zero}
M.~Campanelli, R.~Gennaro, S.~Goldfeder, and L.~Nizzardo, ``Zero-knowledge
  contingent payments revisited: Attacks and payments for services,'' in
  \emph{Proceedings of the 2017 ACM CCS}, 2017.

\bibitem{banasik2016efficient}
W.~Banasik, S.~Dziembowski, and D.~Malinowski, ``Efficient zero-knowledge
  contingent payments in cryptocurrencies without scripts,'' in \emph{European
  Symposium on Research in Computer Security}, 2016.

\bibitem{fuchsbauer2019wi}
G.~Fuchsbauer, ``Wi is not enough: Zero-knowledge contingent (service) payments
  revisited,'' in \emph{Proceedings of the 2019 ACM CCS}, 2019.

\bibitem{bursuc2019contingent}
S.~Bursuc and S.~Kremer, ``Contingent payments on a public ledger: models and
  reductions for automated verification,'' in \emph{European Symposium on
  Research in Computer Security}, 2019.

\bibitem{moser2016bitcoin}
M.~M{\"o}ser, I.~Eyal, and E.~G. Sirer, ``Bitcoin covenants,'' in
  \emph{Financial Cryptography and Data Security}, 2016.

\bibitem{mccorry2018preventing}
P.~McCorry, M.~M{\"o}ser, and S.~T. Ali, ``Why preventing a cryptocurrency
  exchange heist isn’t good enough,'' in \emph{Cambridge International
  Workshop on Security Protocols}, 2018.

\bibitem{bishop2019vaults}
\BIBentryALTinterwordspacing
{Bryan Bishop}, ``Bitcoin vaults with anti-theft recovery/clawback
  mechanisms.'' [Online]. Available:
  \url{https://lists.linuxfoundation.org/pipermail/bitcoin-dev/2019-August/017231.html}
\BIBentrySTDinterwordspacing

\bibitem{zamyatin2019xclaim}
A.~Zamyatin, D.~Harz, J.~Lind, P.~Panayiotou, A.~Gervais, and W.~Knottenbelt,
  ``Xclaim: Trustless, interoperable, cryptocurrency-backed assets,'' in
  \emph{2019 IEEE S\&P}, 2019.

\bibitem{bonneau2016buy}
J.~Bonneau, ``Why buy when you can rent?'' in \emph{Financial Cryptography and
  Data Security}, 2016.

\bibitem{mccorry2018smart}
P.~McCorry, A.~Hicks, and S.~Meiklejohn, ``Smart contracts for bribing
  miners,'' in \emph{FC}, 2018.

\bibitem{judmayer2019pay}
A.~Judmayer, N.~Stifter, A.~Zamyatin, I.~Tsabary, I.~Eyal, P.~Ga{\v{z}}i,
  S.~Meiklejohn, and E.~Weippl, ``Pay-to-win: Incentive attacks on
  proof-of-work cryptocurrencies,'' 2019.

\bibitem{winzer2019temporary}
F.~Winzer, B.~Herd, and S.~Faust, ``Temporary censorship attacks in the
  presence of rational miners,'' in \emph{2019 IEEE EuroS\&PW}, 2019.

\bibitem{harris2020flood}
J.~Harris and A.~Zohar, ``Flood \& loot: A systemic attack on the lightning
  network,'' in \emph{Proceedings ACM AFT}, 2020.

\bibitem{deudney1983whole}
D.~Deudney, \emph{Whole Earth Security: A Geopolitics of Peace. Worldwatch
  Paper 55.}, 1983.

\bibitem{asgaonkar2019solving}
A.~Asgaonkar and B.~Krishnamachari, ``Solving the buyer and seller’s dilemma:
  A dual-deposit escrow smart contract for provably cheat-proof delivery and
  payment for a digital good without a trusted mediator,'' in \emph{IEEE ICBC},
  2019.

\bibitem{tadelis2013game}
S.~Tadelis, \emph{Game theory: an introduction}, 2013.

\bibitem{canetti2007universally}
R.~Canetti, Y.~Dodis, R.~Pass, and S.~Walfish, ``Universally composable
  security with global setup,'' in \emph{TCC}, 2007.

\bibitem{roughgarden2010algorithmic}
T.~Roughgarden, ``Algorithmic game theory,'' \emph{Communications of the ACM},
  2010.

\bibitem{wiki2020bitcoinScript}
\BIBentryALTinterwordspacing
Wikipedia, ``Bitcoin script.'' [Online]. Available:
  \url{https://en.bitcoin.it/wiki/Script}
\BIBentrySTDinterwordspacing

\bibitem{solidity}
\BIBentryALTinterwordspacing
Ethereum. (2020) Solidity language. [Online]. Available:
  \url{https://github.com/ethereum/solidity}
\BIBentrySTDinterwordspacing

\bibitem{daian2020flash}
P.~Daian, S.~Goldfeder, T.~Kell, Y.~Li, X.~Zhao, I.~Bentov, L.~Breidenbach, and
  A.~Juels, ``Flash boys 2.0: Frontrunning in decentralized exchanges, miner
  extractable value, and consensus instability,'' in \emph{2020 IEEE S\&P}.

\bibitem{cointelegraph2020mev}
\BIBentryALTinterwordspacing
A.~Shevchenko. (2020) Researcher suggests miners are manipulating ethereum
  blocks to exploit defi. [Online]. Available:
  \url{https://cointelegraph.com/news/researcher-suggests-miners-are-manipulating-ethereum-blocks-to-exploit-defi}
\BIBentrySTDinterwordspacing

\bibitem{felten2020mev}
\BIBentryALTinterwordspacing
E.~Felten. (2020) Meva (what is it good for?). [Online]. Available:
  \url{https://medium.com/offchainlabs/meva-what-is-it-good-for-de8a96c0e67c}
\BIBentrySTDinterwordspacing

\bibitem{coindance2020bitcoinClientDistribution}
\BIBentryALTinterwordspacing
C.~Dance. (2020) Bitcoin client software distribution. [Online]. Available:
  \url{https://coin.dance/nodes}
\BIBentrySTDinterwordspacing

\bibitem{selten1965spieltheoretische}
R.~Selten, ``Spieltheoretische behandlung eines oligopolmodells mit
  nachfragetr{\"a}gheit: Teil i: Bestimmung des dynamischen
  preisgleichgewichts,'' \emph{Zeitschrift f{\"u}r die gesamte
  Staatswissenschaft/Journal of Institutional and Theoretical Economics}, 1965.

\bibitem{dannen2017introducing}
C.~Dannen, \emph{Introducing Ethereum and Solidity}, 2017.

\bibitem{blackshear2019move}
S.~Blackshear, E.~Cheng, D.~L. Dill, V.~Gao, B.~Maurer, T.~Nowacki, A.~Pott,
  S.~Qadeer, D.~R. Rain, S.~Sezer \emph{et~al.}, ``Move: A language with
  programmable resources,'' 2019.

\bibitem{baudet2018state}
M.~Baudet, A.~Ching, A.~Chursin, G.~Danezis, F.~Garillot, Z.~Li, D.~Malkhi,
  O.~Naor, D.~Perelman, and A.~Sonnino, ``State machine replication in the
  libra blockchain,'' 2018.

\bibitem{litecoin2013site}
\BIBentryALTinterwordspacing
{Litecoin Project}, ``Litecoin, open source {P2P} digital currency,'' 2014.
  [Online]. Available: \url{https://litecoin.org}
\BIBentrySTDinterwordspacing

\bibitem{hopwood2016zcash}
D.~Hopwood, S.~Bowe, T.~Hornby, and N.~Wilcox, ``Zcash protocol
  specification,'' \emph{GitHub: San Francisco, CA, USA}, 2016.

\bibitem{bitcoinCash2020website}
\BIBentryALTinterwordspacing
B.~Cash. (2020) Bitcoin cash. [Online]. Available:
  \url{https://www.bitcoincash.org/}
\BIBentrySTDinterwordspacing

\bibitem{rosenthal1981games}
R.~W. Rosenthal, ``Games of perfect information, predatory pricing and the
  chain-store paradox,'' \emph{Journal of Economic theory}, 1981.

\bibitem{fudenberg1991game}
D.~Fudenberg and J.~Tirole, ``Game theory, 1991,'' \emph{Cambridge,
  Massachusetts}, 1991.

\bibitem{myerson1991game}
R.~Myerson, ``Game theory: Analysis of conflict harvard univ,'' \emph{Press,
  Cambridge}, 1991.

\bibitem{van2002strategic}
E.~Van~Damme, ``Strategic equilibrium,'' \emph{Handbook of game theory with
  economic applications}, 2002.

\bibitem{watson2002strategy}
J.~Watson, \emph{Strategy: an introduction to game theory}, 2002.

\bibitem{cerny2014playing}
J.~Cerny, ``Playing general imperfect-information games using game-theoretic
  algorithms,'' Ph.D. dissertation, PhD thesis, Czech Technical University,
  2014.

\bibitem{bernheim1984rationalizable}
B.~D. Bernheim, ``Rationalizable strategic behavior,'' \emph{Econometrica:
  Journal of the Econometric Society}, 1984.

\bibitem{khabbazian2021timelocked}
M.~Khabbazian, T.~Nadahalli, and R.~Wattenhofer, ``Timelocked bribing.''

\bibitem{robinson2020darkForest}
\BIBentryALTinterwordspacing
D.~Robinson and G.~Konstantopoulos. (2020) Ethereum is a dark forest. [Online].
  Available:
  \url{https://medium.com/@danrobinson/ethereum-is-a-dark-forest-ecc5f0505dff}
\BIBentrySTDinterwordspacing

\bibitem{malinova2017market}
K.~Malinova and A.~Park, ``Market design with blockchain technology,''
  \emph{Available at SSRN 2785626}, 2017.

\bibitem{doweck2020multiparty}
Y.~Doweck and I.~Eyal, ``Multi-party timed commitments,'' 2020.

\bibitem{eskandari2019sok}
S.~Eskandari, S.~Moosavi, and J.~Clark, ``Sok: Transparent dishonesty:
  front-running attacks on blockchain,'' 2019.

\bibitem{zhou2020high}
L.~Zhou, K.~Qin, C.~F. Torres, D.~V. Le, and A.~Gervais, ``High-frequency
  trading on decentralized on-chain exchanges,'' \emph{arXiv preprint
  arXiv:2009.14021}, 2020.

\bibitem{munro2018fomo3ds}
\BIBentryALTinterwordspacing
A.~Munro. (2018) Fomo3d ethereum ponzi game r1 ends as hot play outmaneuvers
  bots. [Online]. Available:
  \url{https://www.finder.com.au/fomo3d-ethereum-ponzi-game-r1-ends-as-hot-play-outmaneuvers-bots}
\BIBentrySTDinterwordspacing

\bibitem{bentov2019tesseract}
I.~Bentov, Y.~Ji, F.~Zhang, L.~Breidenbach, P.~Daian, and A.~Juels,
  ``Tesseract: Real-time cryptocurrency exchange using trusted hardware,'' in
  \emph{Proceedings of the 2019 ACM CCS}, 2019.

\bibitem{prestwich2018minersArentFriends}
\BIBentryALTinterwordspacing
J.~Prestwich, ``Miners aren’t your friends,'' 2018. [Online]. Available:
  \url{https://blog.keep.network/miners-arent-your-friends-cde9b6e0e9ac}
\BIBentrySTDinterwordspacing

\bibitem{tsabary2018thegapgame}
I.~Tsabary and I.~Eyal, ``The gap game,'' in \emph{ACM CCS}, 2018.

\bibitem{sliwinskiblockchains}
J.~Sliwinski and R.~Wattenhofer, ``Blockchains cannot rely on honesty.''

\bibitem{arvindcutoff}
M.~Carlsten, H.~Kalodner, S.~M. Weinberg, and A.~Narayanan, ``On the
  instability of bitcoin without the block reward,'' in \emph{Proceedings of
  the 2016 ACM CCS}, 2016.

\bibitem{tsabary2019heb}
I.~Tsabary, A.~Spiegelman, and I.~Eyal, ``Heb: Hybrid expenditure blockchain,''
  \emph{arXiv}, 2019.

\bibitem{easley2019mining}
D.~Easley, M.~O'Hara, and S.~Basu, ``From mining to markets: The evolution of
  bitcoin transaction fees,'' \emph{Journal of Financial Economics}, 2019.

\bibitem{khalil2017revive}
R.~Khalil and A.~Gervais, ``Revive: Rebalancing off-blockchain payment
  networks,'' in \emph{Proceedings of the 2017 ACM CCS}, 2017.

\bibitem{gudgeon2019sok}
L.~Gudgeon, P.~Moreno-Sanchez, S.~Roos, P.~McCorry, and A.~Gervais, ``Sok: Off
  the chain transactions.'' \emph{IACR ePrint}, 2019.

\bibitem{aumayrgeneralized}
L.~Aumayr, O.~Ersoy, A.~Erwig, S.~Faust, K.~Host{\'a}kov{\'a}, M.~Maffei,
  P.~Moreno-Sanchez, and S.~Riahi, ``Generalized bitcoin-compatible channels.''

\bibitem{blockstream2020lightningImp}
\BIBentryALTinterwordspacing
Blockstream. (2020) Blockstream lightning implementation. [Online]. Available:
  \url{https://blockstream.com/lightning/}
\BIBentrySTDinterwordspacing

\bibitem{lightningLabs2020lightningImp}
\BIBentryALTinterwordspacing
L.~Labs. (2020) Lightning labs lightning implementation. [Online]. Available:
  \url{https://lightning.engineering/}
\BIBentrySTDinterwordspacing

\bibitem{acinq2020lightningImp}
\BIBentryALTinterwordspacing
------. (2020) Acinq lightning implementation. [Online]. Available:
  \url{https://acinq.co/}
\BIBentrySTDinterwordspacing

\bibitem{raiden2020raidenImp}
\BIBentryALTinterwordspacing
Raiden. (2020) Raiden network. [Online]. Available:
  \url{https://raiden.network/}
\BIBentrySTDinterwordspacing

\bibitem{omg2020blockchainDesign}
\BIBentryALTinterwordspacing
O.~Network, ``Omg network blockchain design.'' [Online]. Available:
  \url{https://docs.omg.network/blockchain-design}
\BIBentrySTDinterwordspacing

\bibitem{avarikioti2020cerberus}
G.~Avarikioti, O.~S.~T. Litos, and R.~Wattenhofer, ``Cerberus channels:
  Incentivizing watchtowers for bitcoin,'' \emph{Financial Cryptography and
  Data Security (FC)}, 2020.

\bibitem{mccorry2019pisa}
P.~McCorry, S.~Bakshi, I.~Bentov, S.~Meiklejohn, and A.~Miller, ``Pisa:
  Arbitration outsourcing for state channels,'' in \emph{Proceedings of ACM
  AFT}, 2019.

\bibitem{khabbazian2019outpost}
M.~Khabbazian, T.~Nadahalli, and R.~Wattenhofer, ``Outpost: A responsive
  lightweight watchtower,'' in \emph{Proceedings of ACM AFT}, 2019.

\bibitem{wagner2019dispute}
E.~Wagner, A.~V{\"o}lker, F.~Fuhrmann, R.~Matzutt, and K.~Wehrle, ``Dispute
  resolution for smart contract-based two-party protocols,'' in \emph{IEEE
  ICBC}, 2019.

\bibitem{badertscher2017bitcoin}
C.~Badertscher, U.~Maurer, D.~Tschudi, and V.~Zikas, ``Bitcoin as a transaction
  ledger: A composable treatment,'' in \emph{Annual International Cryptology
  Conference}, 2017.

\bibitem{goldwasser1988digital}
S.~Goldwasser, S.~Micali, and R.~L. Rivest, ``A digital signature scheme secure
  against adaptive chosen-message attacks,'' \emph{SIAM Journal on computing},
  1988.

\bibitem{rogaway2004cryptographic}
P.~Rogaway and T.~Shrimpton, ``Cryptographic hash-function basics: Definitions,
  implications, and separations for preimage resistance, second-preimage
  resistance, and collision resistance,'' in \emph{International workshop on
  fast software encryption}, 2004.

\bibitem{dziembowski2018fairswap}
S.~Dziembowski, L.~Eckey, and S.~Faust, ``Fairswap: How to fairly exchange
  digital goods,'' in \emph{ACM CCS}, 2018.

\bibitem{eyal2014majority}
I.~{Eyal} and E.~G. Sirer, ``Majority is not enough: {Bitcoin} mining is
  vulnerable,'' in \emph{Financial Cryptography and Data Security}, 2014.

\bibitem{sapirshtein2016optimal}
A.~Sapirshtein, Y.~Sompolinsky, and A.~Zohar, ``Optimal selfish mining
  strategies in {Bitcoin},'' in \emph{FC}, 2016.

\bibitem{blockchain2020bitcoinPools}
\BIBentryALTinterwordspacing
{blockchain.info}, ``Bitcoin mining pools.'' [Online]. Available:
  \url{https://blockchain.info/pools}
\BIBentrySTDinterwordspacing

\bibitem{Andes2011kryptonite}
\BIBentryALTinterwordspacing
andes, ``Bitcoin's kryptonite: The $51\%$ attack.'' June 2011. [Online].
  Available: \url{https://bitcointalk.org/index.php?topic=12435}
\BIBentrySTDinterwordspacing

\bibitem{mirkin2020bdos}
M.~Mirkin, Y.~Ji, J.~Pang, A.~Klages-Mundt, I.~Eyal, and A.~Juels, ``Bdos:
  Blockchain denial-of-service,'' in \emph{ACM CCS}, 2020.

\bibitem{garay2015backbone}
J.~A. Garay, A.~Kiayias, and N.~Leonardos, ``The {Bitcoin} backbone protocol:
  Analysis and applications,'' in \emph{Advances in Cryptology - {EUROCRYPT}
  2015 - 34th Annual International Conference on the Theory and Applications of
  Cryptographic Techniques}, 2015.

\bibitem{pass2017analysis}
R.~Pass, L.~Seeman, and A.~Shelat, ``Analysis of the blockchain protocol in
  asynchronous networks,'' in \emph{Annual International Conference on the
  Theory and Applications of Cryptographic Techniques}, 2017.

\bibitem{kiffer2018better}
L.~Kiffer, R.~Rajaraman, and A.~Shelat, ``A better method to analyze blockchain
  consistency,'' in \emph{Proceedings of the 2018 ACM CCS}, 2018.

\bibitem{blockchain2018mempoolCount}
\BIBentryALTinterwordspacing
{blockchain.info}, ``Mempool transaction count.'' [Online]. Available:
  \url{https://blockchain.info/charts/mempool-count}
\BIBentrySTDinterwordspacing

\bibitem{lavi2019redesigning}
R.~Lavi, O.~Sattath, and A.~Zohar, ``Redesigning bitcoin's fee market,'' in
  \emph{The World Wide Web Conference}, 2019.

\bibitem{kiayias2016fair}
A.~Kiayias, H.-S. Zhou, and V.~Zikas, ``Fair and robust multi-party computation
  using a global transaction ledger,'' in \emph{Annual International Conference
  on the Theory and Applications of Cryptographic Techniques}, 2016.

\bibitem{cheng2019ekiden}
R.~Cheng, F.~Zhang, J.~Kos, W.~He, N.~Hynes, N.~Johnson, A.~Juels, A.~Miller,
  and D.~Song, ``Ekiden: A platform for confidentiality-preserving,
  trustworthy, and performant smart contracts,'' in \emph{2019 IEEE EuroS\&P},
  2019.

\bibitem{kiayias2020composable}
A.~Kiayias and O.~S.~T. Litos, ``A composable security treatment of the
  lightning network,'' in \emph{2020 IEEE 33rd Computer Security Foundations
  Symposium (CSF)}, 2020.

\bibitem{luu2016making}
L.~Luu, D.-H. Chu, H.~Olickel, P.~Saxena, and A.~Hobor, ``Making smart
  contracts smarter,'' in \emph{Proceedings of the 2016 ACM CCS}, 2016.

\bibitem{wohrer2018smart}
M.~Wohrer and U.~Zdun, ``Smart contracts: security patterns in the ethereum
  ecosystem and solidity,'' in \emph{2018 International Workshop on Blockchain
  Oriented Software Engineering (IWBOSE)}, 2018.

\bibitem{huang2019smart}
Y.~Huang, Y.~Bian, R.~Li, J.~L. Zhao, and P.~Shi, ``Smart contract security: A
  software lifecycle perspective,'' \emph{IEEE Access}, 2019.

\bibitem{zhou2018security}
E.~Zhou, S.~Hua, B.~Pi, J.~Sun, Y.~Nomura, K.~Yamashita, and H.~Kurihara,
  ``Security assurance for smart contract,'' in \emph{2018 9th IFIP
  International Conference on New Technologies, Mobility and Security (NTMS)},
  2018.

\bibitem{delmolino2016step}
K.~Delmolino, M.~Arnett, A.~Kosba, A.~Miller, and E.~Shi, ``Step by step
  towards creating a safe smart contract: Lessons and insights from a
  cryptocurrency lab,'' in \emph{International conference on financial
  cryptography and data security}, 2016.

\bibitem{atzei2017survey}
N.~Atzei, M.~Bartoletti, and T.~Cimoli, ``A survey of attacks on ethereum smart
  contracts (sok),'' in \emph{International conference on principles of
  security and trust}, 2017.

\bibitem{nikolic2018finding}
I.~Nikoli{\'c}, A.~Kolluri, I.~Sergey, P.~Saxena, and A.~Hobor, ``Finding the
  greedy, prodigal, and suicidal contracts at scale,'' in \emph{Proceedings of
  the 34th Annual Computer Security Applications Conference}, 2018.

\bibitem{breidenbach2017parityBug}
\BIBentryALTinterwordspacing
L.~Breidenbach, P.~Daian, A.~Jules, and E.~G. Sirer, ``An in-depth look at the
  parity multisig bug,'' 2017. [Online]. Available:
  \url{https://hackingdistributed.com/2017/07/22/deep-dive-parity-bug/}
\BIBentrySTDinterwordspacing

\bibitem{bhargavan2016formal}
K.~Bhargavan, A.~Delignat-Lavaud, C.~Fournet, A.~Gollamudi, G.~Gonthier,
  N.~Kobeissi, N.~Kulatova, A.~Rastogi, T.~Sibut-Pinote, N.~Swamy
  \emph{et~al.}, ``Formal verification of smart contracts: Short paper,'' in
  \emph{Proceedings of the 2016 ACM Workshop on Programming Languages and
  Analysis for Security}, 2016.

\bibitem{canetti2014practical}
R.~Canetti, A.~Jain, and A.~Scafuro, ``Practical uc security with a global
  random oracle,'' in \emph{Proceedings of the 2014 ACM CCS}, 2014.

\bibitem{camenisch2018wonderful}
J.~Camenisch, M.~Drijvers, T.~Gagliardoni, A.~Lehmann, and G.~Neven, ``The
  wonderful world of global random oracles,'' in \emph{Annual International
  Conference on the Theory and Applications of Cryptographic Techniques}, 2018.

\bibitem{mertens1981stochastic}
J.-F. Mertens and A.~Neyman, ``Stochastic games,'' \emph{International Journal
  of Game Theory}, 1981.

\bibitem{shoham2008multiagent}
Y.~Shoham and K.~Leyton-Brown, \emph{Multiagent systems: Algorithmic,
  game-theoretic, and logical foundations}, 2008.

\bibitem{osborne1994course}
M.~J. Osborne and A.~Rubinstein, \emph{A course in game theory}, 1994.

\bibitem{zermelo1913anwendung}
E.~Zermelo, ``{\"U}ber eine anwendung der mengenlehre auf die theorie des
  schachspiels,'' in \emph{Proceedings of the fifth international congress of
  mathematicians}, 1913.

\bibitem{aumann1995backward}
R.~J. Aumann, ``Backward induction and common knowledge of rationality,''
  \emph{Games and Economic Behavior}, 1995.

\bibitem{kaminski2017backward}
M.~M. Kami{\'n}ski, ``Backward induction: Merits and flaws,'' \emph{Studies in
  Logic, Grammar and Rhetoric}, 2017.

\bibitem{bitcoincore2020website}
\BIBentryALTinterwordspacing
B.~Core, ``Bitcoin core.'' [Online]. Available:
  \url{https://bitcoin.org/en/bitcoin-core/}
\BIBentrySTDinterwordspacing

\bibitem{geth2020website}
\BIBentryALTinterwordspacing
G.~Ethereum, ``Geth.'' [Online]. Available: \url{https://geth.ethereum.org/}
\BIBentrySTDinterwordspacing

\bibitem{parity2020website}
\BIBentryALTinterwordspacing
P.~Wallet, ``Parity.'' [Online]. Available:
  \url{https://www.parity.io/ethereum/}
\BIBentrySTDinterwordspacing

\bibitem{ethstackexchange2020txordering}
\BIBentryALTinterwordspacing
E.~S. Exchange, ``Ethereum client transaction ordering.'' [Online]. Available:
  \url{https://ethereum.stackexchange.com/questions/6107/what-is-the-default-ordering-of-transactions-during-mining-in-e-g-geth/}
\BIBentrySTDinterwordspacing

\bibitem{p2sh2013bip}
\BIBentryALTinterwordspacing
G.~Andresen. (2013) Bip 16 : Pay to script hash. [Online]. Available:
  \url{https://github.com/bitcoin/bips/blob/master/bip-0016.mediawiki}
\BIBentrySTDinterwordspacing

\bibitem{segwit2017bip}
\BIBentryALTinterwordspacing
P.~W. Eric~Lombrozo, Johnson~Lau. (2017) Bip 141 : Segregated witness
  (consensus layer). [Online]. Available:
  \url{https://github.com/bitcoin/bips/blob/master/bip-0141.mediawiki}
\BIBentrySTDinterwordspacing

\bibitem{bitcoinTxFees2020}
\BIBentryALTinterwordspacing
{blockchain.info}, ``Bitcoin transaction fees,'' 2020. [Online]. Available:
  \url{https://www.blockchain.com/charts/mempool-state-by-fee-level}
\BIBentrySTDinterwordspacing

\bibitem{coindesk2020bitcoinPrice}
\BIBentryALTinterwordspacing
CoinDesk. (2020) Bitcoin price. [Online]. Available:
  \url{https://www.coindesk.com/price/bitcoin}
\BIBentrySTDinterwordspacing

\bibitem{ethFoundation2020smartcontractlibraries}
\BIBentryALTinterwordspacing
E.~developer documents. (2020) Smart contract libraries. [Online]. Available:
  \url{https://ethereum.org/en/developers/docs/smart-contracts/libraries/}
\BIBentrySTDinterwordspacing

\bibitem{liquality2020atomicSwapEth}
\BIBentryALTinterwordspacing
Liquality. (2020) Liquality btc-eth atomic swap, ethereum transaction.
  [Online]. Available: \url{shorturl.at/lpDP7}
\BIBentrySTDinterwordspacing

\bibitem{arnosti2018bitcoin}
N.~Arnosti and S.~M. Weinberg, ``Bitcoin: A natural oligopoly,'' \emph{arXiv
  preprint arXiv:1811.08572}, 2018.

\bibitem{lightning2020topCapacityChannel}
\BIBentryALTinterwordspacing
1ml. (2020) Lightning top capacity channel. [Online]. Available:
  \url{https://1ml.com/channel/684407505443356673}
\BIBentrySTDinterwordspacing

\bibitem{lightning2020topCapacityClosingTx}
\BIBentryALTinterwordspacing
Blockstream. (2020) Lightning top capacity channel closing transaction.
  [Online]. Available: \url{shorturl.at/jwNO7}
\BIBentrySTDinterwordspacing

\bibitem{litecoin2017atomicSwap}
\BIBentryALTinterwordspacing
insight.litecore. (2017) Litecoin atomic swap spend. [Online]. Available:
  \url{https://insight.litecore.io/tx/6c27cffab8a86f1b3be1ebe7acfbbbdcb82542c5cfe7880fcca60eab36747037}
\BIBentrySTDinterwordspacing

\bibitem{liquality2020site}
\BIBentryALTinterwordspacing
Liquality. (2020) Liquality atomic swaps. [Online]. Available:
  \url{https://liquality.io/swap/}
\BIBentrySTDinterwordspacing

\bibitem{liquality2020atomicSwapBtc}
\BIBentryALTinterwordspacing
------. (2020) Liquality btc-eth atomic swap, bitcoin transaction. [Online].
  Available:
  \url{https://blockstream.info/address/bc1qhmjx0mk3p2trxe829lxgc9hwqjdzv0ueshmmgl7yfqdlnje45q4st7q9q8}
\BIBentrySTDinterwordspacing

\bibitem{tikhomirov2020quantitative}
S.~Tikhomirov, P.~Moreno-Sanchez, and M.~Maffei, ``A quantitative analysis of
  security, anonymity and scalability for the lightning network.'' \emph{IACR
  Cryptol. ePrint Arch.}, 2020.

\bibitem{mizrahi2020congestion}
A.~Mizrahi and A.~Zohar, ``Congestion attacks in payment channel networks,''
  \emph{arXiv preprint arXiv:2002.06564}, 2020.

\bibitem{functionalfoundry2020htlcImplementation}
\BIBentryALTinterwordspacing
F.~Foundry. (2020) Htlc solidity implementation. [Online]. Available:
  \url{https://github.com/functionalfoundry/ethereum-htlc}
\BIBentrySTDinterwordspacing

\end{thebibliography}

\appendices
\renewcommand\thefigure{\arabic{figure}}    

	\section{Security Proof}
\label{appendix:sec_proof}

This section is dedicated to proving Lemma~\ref{lemma:valid_transactions}, and is organized as follows.

We focus on the scenario where~$\userB$ draws~$\aliceSecret$ and~$\bobSecret$, generates~$\contractSecretAlice = \hashFunctionWithInput{\aliceSecret}$ and~$\contractSecretBob=\hashFunctionWithInput{\bobSecret}$, and potentially shares~$\aliceSecret$ with~$\userA$ during the execution.
The alternative scenario where~$\userA$ generates~$\aliceSecret$ is similar, but nuanced differently, hence we omit its formalization in favor of concise presentation.

We prove security under static corruptions.
Additionally, we describe some of the protocols and ideal functionalities to operate according to~\emph{phases}, in which they except to receive certain messages.
Honest parties and ideal functionalities both ignore unknown messages and messages they expect to receive at other phases than their current one.

We begin by presenting the model for the UC framework, formalizing the relaxed \madhtlc{} predicate as a function, and presenting~$\idealFunctionalityHash$ and~$\idealFunctionalityMempool$ global ideal functionalities~(\S\ref{appendix:uc_model}).
We follow by formalizing~$\protocolForUC$~(\S\ref{appendix:uc_protocol}), detailing~$\idealFunctionalityMadhtlc$ and proving it satisfies the properties described in Lemma~\ref{lemma:valid_transactions}~(\S\ref{appendix:uc_ideal_func}), and conclude by showing~$\protocolForUC$ UC-realizes~$\idealFunctionalityMadhtlc$ by presenting suitable simulators~(\S\ref{appendix:indistinguishability_proof}).

\subsection{Model and Definitions} 
\label{appendix:uc_model}		
\subsubsection{Communication Model}
\label{appendix:uc_communication_model}

We assume a synchronous network where the participants have secure, FIFO channels among themselves, and the adversary can observe a message being sent over but not its content.
Communication with ideal functionalities is instantaneous, and leaks information explicitly detailed by the ideal functionality.

The environment can instruct ideal functionalities in the hybrid world to delay messages up to the synchrony bound.
However, we consider a simulator that can delay these messages in a similar manner in the ideal world.
Both of these can be realized by specific instructions through the~\emph{influence port} to the functionalities~\cite{dziembowski2018fairswap}, which we omit for simplicity.

\subsubsection{Hash Function~$\hashFunction$}
\label{appendix:uc_hash_func}

For the security analysis we model~$\hashFunction$ as a global random oracle ideal functionality~$\idealFunctionalityHash$~\cite{canetti2007universally,canetti2014practical,camenisch2018wonderful}.
Functionality~$\idealFunctionalityHash$ takes queries of variable length and responds with values of length~$\secParameter$ such that responses for new queries are chosen uniformly at random, and that additional queries of the same value always result with the same response\footnote{Our construction does not require marking queries as illegitimate~\cite{canetti2014practical} or programmability~\cite{camenisch2018wonderful,dziembowski2018fairswap}.}.
We present~$\idealFunctionalityHash$ in Functionality~\ref{functionality:hash}.

\begin{functionalityEnv} 
	
	Ideal functionality~$\idealFunctionalityHash$ is a global random oracle, receiving queries from any party or ideal functionality.
	It maps values~$\hashVarQuery \in \rangeGeneral$ to~$\hashVarResponse \in \rangeSecParameter$, and internally store mappings as an initially empty set~$\hashPrivatehSet$.
	
	\headline{Query} 
	
	Upon receiving~$\left(\id,\hashVarQuery\right)$: 
	If~$\exists \hashVarResponse : \left(\hashVarQuery,\hashVarResponse\right)\in \hashPrivatehSet$ then return~$\hashVarResponse$. 
	Otherwise, draw~$\hashVarResponse \getsRandom \rangeSecParameter$, add~$\left(\hashVarQuery,\hashVarResponse\right)$ to~$\hashPrivatehSet$, and return~$\hashVarResponse$.
	
	\caption{$\idealFunctionalityHash$ in the hybrid world.}
	\label{functionality:hash}
\end{functionalityEnv} 

\subsubsection{Mempool and Blockchain Projection Functionality~$\idealFunctionalityMempool$}
\label{appendix:uc_mempool}

We now present the MBP functionality~$\idealFunctionalityMempool$ (Functionality~\ref{functionality:mh_mempool}), detailing the setup, initiation and redeeming of a single contract in session id~$\id$.

For the setup, it allows~$\userA$ and~$\userB$ to agree on the contract parameters. 
This represents the fact both parties should be aware of the contract, and possibly one or both of them should sign the setup transaction, according to the use-case (e.g.,~\cite{herlihy2018atomic,malavolta2019anonymous,van2019specification,miraz2019atomic,zie2019extending,wagner2019dispute,poon2016bitcoin,blockstream2020lightningImp,lightningLabs2020lightningImp,acinq2020lightningImp,raiden2020raidenImp,omg2020blockchainDesign,maxwell2016firstZKCP,campanelli2017zero,banasik2016efficient,fuchsbauer2019wi,bursuc2019contingent,moser2016bitcoin, mccorry2018preventing,bishop2019vaults,zamyatin2019xclaim}).
Note that the analysis thus disregards other contract instances that exclude either of the parties.
The result of the setup is an initiating transaction, which either~$\userA$ or~$\userB$ can publish in the mempool. 

Then, a miner~$\userO$ can confirm the transaction, thus initiating the contract.
From that point onward, any party~$\anyParty \in \left\{\userA,\userB,\userO\right\}$ can publish transactions trying to redeem the initiated contract, which are evaluated based on the provided input data and the contract predicate.

We emphasize that the setup, publication and redeeming of a contract by parties~$\userA$ and~$\userB$ all correspond to exogenous events, dictated by the system using the contract as a building block. 

$\idealFunctionalityMempool$ internally maintains two hash digests~$\biInternalImages{1},\biInternalImages{2}$ of the contract, both are initially~$\bot$.
It also maintains an indicator~$\biPublished$ representing if the initiation transaction is published, and 
an indicator~$\biInit$ if it is confirmed.
Finally, it also holds $\biInternalPreimages{1}$ and $\biInternalPreimages{2}$ indicators representing if the preimages of~$\hashOutputVar{1}$ and $\hashOutputVar{2}$ were revealed, respectively.
Initial indicator values are all~$0$.

$\idealFunctionalityMempool$ accepts the following queries.
First,~$\userB$ can set the~$\biInternalImages{1}$ and~$\biInternalImages{2}$ by providing these values.
Then,~$\userA$ can acknowledge the provided digests, resulting with~$\idealFunctionalityMempool$ creating an initiating transaction~$\biTx$, and sending it to both~$\userA$ and~$\userB$.
Following that, it accepts queries from either~$\userA$ or~$\userB$ to publish~$\biTx$.
Later on,~$\userO$ can confirm the initiating transaction, initiating the contract.
Finally, any party~$\anyParty \in \left\{\userA, \userB, \userO\right\}$ can publish a transaction trying to redeem the initiated contract using a redeem path~$\redeemPathIndex$.
The redeeming attempt is evaluated by~$\publishingFunctionName$, considering the provided redeem path~$\redeemPathIndex$, the provided preimages, the redeeming party, and the values of~$\biInternalPreimages{1}$ and $\biInternalPreimages{2}$.
The latter two are also updated according to the provided preimages.

\begin{functionalityEnv} 
	
	Ideal functionality~$\idealFunctionalityMempool$ in the hybrid world represents the mempool projection of the contract, including its setup, initiation and redeeming transaction publication for session id~$\id$.
	It interacts with parties~$\userA, \userB, \userO$, the adversary~$\adversary$, and functionality~$\idealFunctionalityHash$.
	
	It maintains the following inner variable with the following default values:~$\biInternalImages{1} \gets \bot,\biInternalImages{2} \gets \bot, \biTx \gets \bot, \biPublished \gets 0, \biInit \gets 0 ,\biInternalPreimages{1} \gets 0$ and $\biInternalPreimages{2} \gets 0$.

	It accepts queries of the following types:
	
	\begin{itemize}

		\item Upon receiving~$\left(\mhSetupB,\id,\hashOutputVar{1},\hashOutputVar{2}\right)$ from~$\userB$ when~$\biInternalImages{1} = \bot$ and~$\biInternalImages{2} = \bot$, set~$\biInternalImages{1} = \hashOutputVar{1}$ and~$\biInternalImages{2} = \hashOutputVar{2}$, and leak~$\left(\mhSetupB,\id,\hashOutputVar{1},\hashOutputVar{2}\right)$ to~$\adversary$.
		
		\item Upon receiving~$\left(\mhSetupA,\id\right)$ from~$\userA$ when~$\biInternalImages{1} \ne \bot$ and~$\biInternalImages{2} \ne \bot$,
		create a transaction initiating the contract with parameters~$\biInternalImages{1}$ and~$\biInternalImages{2}$, store it as~$\biTx$, leak~$\left(\mhSetupA,\id\right)$ to~$\adversary$, and send~$\biTx$ to~$\userA$ and~$\userB$.
		
		\item Upon receiving~$\left(\mhPublish,\id,\biTx\right)$ from any party~$\anyParty \in \left\{\userA,\userB,\userO\right\}$ when~$\biPublished = 0$, set~$\biPublished \gets 1$, leak~$\left(\mhPublish,\id,\biTx\right)$ to~$\adversary$ and send~$\left(\mhPublish,\id,\biTx\right)$ to all parties.

		\item Upon receiving~$\left(\mhInit,\id,\biTx\right)$ from~$\userO$ when~$\biPublished = 1$, 
		set~$\biInit \gets 1$, leak~$\left(\mhInit,\id,\biTx\right)$ to~$\adversary$ and send~$\left(\mhInit,\id,\biTx\right)$ to all parties.

		\item Upon receiving~$\left(\mhTransaction,\id,\biTx,\hashInputVar{1},\hashInputVar{2},\redeemPathIndex\right)$ from any party~$\anyParty \in \left\{\userA, \userB, \userO\right\}$ such that~$\redeemPathIndex\in \left\{\mhRedeemPathOne,\mhRedeemPathTwo,\mhRedeemPathThree,\mhRedeemPathFour,\mhRedeemPathFive\right\}$, set~$\biInternalPreimages{1} \gets \biInternalPreimages{1} \lor \left(\left( \hashInputVar{1} \ne \bot \right) \land \left( \invokeFunctionality{\idealFunctionalityHash}{\id,\hashInputVar{1}} = \contractSecretAlice \right)\right)$ and~$\biInternalPreimages{2} \gets \biInternalPreimages{2} \lor \left(\left( \hashInputVar{2} \ne \bot \right) \land \left( \invokeFunctionality{\idealFunctionalityHash}{\id,\hashInputVar{2}} = \contractSecretBob\right)\right)$, denote~$\biResult \gets \publishingFunctionName{}\left(\redeemPathIndex,\anyParty,\biInternalPreimages{1},\biInternalPreimages{2}\right)$, leak~$\left(\mhTransaction,\id,\hashInputVar{1},\hashInputVar{2},\hashOutputVar{1},\hashOutputVar{2},\redeemPathIndex,\anyParty,\biInternalPreimages{1},\biInternalPreimages{2},\biResult\right)$ to~$\adversary$, and send~$\biResult$ to~$\anyParty$.
		
	\end{itemize}
	
	\caption{$\idealFunctionalityMempool$ in the hybrid world.}
	\label{functionality:mh_mempool}
\end{functionalityEnv} 

\subsection{Protocol~$\protocolForUC$}
\label{appendix:uc_protocol}

We now present protocol~$\protocolForUC$ (Protocol~\ref{protocol:madhtlc}), the formalization of~Protocol~\ref{protocol:madhtlc_presentation} with~$\idealFunctionalityHash$ and~$\idealFunctionalityMempool$.
$\protocolForUC$ is run by~$\userA$ and~$\userB$, but also by~$\userO$, representing any system miner.
We do not include any other non-miner entities as part of this formalization as all their possible actions are covered by~$\userO$.
The participants running the protocol proceed in phases, corresponding to the setup, initiation and publication of redeeming transactions.

Recall~$\madhtlc{}$ is used as a building-block in more elaborate constructions by~$\userA$ and~$\userB$, and~$\userA$ and~$\userB$ execute steps in the protocol due to exogenous events.
For example, for payment channels,~$\userA$ and~$\userB$ setup an initiating transaction to create a new channel state,~$\userB$ shares the first preimage with~$\userA$ to revoke the state, either~$\userA$ or~$\userB$ publish that transaction to unilaterally close the channel, and either~$\userA$ or~$\userB$ redeem it to retrieve tokens after the channel is closed.
Similarly,~$\userO$ can include transactions only when she gets to create a block, which corresponds to the random process of mining.
Finally, any party can publish transactions trying to redeem an initiated contract at will~-- providing any input it chooses, and using any redeem path.

To capture all of these we model the protocol steps as invocations by environment~$\funcEnv$.
Note that different invocation orders yield different results, and parties are not bound to terminate at all.
Specifically, correct executions may not include some of the protocol steps.
For example,~$\userB$ shares~$\aliceSecret$ to revoke a previous channel state, which occurs only if~$\userB$ and~$\userA$ agree on advancing to a next channel state.
Similarly, publication of the initiating transaction corresponds to one of the parties unilaterally closing the channel, which often enough is avoided by the parties agreeing on closing the channel with an alternative transaction.
In that case the contract does not make it on the blockchain.

Protocol~$\protocolForUC$ (Protocol~\ref{protocol:madhtlc}) operates as follows.
Each party~$\anyParty \in \left\{\userA,\userB,\userO\right\}$ maintains a state variable~$\protocolState{\anyParty}$, registering its view on the protocol~\emph{phase}.
Parties act based on their current~$\protocolState{\anyParty}$ value. 

First, in the~$\stateSetup$ phase,~$\userB$ draws~$\aliceSecret \getsRandom \rangeSecParameter, \bobSecret \getsRandom \rangeSecParameter$, to be used as the secret preimages.
He also uses~$\idealFunctionalityHash$ to derive~$\contractSecretAlice \gets \invokeFunctionality{\idealFunctionalityHash}{\id,\aliceSecret},\contractSecretBob \gets \invokeFunctionality{\idealFunctionalityHash}{\id,\bobSecret}$, thus obtaining all the relevant contract parameters.
He then sends~$\left(\mhSetupB,\id,\contractSecretAlice,\contractSecretBob\right)$ to~$\idealFunctionalityMempool$.
When he receives~$\simpleTxName$ from~$\idealFunctionalityMempool$ it stores it, and proceeds to the~$\stateInit$ phase.

$\userA$ waits to receive~$\left(\stateSetup,\id,\contractSecretAlice,\contractSecretBob\right)$ from~$\idealFunctionalityMempool$ and, once she does, she stores~$\contractSecretAlice$ and~$\contractSecretBob$.
Then, upon receiving input from~$\funcEnv$, she sends~$\left(\mhSetupA,\id\right)$ to~$\idealFunctionalityMempool$, waits to receives~$\simpleTxName$ from~$\idealFunctionalityMempool$, stores it, and proceeds to the next~$\stateInit$ phase.
Both parties have now agreed on the setup transaction. 

In the~$\stateInit$ phase,~$\userB$ can share~$\hashInputVar{}$ with~$\userA$, who accepts~$\hashInputVar{}$ only if~$\contractSecretAlice = \invokeFunctionality{\idealFunctionalityHash}{\id,\hashInputVar{}}$.
Additionally, either~$\userA$ or~$\userB$ can publish an initiating by sending~$\left(\mhPublish,\id,\simpleTxName\right)$ to~$\idealFunctionalityMempool$.
$\idealFunctionalityMempool$ notifies all parties of this invocation, causing~$\userA$ and~$\userB$ to proceed to the next~$\stateRedeeming$ phase. 
The transaction is now in the mempool. 

$\userO$ awaits to see the published initiating transaction, and then invokes~$\idealFunctionalityMempool$ to initiate it.
She then proceed to the next~$\stateRedeeming$ phase. 
The transaction is now confirmed, i.e., on the blockchain. 

In the~$\stateRedeeming$ phase, any party can publish transactions attempting to redeem the initiated contract using the various redeem paths (using~$\idealFunctionalityMempool$).

\begin{protocolEnv} 
	
	Protocol~$\protocolForUC$ for session id~$\id$ run by~$\userA, \userB, \userO$, describes the contract setup, initiation, and publication of redeeming transactions.
	The participants interact with~$\idealFunctionalityHash$ and~$\idealFunctionalityMempool$ global ideal functionalities.
	Each party has a local state variable~$\protocolState{}$ where initially~$\protocolState{\userA} = \protocolState{\userB} = \stateSetup$ and~$\protocolState{\userO} = \stateInit$.
	
	\headline{For party~$\anyParty \in \left\{\userA,\userB,\userO\right\}$ when $\protocolState{\anyParty} = \stateSetup$}
	
	\textbf{$\userB$:} 
	Upon receiving input~$\left(\mhSetupB,\id\right)$ from~$\funcEnv$, draw~$\aliceSecret \getsRandom \rangeSecParameter, \bobSecret \getsRandom \rangeSecParameter$, and set~$\contractSecretAlice \gets \invokeFunctionality{\idealFunctionalityHash}{\id,\aliceSecret},\contractSecretBob \gets \invokeFunctionality{\idealFunctionalityHash}{\id,\bobSecret}$.
	Send~$\left(\mhSetupB,\id,\contractSecretAlice,\contractSecretBob\right)$ to~$\idealFunctionalityMempool$ and wait until receiving~$\simpleTxName$ back from~$\idealFunctionalityMempool$.
	Then, store~$\simpleTxName$ and set~$\protocolState{\userB} \gets \stateInit$.

	\textbf{$\userA$:} 
	\begin{itemize}
		
		\item Upon receiving~$\left(\mhSetupB,\id,\contractSecretAlice,\contractSecretBob\right)$ from~$\idealFunctionalityMempool$, store~$\contractSecretAlice,\contractSecretBob$.
		
		\item Upon receiving input~$\left(\mhSetupA,\id\right)$ from~$\funcEnv$ after previously receiving~$\left(\mhSetupB,\id,\contractSecretAlice,\contractSecretBob\right)$, send~$\left(\mhSetupA,\id\right)$ to~$\idealFunctionalityMempool$.
		Wait to receive~$\simpleTxName$ from~$\idealFunctionalityMempool$, store it, and set~$\protocolState{\userA} \gets \stateInit$.
		
	\end{itemize}

	\headline{For party~$\anyParty \in \left\{\userA,\userB,\userO\right\}$ when $\protocolState{\anyParty} = \stateInit$}	
	
	\textbf{$\userB$:} 
	Upon receiving input~$\left(\mhSharePreimage,\id\right)$ from~$\funcEnv$, send~$\left(\mhSharePreimage,\id,\aliceSecret\right)$ to~$\userA$.

	\textbf{$\userA$:} 
	Upon receiving~$\left(\mhSharePreimage,\id,\hashInputVar{}\right)$ from~$\userB$ such that~$\contractSecretAlice = \invokeFunctionality{\idealFunctionalityHash}{\id,\hashInputVar{}}$, set~$\aliceSecret \gets \hashInputVar{}$.

	\textbf{$\anyParty \in \left\{\userA, \userB\right\}$:} 
	
	\begin{itemize}
		
		\item Upon receiving input~$\left(\mhPublish,\id\right)$ from~$\funcEnv$, send~$\left(\mhPublish,\id,\simpleTxName\right)$ to~$\idealFunctionalityMempool$.
		
		\item Upon receiving~$\left(\mhInit,\id,\simpleTxName\right)$ from~$\idealFunctionalityMempool$, set~$\protocolState{\anyParty} \gets \stateRedeeming$.
		
	\end{itemize}

	\textbf{$\userO$:} 
	
	\begin{itemize}
		
		\item Upon receiving~$\left(\mhPublish,\id,\simpleTxName\right)$ from~$\idealFunctionalityMempool$, store~$\simpleTxName$, and set~$\protocolReceivedPublishedParams \gets 1$.
		
		\item Upon receiving input~$\left(\mhInit,\id\right)$ from~$\funcEnv$ when $\protocolReceivedPublishedParams = 1$, send~$\left(\mhInit,\id,\simpleTxName\right)$ to~$\idealFunctionalityMempool$, and set~$\protocolState{\userO} \gets \stateRedeeming$.
		
	\end{itemize}
	
	\headline{For party~$\anyParty \in \left\{\userA,\userB,\userO\right\}$ when $\protocolState{\anyParty} = \stateRedeeming$}

	\textbf{$\anyParty \in \left\{\userA, \userB, \userO\right\}$:} 
	Upon receiving input~$\left(\mhTransaction,\id,\redeemPathIndex\right)$ from~$\funcEnv$ such that~$\redeemPathIndex\in \left\{\mhRedeemPathOne,\mhRedeemPathTwo,\mhRedeemPathThree,\mhRedeemPathFour,\mhRedeemPathFive\right\}$:
	
	\begin{enumerate}
		
		\item If~$\redeemPathIndex\in \left\{\mhRedeemPathOne,\mhRedeemPathThree,\mhRedeemPathFive\right\}$ and $\aliceSecret$ was previously stored, set~$\protocolPreimagePassedValue{1} \gets \aliceSecret$, and~$\bot$ otherwise.
		
		\item If~$\redeemPathIndex\in \left\{\mhRedeemPathTwo,\mhRedeemPathThree,\mhRedeemPathFive\right\}$ and $\bobSecret$ was previously stored, set~$\protocolPreimagePassedValue{2} \gets \bobSecret$, and~$\bot$ otherwise.
		
		\item 
		Send~$\left(\mhTransaction,\id,\biTx,\protocolPreimagePassedValue{1},\protocolPreimagePassedValue{2},\redeemPathIndex\right)$ to~$\idealFunctionalityMempool$, and output the result.
		
	\end{enumerate}

	\caption{$\protocolForUC$ in the hybrid world.}
	\label{protocol:madhtlc}
\end{protocolEnv} 

\subsection{Indistinguishability Proof}
\label{appendix:indistinguishability_proof}

Therefore, to prove these properties also hold for~$\protocolForUC$ in the hybrid world, and thus proving Lemma~\ref{lemma:valid_transactions} itself, we only need to show that Protocol~$\protocolForUC$ UC-realizes~$\idealFunctionalityMadhtlc$.
For that, we need to show indistinguishability between runs in the hybrid and in the ideal worlds.

\begin{lemma}
	\label{lemma:indistinguishability}
	For any PPT environment~$\funcEnv$ and any PPT adversary~$\adversary$ that corrupts any subset of~$\left\{\userA,\userB,\userO\right\}$, there exists a PPT simulator~$\simulator$, such that an execution of~$\protocolForUC$ in the hybrid world with~$\adversary$ is computationally indistinguishable from an execution of~$\idealFunctionalityMadhtlc$ in the ideal world with~$\simulator$.
\end{lemma}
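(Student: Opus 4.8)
The plan is to exhibit, for each static corruption pattern of the subset of $\{\userA,\userB,\userO\}$ chosen by $\adversary$, a simulator $\simulator$ that runs $\adversary$ as a black box while internally emulating both global functionalities $\idealFunctionalityHash$ and $\idealFunctionalityMempool$, and that drives $\idealFunctionalityMadhtlc$ through its leakage and influence ports so that $\funcEnv$'s view is reproduced. First I would have $\simulator$ maintain the random-oracle table $\hashPrivatehSet$ itself, answering every query (whether from $\adversary$ or from the emulated $\idealFunctionalityMempool$) exactly as $\idealFunctionalityHash$ would, so that oracle responses are identically distributed in both worlds. $\simulator$ also keeps an emulated copy of $\idealFunctionalityMempool$'s state, including the indicators $\biInternalPreimages{1},\biInternalPreimages{2}$, together with its own preimage pair $\simAliceSecret,\simBobSecret$ and digests $\simContractSecretAlice,\simContractSecretBob$.

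Next I would split the preimage handling by the status of $\userB$. If $\userB$ is honest, $\idealFunctionalityMadhtlc$ leaks only $(\mhSetupB,\id)$; $\simulator$ then draws $\simAliceSecret,\simBobSecret \getsRandom \rangeSecParameter$, derives the digests through its emulated oracle, and feeds $(\mhSetupB,\id,\simContractSecretAlice,\simContractSecretBob)$ to $\adversary$. This is distributed exactly as in the hybrid world, since there $\userB$ draws from the same distribution and $\idealFunctionalityMempool$ leaks only the digests. If $\userB$ is corrupted, $\adversary$ supplies the digests in its $\mhSetupB$ message to the emulated $\idealFunctionalityMempool$; $\simulator$ records them and, on any matching oracle query (now or later), extracts the corresponding preimage from $\hashPrivatehSet$. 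For every subsequent phase, $\simulator$ reacts to each leakage that $\idealFunctionalityMadhtlc$ produces for an honest party's action by driving its emulated $\idealFunctionalityMempool$ so that $\adversary$ sees the same $\mhSetupA$, $\mhSharePreimage$, $\mhPublish$, $\mhInit$ and $\mhTransaction$ messages it would in the hybrid world, and conversely translates each action $\adversary$ takes on behalf of a corrupted party (a call to the emulated $\idealFunctionalityMempool$) into the matching input to $\idealFunctionalityMadhtlc$, forwarding the identical leakage back to $\adversary$.

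The heart of the argument is keeping the two indicators synchronized across all corruptions. In the ideal world $\idealFunctionalityMadhtlc$ sets $\mhFInternalPreimages{1},\mhFInternalPreimages{2}$ only for the honest-protocol publication patterns encoded in its $\mhTransaction$ handler (e.g.\ $\userA$ via $\mhRedeemPathOne$ when $\mhFSentPreimageA=1$, or $\userB$ via the paths it legitimately uses). Whenever a corrupted party's redeeming transaction supplies a genuinely valid preimage through a path not captured by that handler, $\simulator$ detects this from the preimage it observes on the emulated $\idealFunctionalityMempool$ (validated against $\hashPrivatehSet$) and issues the $\mhUpdate$ influence instruction to set the appropriate indicator. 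This guarantees that the $\publishingFunctionName$ output returned to every party, and hence $\funcEnv$'s view, agrees in both worlds.

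Finally I would bound the single bad event that breaks perfect equivalence: a corrupted party producing a valid preimage of a digest it never obtained from a published transaction or from $\userB$'s $\mhSharePreimage$ message. Since $\hashFunction$ is modeled as a global random oracle onto $\rangeSecParameter$, any such forgery requires guessing an unqueried preimage, which succeeds with probability at most $\text{poly}(\secParameter)/2^{\secParameter}$, negligible in $\secParameter$. Conditioned on this event not occurring, every leakage, every oracle answer, and every indicator update coincides, so the two executions are identically distributed and the residual gap is the negligible probability of the bad event. I expect the main obstacle to be the exhaustive case analysis showing that, for each corruption subset, $\simulator$ can always recover the preimages it needs in time to synchronize the indicators --- via $\hashPrivatehSet$ when $\userB$ is corrupted and via observed transactions otherwise --- i.e.\ that no corrupted party can learn a valid preimage strictly earlier than $\simulator$ does.
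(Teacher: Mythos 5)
Your proposal is correct and follows essentially the same route as the paper: a simulator that internally emulates $\idealFunctionalityHash$ and $\idealFunctionalityMempool$, draws $\simAliceSecret,\simBobSecret$ from the honest distribution when $\userB$ is honest and learns $\simContractSecretAlice,\simContractSecretBob$ from the corrupted $\userB$'s $\mhSetupB$ message otherwise, replays honest-party leakage from $\idealFunctionalityMadhtlc$ on the emulated functionalities, and keeps $\mhFInternalPreimages{i}$ synchronized with $\biInternalPreimages{i}$ via the $\mhUpdate$ influence port so that both worlds evaluate the same $\publishingFunctionName$. The only divergence is your final bad-event bound, which is superfluous: because the simulator verifies any adversarially supplied preimage against its emulated oracle and syncs the indicators through the influence port regardless of how the preimage was obtained, the simulation in the paper is exact rather than merely statistically close.
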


We prove by explaining how to construct the simulator~$\simulator$.
The gist of the approach is as follows. 
We note that in the hybrid world, both the honest and the corrupted parties interact with each other and with the ideal functionalities~$\idealFunctionalityHash$ and~$\idealFunctionalityMempool$.

However, in the ideal world the honest parties interact only with~$\idealFunctionalityMadhtlc$.
So,~$\simulator$ has two main responsibilities.
First, it has to portray honest party interactions towards the corrupted ones~-- this can be achieved by learning of the former actions through the leakage of~$\idealFunctionalityMadhtlc$, and then sending messages to the corrupted parties as if the honest parties were running protocol~$\protocolForUC$.

Additionally, it has to portray corrupted party messages towards the honest ones.
$\simulator$ receives the corrupted party messages sent towards the ideal functionalities and honest parties, and achieves this effect by passing inputs to~$\idealFunctionalityMadhtlc$ as the corrupted party, or through~$\idealFunctionalityMadhtlc$'s influence port.

\begin{proof}
	We detail how to design a simulator~$\simulator$ for any subset of corrupted parties. 
	$\simulator$ internally simulates~$\idealFunctionalityHash$ and~$\idealFunctionalityMempool$ to be used as detailed below. 
	We explaining how~$\simulator$ responds to the various possible events, and show why the response results in a run that is indistinguishable from that of a hybrid-world. 
	
	\paragraph*{Communication leakage} 
	In the hybrid world, the environment~$\funcEnv$ learns through~$\adversary$ about any message sent among parties, but not its content.
	Note that honest parties send messages after receiving inputs from~$\funcEnv$ according to~$\protocolForUC$, and corrupted parties according to~$\adversary$.
	Moreover,~$\adversary$ receives all messages sent to corrupted parties, and can leak them to~$\funcEnv$.
	
	In the ideal world honest parties do not send messages at all, and instead just forward their inputs to~$\idealFunctionalityMadhtlc$.
	So, to maintain indistinguishability,~$\simulator$ needs to create the same leakage, and in case the destination is a corrupted party, also actually send the message.
	However, this is easily achieved as honest parties forward their inputs to~$\idealFunctionalityMadhtlc$, which leaks these inputs to~$\simulator$.
	So,~$\simulator$ leaks as if the matching hybrid-world message was sent, and in case it is destined for a corrupted party, also sends that message.
	
	It remains to show the messages themselves are indistinguishable.

	\paragraph*{State indistinguishability} 
	In the hybrid world, all parties, honest and corrupted, communicate with~$\idealFunctionalityHash$ and~$\idealFunctionalityMempool$, thus affecting the inner states of these functionalities.
	However, in the ideal world, only the corrupted parties communicate with (the internally-simulated by~$\simulator$)~$\idealFunctionalityHash$ and~$\idealFunctionalityMempool$, while the honest parties communicate only with~$\idealFunctionalityMadhtlc$.
	The respective inner states determine the output values, and the aforementioned message content, hence~$\simulator$ has to keep in sync the states of internally-simulated~$\idealFunctionalityMempool$ and the ideal functionality~$\idealFunctionalityMadhtlc$.

	The first task of~$\simulator$ is to set~$\simContractSecretAlice$,~$\simContractSecretBob$ and~$\simTx$, representing the hash s and the initiating transaction that exist in the hybrid world.
	When~$\simulator$ operates based on leaked inputs of honest parties, it considers them with respect to ~$\simContractSecretAlice$,~$\simContractSecretBob$ and~$\simTx$.
	
	$\simulator$ sets~$\simContractSecretAlice$ and~$\simContractSecretBob$ differently, based on whether~$\userB$ is honest or corrupted.
	
	If~$\userB$ is honest, when it receives input~$\left(\mhSetupB,\id\right)$ from~$\funcEnv$
	in the hybrid world then according to~$\protocolForUC$ she draws~$\aliceSecret \getsRandom \rangeSecParameter, \bobSecret \getsRandom \rangeSecParameter$, and sets~$\contractSecretAlice \gets \invokeFunctionality{\idealFunctionalityHash}{\id,\aliceSecret},\contractSecretBob \gets \invokeFunctionality{\idealFunctionalityHash}{\id,\bobSecret}$.
	She additionally sends~$\left(\mhSetupB,\id,\contractSecretAlice,\contractSecretBob\right)$ to~$\idealFunctionalityMempool$.
	
	In the ideal world~$\left(\mhSetupB,\id\right)$ is passed to~$\idealFunctionalityMadhtlc$, which then leaks~$\left(\mhSetupB,\id\right)$ to~$\simulator$.
	So,~$\simulator$ internally-simulates~$\userB$ by drawing~$\simAliceSecret \getsRandom \rangeSecParameter, \simBobSecret \getsRandom \rangeSecParameter$, setting~$\simContractSecretAlice \gets \invokeFunctionality{\idealFunctionalityHash}{\id,\simAliceSecret},\simContractSecretBob \gets \invokeFunctionality{\idealFunctionalityHash}{\id,\simBobSecret}$ (through the internally simulated~$\idealFunctionalityHash$), internally-simulate sending~$\left(\mhSetupB,\id,\simContractSecretAlice,\simContractSecretBob\right)$ to~$\idealFunctionalityMempool$, and leaking similarly to~$\adversary$.
	
	Note that~$\simAliceSecret$ and~$\simBobSecret$ are indistinguishable from~$\aliceSecret$ and~$\bobSecret$ as they are drawn from the same distribution.
	Similarly,~$\simContractSecretAlice$ and~$\simContractSecretBob$ are indistinguishable from~$\contractSecretAlice$ and~$\contractSecretBob$.
	
	If~$\userB$ is corrupted, then~$\simulator$ sets~$\simContractSecretAlice \gets \contractSecretAlice$ and~$\simContractSecretBob \gets \contractSecretBob$ when it receives~$\left(\mhSetupB,\id,\contractSecretAlice,\contractSecretBob\right)$ sent towards~$\idealFunctionalityMempool$.
	Note that in this case,~$\simContractSecretAlice=\contractSecretAlice$ and~$\simContractSecretBob=\contractSecretBob$, but~$\simAliceSecret$ and~$\simBobSecret$ remain unset.
	
	The transaction~$\simTx$ is set by internally-simulating~$\idealFunctionalityMempool$ on inputs~$\simContractSecretAlice$ and~$\simContractSecretBob$.

	As stated, from this point onward~$\simulator$ keeps in sync values of (the internally simulated)~$\idealFunctionalityMempool$ and~$\idealFunctionalityMadhtlc$.
	We show what the respective variables are, and how~$\simulator$ manages to keep them in sync.
	
	We begin with~$\simContractSecretAlice,\simContractSecretBob$ representing in the ideal world~$\contractSecretAlice,\contractSecretBob$ set in the hybrid world.
	We now list possible steps the environment can take, show their effects, and how~$\simulator$ can maintain the required indistinguishability.
	
	\begin{itemize}
		
		\item \textit{$\userA$ setup:} In the hybrid world~$\userA$ can send~$\left(\mhSetupA,\id\right)$ to~$\idealFunctionalityMempool$, either because she is honest and received input~$\left(\mhSetupA,\id\right)$, or because she is corrupted and is instructed to do so by the adversary.
		
		In the ideal world, an input~$\left(\mhSetupA,\id\right)$ sets~$ \mhFSetupA \gets 1$ and leaks~$\left(\mhSetupB,\id\right)$ to~$\simulator$, and a message from corrupted~$\userA$ is received by~$\simulator$.
		
		In the former case,~$\simulator$ internally-simulates~$\userA$ sending~$\left(\mhSetupA,\id\right)$ to (the internally-simulated)~$\idealFunctionalityMempool$.
		
		In the latter case,~$\simulator$ applies the message~$\left(\mhSetupA,\id\right)$ on~$\idealFunctionalityMempool$, and passes input~$\left(\mhSetupA,\id\right)$ as~$\userA$ to~$\idealFunctionalityMadhtlc$.
		
		Either way, both set~$ \mhFSetupA \gets 1$ and result with~$\idealFunctionalityMempool$ simulating the sending of~$\simpleTxName$ to~$\userA$ and~$\userB$ (and also actually sending the message if they are corrupted).
		
		
		\item \textit{$\userB$ sharing preimage:} In the hybrid world~$\userB$ can send~$\left(\mhSharePreimage,\id,\hashInputVar{}\right)$ to~$\userA$, either because she is honest and received input~$\left(\mhSharePreimage,\id\right)$, or because she is corrupted and is instructed to do so by the adversary.
		
		In the ideal world, an input~$\left(\mhSharePreimage,\id\right)$ sets~$ \mhFSentPreimageA \gets 1$, and a message from corrupted~$\userB$ is received by~$\simulator$.
		
		In the former case,~$\simulator$ sends~$\simAliceSecret$ as~$\userB$ to~$\userA$ (or simply leaks that a message was sent if~$\userA$ is honest).
		Note that since~$\userB$ is honest,~$\simulator$ picked~$\simAliceSecret$ and therefore can send it.
		
		In the latter case,~$\simulator$ checks if~$\simContractSecretAlice = \invokeFunctionality{\idealFunctionalityHash}{\hashInputVar{}}$ (just like a honest~$\userA$ would in the hybrid world), and if so, stores~$\simAliceSecret \gets \hashInputVar{}$ and passes~$\left(\mhSharePreimage,\id\right)$ as an honest~$\userB$ to~$\idealFunctionalityMadhtlc$.
		
		Either way, both set~$ \mhFPublished \gets 1$ and result with~$\userA$ having~$\hashInputVar{}$ such that~$\simContractSecretAlice = \invokeFunctionality{\idealFunctionalityHash}{\hashInputVar{}}$.
		
		\item \textit{$\userA$ or~$\userB$ publishing:} In the hybrid world either~$\userA$ or~$\userB$ can send~$\left(\mhPublish,\id,\simpleTxName\right)$ to~$\idealFunctionalityMempool$, either because they are honest and received input~$\left(\mhPublish,\id\right)$, or because they are corrupted and are instructed to do so by the adversary.
		
		In the ideal world, an input~$\left(\mhPublish,\id\right)$ sets~$ \mhFPublished \gets 1$, and a message from corrupted~$\userA$ or~$\userB$ is received by~$\simulator$.
		In the former case,~$\simulator$ internally simulates sending of~$\left(\mhPublish,\id,\simTx\right)$ to~$\idealFunctionalityMempool$.
		In the latter case,~$\simulator$ internally simulates~$\idealFunctionalityMempool$, verifies that~$\simpleTxName =\simTx$, and if so, passes~$\left(\mhPublish,\id\right)$ as an honest~$\userA$ or~$\userB$ to~$\idealFunctionalityMadhtlc$.
		
		Either way, it follows that~$ \mhFPublished \gets 1$ and that~$\biPublished \gets 1$.

		\item \textit{$\userO$ initiating:} In the hybrid world~$\userO$ can send~$\left(\mhPublish,\id,\simpleTxName\right)$ to~$\idealFunctionalityMempool$, either because she is honest and receives input~$\left(\mhInit,\id\right)$ (after already setting~$\protocolReceivedPublishedParams = 1$), or because she is corrupted and is instructed to do so by the adversary.
		
		In the ideal world, an input~$\left(\mhInit,\id\right)$ sets~$\mhFInit \gets 1$, and a message from corrupted~$\userO$ is received by~$\simulator$.
		In the former case,~$\simulator$ internally simulates sending of~$\left(\mhInit,\id,\simTx\right)$ to~$\idealFunctionalityMempool$.
		In the latter case, internally simulates~$\idealFunctionalityMempool$, verifies that~$\simpleTxName =\simTx$, and if so, passes~$\left(\mhInit,\id\right)$ as an honest~$\userO$ to~$\idealFunctionalityMadhtlc$.
		Either way, it follows that~$\mhFInit \gets 1$ and~$\biInit \gets 1$.

		\item \textit{Any party~$\anyParty$ redeeming:} In the hybrid world any party~$\anyParty \in \left\{\userA,\userB\userO\right\}$ can send~$\left(\mhTransaction,\id,\simpleTxName,\hashInputVar{1},\hashInputVar{2},\redeemPathIndex\right)$ to~$\idealFunctionalityMempool$, either because she is honest and receives input~$\left(\mhTransaction,\id,\redeemPathIndex\right)$, or because she is corrupted and instructed to do so by the adversary.
		
		In the ideal world, an input~$\left(\mhTransaction,\id,\redeemPathIndex\right)$ updates~$\mhFInternalPreimages{1}$ and~$\mhFInternalPreimages{2}$ based on their current values,~$\redeemPathIndex$, the party~$\anyParty$ and~$\mhSharePreimage$, and a message from corrupted~$\anyParty$ is received by~$\simulator$.
		In the former case,~$\simulator$ internally simulates sending~$\left(\mhTransaction,\id,\simTx,\hashInputVar{1},\hashInputVar{2}\redeemPathIndex\right)$ to~$\idealFunctionalityMempool$, where~$\hashInputVar{1}$ and~$\hashInputVar{2}$ are set as~$\protocolPreimagePassedValue{1}$ and~$\protocolPreimagePassedValue{2}$ in~$\protocolForUC$.
		Specifically:
		\begin{itemize}
			
			\item For honest~$\userB$,~$\hashInputVar{1} \gets \simAliceSecret$ and~$\hashInputVar{2} \gets \simBobSecret$.
			Recall that for honest~$\userB$,~$\simulator$ picks~$\simAliceSecret$ and~$\simBobSecret$, hence can set~$\hashInputVar{1}$ and~$\hashInputVar{2}$ accordingly.
			
			\item For an honest~$\userA$,~$\hashInputVar{1} \gets \simAliceSecret$ if~$\userA$ was previously sent~$\simAliceSecret$ (when~$\mhSharePreimage = 1$) and~$\hashInputVar{1} \gets \bot$.
			
			\item For honest~$\userO$,~$\hashInputVar{1} \gets \bot$ and~$\hashInputVar{2} \gets \bot$.
			
		\end{itemize}
		This applies the effect of honest party publishing a transaction on the internally-simulated~$\idealFunctionalityMempool$. 
		In the latter case,~$\simulator$ starts by internally simulating~$\idealFunctionalityMempool$ on the received message~$\left(\mhTransaction,\id,\simpleTxName,\hashInputVar{1},\hashInputVar{2},\redeemPathIndex\right)$, updating~$\biInternalPreimages{1}$ and~$\biInternalPreimages{2}$, and deriving~$\biResult$ to be later sent to~$\anyParty$.
		Then~$\simulator$ checks if~$\simContractSecretAlice = \invokeFunctionality{\idealFunctionalityHash}{\id,\hashInputVar{1}}$ and~$\simContractSecretBob = \invokeFunctionality{\idealFunctionalityHash}{\id,\hashInputVar{2}}$.
		If~$\simContractSecretAlice = \invokeFunctionality{\idealFunctionalityHash}{\id,\hashInputVar{1}}$ then~$\simulator$ instructs~$\idealFunctionalityMadhtlc$ to set~$\mhFInternalPreimages{1} \gets \mhFInternalPreimages{1} \lor \biInternalPreimages{1}$ through the designated influence port. 
		Similarly, if~$\simContractSecretBob = \invokeFunctionality{\idealFunctionalityHash}{\id,\hashInputVar{2}}$ then~$\simulator$ instructs~$\idealFunctionalityMadhtlc$ to set~$\mhFInternalPreimages{2} \gets \mhFInternalPreimages{2} \lor \biInternalPreimages{2}$ through the designated influence port. 
		
		Given the functions by which~$\idealFunctionalityMadhtlc$ and~$\idealFunctionalityMempool$ update their respective~$ \mhFInternalPreimages{i}$ and~$\biInternalPreimages{i}$ for~$i \in \left\{1,2\right\}$ variables, it follows that either way~$ \mhFInternalPreimages{i} = \biInternalPreimages{i}$.

	\end{itemize}

	\paragraph*{Output indistinguishability} 
	It remains to show that the outputs in both worlds are indistinguishable.
	However, given that for~$\mhFInternalPreimages{1} = \biInternalPreimages{1}$ and~$\mhFInternalPreimages{2} = \biInternalPreimages{2}$, and that the outputs are calculated based on either~$\biResult \gets \publishingFunctionName{}\left(\redeemPathIndex,\anyParty,\biInternalPreimages{1},\biInternalPreimages{2}\right)$ or~$\mhFResult \gets \publishingFunctionName{}\left(\redeemPathIndex,\anyParty,\mhFInternalPreimages{1},\mhFInternalPreimages{2}\right)$, then it trivially follows they are the same.
	Note that in the ideal world, corrupted parties receive their output from the internally-simulated~$\idealFunctionalityMempool$, and honest parties from~$\idealFunctionalityMadhtlc$.

	To conclude,~$\simulator$ internally simulates~$\idealFunctionalityHash$ and~$\idealFunctionalityMempool$, while interacting with~$\idealFunctionalityMadhtlc$.
	It syncs the state of the single contract honest parties interact with  in~$\idealFunctionalityMempool$ and~$\idealFunctionalityMadhtlc$, making the ideal world indistinguishable from the hybrid one.
	This proves the existence of a required~$\simulator$, thus completing the proof.
\end{proof}

For example, we present~$\simulator$ instances for specific~$\adversary$ corruption sets. 
We note that when~$\simulator$ internally invokes~$\idealFunctionalityHash$ and~$\idealFunctionalityMempool$, these invocations include leaking to~$\funcEnv$ through the~$\adversary$ as the functionality does, but also leaking as if all the messages specified by functionality were sent.

\paragraph*{Honest~$\userA$,~$\userB$ and~$\userO$}
This~$\simulator$ (Simulator~\ref{sim:a_honest_b_honest_o_honest}) instance is straight-forward: No party is corrupted, so~$\simulator$ only receives messages from~$\idealFunctionalityMadhtlc$.
Therefore, its main responsibility is to generate an indistinguishable leakage towards~$\funcEnv$, which it does by picking preimages in an indistinguishable manner on its own, and then internally-simulating the hybrid-world ideal functionalities~$\idealFunctionalityHash$ and~$\idealFunctionalityMempool$.

\begin{simulatorEnv}
	
	$\simulator$ for honest~$\userA$,~$\userB$ and~$\userO$ operates in the ideal world, receives leaked messages from~$\idealFunctionalityMadhtlc$.
	It internally simulates~$\idealFunctionalityHash$ and~$\idealFunctionalityMempool$, and can leak to~$\funcEnv$ through~$\adversary$.

	It internally holds the following variables set with the following initial values:~$\simAliceSecret \gets \bot, \simBobSecret \gets \bot,  \simContractSecretAlice \gets \bot, \simContractSecretBob \gets \bot, \simTx \gets \bot, \simSentPreimageA \gets 0$.
	It operates based on receiving leaked messages from~$\idealFunctionalityMadhtlc$:
	
	\begin{itemize}
		
		\item Upon receiving leaked~$\left(\mhSetupB,\id\right)$ from~$\idealFunctionalityMadhtlc$, draw~$\simAliceSecret \getsRandom \rangeSecParameter, \simBobSecret \getsRandom \rangeSecParameter$ and set~$\simContractSecretAlice \gets \invokeFunctionality{\idealFunctionalityHash}{\id,\simAliceSecret},\simContractSecretBob \gets \invokeFunctionality{\idealFunctionalityHash}{\id,\simBobSecret}$.
		Then, internally simulate sending~$\left(\mhSetupB,\id,\simContractSecretAlice,\simContractSecretBob\right)$ to~$\idealFunctionalityMempool$, set~$\simTx$ as returned from~$\idealFunctionalityMempool$, and leak all according messages to~$\funcEnv$ through~$\adversary$.
		
		\item Upon receiving leaked~$\left(\mhSetupA,\id\right)$ from~$\idealFunctionalityMadhtlc$, internally simulate sending~$\left(\mhSetupA,\id\right)$ to~$\idealFunctionalityMempool$, and leak all according messages to~$\funcEnv$ through~$\adversary$.
		
		\item Upon receiving leaked~$\left(\mhSharePreimage,\id\right)$ from~$\idealFunctionalityMadhtlc$, set~$\simSentPreimageA \gets 1$.
		Leak to~$\funcEnv$ through~$\adversary$ that~$\left(\mhSharePreimage,\id,\simAliceSecret\right)$ is sent from~$\userB$ to~$\userA$.
		
		\item Upon receiving leaked~$\left(\mhPublish,\id,\anyParty\right)$ from~$\idealFunctionalityMadhtlc$, internally simulate sending~$\left(\mhPublish,\id,\simTx\right)$ as~$\anyParty$ to~$\idealFunctionalityMempool$, and leak all according messages to~$\funcEnv$ through~$\adversary$.
		
		\item Upon receiving leaked~$\left(\mhInit,\id\right)$ from~$\idealFunctionalityMadhtlc$, internally simulate sending~$\left(\mhInit,\id,\simTx\right)$ as~$\userO$ to~$\idealFunctionalityMempool$, and leak all according messages to~$\funcEnv$ through~$\adversary$.

		\item Upon receiving leaked~$\left(\mhTransaction,\id,\redeemPathIndex,\anyParty\right)$ from~$\idealFunctionalityMadhtlc$:
		\begin{enumerate}
			
			\item If~$\redeemPathIndex\in \left\{\mhRedeemPathOne,\mhRedeemPathThree,\mhRedeemPathFive\right\}$ and $\left(\anyParty = \userB\right) \lor \left(\anyParty = \userA \land \simSentPreimageA = 1\right) $, let~$\simPreimagePassedValue{1} \gets \simAliceSecret$, and~$\bot$ otherwise.
			
			\item If~$\redeemPathIndex\in \left\{\mhRedeemPathTwo,\mhRedeemPathThree,\mhRedeemPathFive\right\}$ and $\anyParty = \userB$, let~$\simPreimagePassedValue{2} \gets \simBobSecret$, and~$\bot$ otherwise.
			
			\item 
			Internally simulate sending~$\left(\mhTransaction,\id,\simPreimagePassedValue{1},\simPreimagePassedValue{2}\right)$ as~$\anyParty$ to~$\idealFunctionalityMempool$, and leak all according messages to~$\funcEnv$ through~$\adversary$.
			
		\end{enumerate}

	\end{itemize}
	
	\caption{$\simulator$ for honest~$\userA$,~$\userB$ and~$\userO$.}
	\label{sim:a_honest_b_honest_o_honest}
\end{simulatorEnv}

\paragraph*{Honest~$\userA$ and~$\userO$, corrupted~$\userB$}

This~$\simulator$ (Simulator~\ref{sim:a_honest_b_corrupted_o_honest}) interacts with a corrupted~$\userB$, meaning it sets~$\simContractSecretAlice \gets \contractSecretAlice$ and~$\simContractSecretBob \gets \contractSecretBob$ when it receives~$\left(\mhSetupB,\id,\contractSecretAlice,\contractSecretBob\right)$ sent towards~$\idealFunctionalityMempool$.

It maintains state variables corresponding to those of~$\idealFunctionalityMempool$,~$\idealFunctionalityMadhtlc$ and honest participants~$\userA$ and~$\userO$, and ignores messages from~$\userB$ that would be ignored by~$\userA$ and~$\userO$.
Specifically, it lets~$\userB$ setup his part, publish and redeem transactions only when the honest parties~$\userA$ and~$\userO$ have taken their required steps.

When honest parties try to redeem the contract,~$\simulator$ reflects that publication on the internally-simulated~$\idealFunctionalityMempool$.
When corrupted~$\userB$ tries redeeming, his attempt is evaluated based on the internally-simulated~$\idealFunctionalityMempool$, yet its effects are also applied to~$\idealFunctionalityMadhtlc$ through the influence port.
Specifically, if~$\userB$ reveals a correct preimage then this is reflected in~$\idealFunctionalityMadhtlc$.

\begin{simulatorEnv}
	
	$\simulator$ for honest~$\userA$ and~$\userO$, and corrupted~$\userB$ operates in the ideal world.
	It receives messages from~$\userB$ directed at either functionality~$\idealFunctionalityHash$ or~$\idealFunctionalityMempool$, or any honest party~$\userA$ and~$\userO$.
	Additionally, it receives leaked messages from~$\idealFunctionalityMadhtlc$.
	
	$\simulator$ can send messages to~$\userB$, pass inputs as honest~$\userB$ to~$\idealFunctionalityMadhtlc$, and affect it through its influence port.
	Finally, it can also leak messages to~$\funcEnv$ through~$\adversary$. 
	
	It internally holds the following variables set with the following initial values:~$\simAliceSecret \gets \bot, \simBobSecret \gets \bot,  \simContractSecretAlice \gets \bot, \simContractSecretBob \gets \bot, \simTx \gets \bot, \simSetupB \gets 0, \simSetupB \gets 1, \simSentPreimageA \gets 0, \simPublished \gets 0, \simInit \gets 0$.
	It operates based on receiving leaked messages from~$\idealFunctionalityMadhtlc$ and messages from~$\userB$:
	
	\begin{itemize}
		
		\item Upon receiving~$\left(\id,\hashInputVar{}\right)$ from corrupted~$\userB$ directed at~$\idealFunctionalityHash$, internally-simulate~$\idealFunctionalityHash$, and return its result to~$\userB$.		
		
		\item Upon receiving~$\left(\mhSetupB,\id,\hashOutputVar{1},\hashOutputVar{2}\right)$ from corrupted~$\userB$ directed at~$\idealFunctionalityMempool$ when~$\simSetupB = 0$, set~$\simSetupB \gets 1$, store~$\simContractSecretAlice \gets \hashOutputVar{1}$ and~$\simContractSecretBob \gets \hashOutputVar{2}$.
		Proceed to internally simulate sending~$\left(\mhSetupB,\id,\simContractSecretAlice,\simContractSecretBob\right)$ to~$\idealFunctionalityMempool$, set~$\simTx$ as returned from~$\idealFunctionalityMempool$, pass as~$\userB$ input~$\left(\mhSetupB,\id\right)$ to~$\idealFunctionalityMadhtlc$, and leak all according messages to~$\funcEnv$ through~$\adversary$.
		
		\item Upon receiving leaked~$\left(\mhSetupA,\id\right)$ from~$\idealFunctionalityMadhtlc$, set~$\simSetupA \gets 1$, internally simulate sending~$\left(\mhSetupA,\id\right)$ to~$\idealFunctionalityMempool$, and leak all according messages to~$\funcEnv$ through~$\adversary$.
		
		\item Upon receiving~$\left(\mhSharePreimage,\id,\hashInputVar{}\right)$ from corrupted~$\userB$ directed at~$\idealFunctionalityMempool$ when~$\simSetupA = 1 \land \simSentPreimageA = 0$ and such that~$\simContractSecretAlice = \invokeFunctionality{\idealFunctionalityHash}{\hashInputVar{}}$, set~$\simSentPreimageA \gets 1$ and store~$\simAliceSecret \gets \hashInputVar{}$.
		Pass as~$\userB$ input~$\left(\mhSharePreimage,\id\right)$ to~$\idealFunctionalityMadhtlc$.
		
		\item Upon receiving leaked~$\left(\mhPublish,\id,\userA\right)$ from~$\idealFunctionalityMadhtlc$, internally simulate sending~$\left(\mhPublish,\id,\simTx\right)$ as~$\userA$ to~$\idealFunctionalityMempool$, and leak all according messages to~$\funcEnv$ through~$\adversary$.
		
		\item Upon receiving~$\left(\mhPublish,\id,\simpleTxName\right)$ from corrupted~$\userB$ directed at~$\idealFunctionalityMempool$ when~$\simSetupA = 1 \land \simPublished = 0$ and such that~$\simTx = \simpleTxName$, set~$\simPublished \gets 1$.
		Pass as~$\userB$ input~$\left(\mhPublish,\id\right)$ to~$\idealFunctionalityMadhtlc$.		
		
		\item Upon receiving leaked~$\left(\mhInit,\id\right)$ from~$\idealFunctionalityMadhtlc$ when~$\simPublished = 1$, set~$\simInit \gets 1$, internally simulate sending~$\left(\mhInit,\id,\simTx\right)$ as~$\userO$ to~$\idealFunctionalityMempool$, and leak all according messages to~$\funcEnv$ through~$\adversary$.
		
		\item Upon receiving leaked~$\left(\mhTransaction,\id,\redeemPathIndex,\anyParty\right)$ from~$\idealFunctionalityMadhtlc$:
		\begin{enumerate}
			
			\item If~$\redeemPathIndex\in \left\{\mhRedeemPathOne,\mhRedeemPathThree,\mhRedeemPathFive\right\}$ and $\left(\anyParty = \userB\right) \lor \left(\anyParty = \userA \land \simSentPreimageA = 1\right) $, let~$\simPreimagePassedValue{1} \gets \simAliceSecret$, and~$\bot$ otherwise.
			
			\item If~$\redeemPathIndex\in \left\{\mhRedeemPathTwo,\mhRedeemPathThree,\mhRedeemPathFive\right\}$ and $\anyParty = \userB$, let~$\simPreimagePassedValue{2} \gets \simBobSecret$, and~$\bot$ otherwise.
			
			\item 
			Internally simulate sending~$\left(\mhTransaction,\id,\simPreimagePassedValue{1},\simPreimagePassedValue{2}\right)$ as~$\anyParty$ to~$\idealFunctionalityMempool$, and leak all according messages to~$\funcEnv$ through~$\adversary$.
			
		\end{enumerate}

		\item Upon receiving~$\left(\mhTransaction,\id,\simpleTxName,\hashInputVar{1},\hashInputVar{2}\redeemPathIndex\right)$ from corrupted~$\userB$ directed at~$\idealFunctionalityMempool$ when~$\simInit = 1$ and such that~$\simTx = \simpleTxName$:
		\begin{enumerate}
			
			\item If~$\simContractSecretAlice = \invokeFunctionality{\idealFunctionalityHash}{\hashInputVar{1}}$, send~$\left(\mhUpdate,\id,1\right)$ to~$\idealFunctionalityMadhtlc$ through the influence port.
			
			\item If~$\simContractSecretBob = \invokeFunctionality{\idealFunctionalityHash}{\hashInputVar{2}}$, send~$\left(\mhUpdate,\id,2\right)$ to~$\idealFunctionalityMadhtlc$ through the influence port.
			
			\item 
			Internally simulate sending~$\left(\mhTransaction,\id,\simpleTxName,\hashInputVar{1},\hashInputVar{2}\redeemPathIndex\right)$ as~$\userB$ to~$\idealFunctionalityMempool$, leak all according messages to~$\funcEnv$ through~$\adversary$, and return the value from~$\idealFunctionalityMempool$ to~$\userB$.
			
		\end{enumerate}

	\end{itemize}
	
	\caption{$\simulator$ for honest~$\userA$ and~$\userO$, and corrupted~$\userB$.}
	\label{sim:a_honest_b_corrupted_o_honest}
\end{simulatorEnv}

\paragraph*{Corrupted~$\userA$,~$\userB$ and~$\userO$}
This~$\simulator$ (Simulator~\ref{sim:a_corrupt_b_corrupt_o_corrupt}) instance is straight-forward: All parties are corrupted, so~$\idealFunctionalityMadhtlc$ is not invoked at all.
Therefore,~$\simulator$ simply acts as~$\idealFunctionalityHash$ and~$\idealFunctionalityMempool$ towards any message it receives from any corrupted party.

\begin{simulatorEnv}
	
	$\simulator$ for corrupted~$\userA$,~$\userB$ and~$\userO$ operates in the ideal world, receives messages from and send messages to the corrupted parties.
	It internally simulates~$\idealFunctionalityHash$ and~$\idealFunctionalityMempool$, and can leak to~$\funcEnv$ through~$\adversary$.
	
	It operates based on receiving leaked messages from~$\idealFunctionalityMadhtlc$:
	
	\begin{itemize}
		\item Upon any message from a corrupted party~$\anyParty \in \left\{\userA,\userB,\userO\right\}$, follow the functionality code according to its internal state and received message, including changing its state and sending messages.
	\end{itemize}
	
	\caption{$\simulator$ for honest~$\userA$,~$\userB$ and~$\userO$.}
	\label{sim:a_corrupt_b_corrupt_o_corrupt}
\end{simulatorEnv}

	\section{\madhtlc{} Incentive Compatibility Lemma Proofs}
	\label{appendix:additional_proofs}
	
	This section presents the proofs for Lemma~\ref{lemma:mad_proof_last_round_mutual_destruction}, Lemma~\ref{lemma:mad_proof_alice_plays_prescribed} and Lemma~\ref{lemma:mad_proof_bob_plays_prescribed}.
	We recall the lemmas for readability.
	
	\begin{lemmaNo}[\ref{lemma:mad_proof_last_round_mutual_destruction}]
		In the last round of the game, i.e. subgame~$\gameDefDHTLC{\timeout}{\subgameIndistinguishable}$, if~$\aliceTransactionMadAliceBob$ and either~$\bobTransactionMadAliceBob$ or~$\bobTransactionMadBoth$ are published then miners' best-response strategy is not to include any of~$\userA$'s or~$\userB$'s transactions in this round.
	\end{lemmaNo}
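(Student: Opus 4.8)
The plan is to exploit the fact that, under the hypothesis, both preimages are already in the mempool, so the miner who creates the final block can seize the contract tokens outright rather than settle for a transaction fee. First I would observe that publishing $\aliceTransactionMadAliceBob$ uses the $\mhRedeemPathOne$ path and hence reveals $\aliceSecret$, while publishing either $\bobTransactionMadAliceBob$ or $\bobTransactionMadBoth$ uses $\mhRedeemPathTwo$ and hence reveals $\bobSecret$; by Lemma~\ref{lemma:valid_transactions} this means any miner $\userO$ now holds suitable preimages for both $\contractSecretAlice$ and $\contractSecretBob$, and can therefore create valid transactions along $\mhRedeemPathThree$ and $\mhRedeemPathFive$. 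Since $\gameDefDHTLC{\timeout}{\subgameIndistinguishable}$ is the terminal round, the timeout has elapsed and the block-creating miner's utility equals exactly the reward of the single transaction she confirms; her best response is thus simply the highest-reward feasible action, and no backward induction is required.

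Next I would compare the available rewards by a case split on the \depositContract{} state. In the main case where \depositContract{} is redeemable, i.e.\ $\gameDefDHTLC{\timeout}{\trueConst}$, the miner can include her own $\minerTransactionMadBoth$ and collect $\depositTokens + \collateralTokens$. Invoking the fee bounds recorded in the action description, namely $\aliceFeeMadAliceBob < \depositTokens$, $\bobFeeMadAliceBob < \depositTokens$, $\bobFeeMadDeposit < \collateralTokens$, and $\bobFeeMadBoth < \depositTokens + \collateralTokens$ (each fee still exceeding $\txFee$), this reward strictly dominates every alternative: an unrelated transaction, $\aliceTransactionMadAliceBob$, $\bobTransactionMadAliceBob$, $\bobTransactionMadDeposit$, or $\bobTransactionMadBoth$. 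In the remaining case $\gameDefDHTLC{\timeout}{\falseConst}$, which under the hypothesis can only arise if $\aliceTransactionMadAliceBob$ was already confirmed in an earlier round (so $\aliceSecret$ stays revealed), only \collateralContract{} is still redeemable; the miner's own redemption via $\mhRedeemPathFive$ yields $\collateralTokens$, strictly above the sole competing fee $\bobFeeMadDeposit < \collateralTokens$ and above $\txFee$.

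In both cases the miner strictly prefers to redeem the contract tokens herself, so her best response excludes every transaction of $\userA$ and $\userB$, which is exactly the claim. The step I expect to require the most care is the bookkeeping behind the phrase "published implies revealed": I must argue that the preimages are usable by the miner even in the $\falseConst$ branch, where $\userB$'s deposit-redeeming transactions can never be confirmed, since they conflict with the already-confirmed $\aliceTransactionMadAliceBob$, yet still expose $\bobSecret$ purely by having been placed in the mempool. Once that is settled, the strict fee bounds make the dominance comparison immediate, and the conclusion follows directly since this is the last round.
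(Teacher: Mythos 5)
Your proposal is correct and takes essentially the same approach as the paper: both preimages are exposed by the published transactions, so in the terminal round a miner's self-redemption~--- $\minerTransactionMadBoth$ for~$\depositTokens+\collateralTokens$ when \depositContract{} is redeemable, $\minerTransactionMadDeposit$ for~$\collateralTokens$ when it is not~--- strictly dominates including any transaction of~$\userA$ or~$\userB$, by the fee bounds~$\aliceFeeMadAliceBob, \bobFeeMadAliceBob < \depositTokens$, $\bobFeeMadDeposit < \collateralTokens$, and~$\bobFeeMadBoth < \depositTokens + \collateralTokens$. Your one minor imprecision~--- asserting the~$\falseConst$ state can arise only from a confirmed~$\aliceTransactionMadAliceBob$, when a miner's own~$\minerTransactionMadAliceBob$ could equally have redeemed \depositContract{}~--- is immaterial, since the preimages remain published either way.
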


	\begin{proof}
		Since~$\userA$ and~$\userB$ published their transactions, both~$\aliceSecret$ and~$\bobSecret$ are available to all miners.
		Therefore, any miner can create a transaction redeeming \depositContract{} and \collateralContract{} herself.
		
		If \depositContract{} is irredeemable~$\left(\gameDefDHTLC{\timeout}{\falseConst}\right)$, then miners can create~$\minerTransactionMadDeposit$ and redeem~$\collateralContract$ themselves in round~$\timeout$, getting~$\collateralTokens$ tokens as reward.
		Alternatively, if~$\bobTransactionMadDeposit$ is published they can include it in a block, getting a fee of~$\bobFeeMadDeposit $ tokens.
		As~$\bobFeeMadDeposit < \collateralTokens$, including~$\bobTransactionMadDeposit$ is strictly dominated by including~$\minerTransactionMadDeposit$.
		In this case miners can also not include~$\aliceTransactionMadAliceBob$ as the \depositContract{} is irredeemable.
		
		If \depositContract{} is redeemable~$\left(\gameDefDHTLC{\timeout}{\trueConst}\right)$, miners can also create~$\minerTransactionMadBoth$, include it in a block, and get~$\depositTokens + \collateralTokens$ in reward.
		Alternatively, they can include either~$\aliceTransactionMadAliceBob$,~$\bobTransactionMadAliceBob$,~$\bobTransactionMadDeposit$ or~$\bobTransactionMadBoth$ (whichever was published). 
		However, any of these offers fees lower than~$\depositTokens + \collateralTokens$, making them strictly-dominated by including~$\minerTransactionMadBoth$.
		
		Either way, including any of~$\userA$'s or~$\userB$'s transactions results with a strictly lower reward, hence miners avoid doing so. 
	\end{proof}

	\begin{lemmaNo}[\ref{lemma:mad_proof_alice_plays_prescribed}]
		In~$\gameDefDHTLC{1}{\trueConst}$,~$\userA$ cannot increase her utility by deviating from the prescribed strategy.
	\end{lemmaNo}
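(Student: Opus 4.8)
The plan is to fix $\userB$ at his prescribed strategy and the miners at their best responses, and to show that under every deviation available to $\userA$ her expected reward is at most her prescribed payoff. The argument is a short case analysis over $\userA$'s action space, which by \S\ref{sec:model_generic_game_actions} consists only of publishing her single transaction of interest $\aliceTransactionMadAliceBob$ (redeeming $\depositContract{}$ via $\mhRedeemPathOne$ and thereby revealing $\aliceSecret$), parameterized by \emph{when} she publishes and at \emph{what} fee. She has no other profitable move: the paths $\mhRedeemPathThree$ and $\mhRedeemPathFive$ both require $\bobSecret$, which only $\userB$ knows. First I would dispose of the case $\propertySentPreimageA = 0$, where $\userA$ cannot produce $\aliceSecret$, has no feasible redeeming transaction, and hence obtains $0$ under every strategy (matching Lemma~\ref{lemma:mad_proof_prescribed_utilities_when_alice_doesnt_know}); no deviation helps. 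So I assume $\userA$ knows $\aliceSecret$, where her prescribed payoff is $\depositTokens - \aliceFeeMadAliceBob$ by Lemma~\ref{lemma:mad_proof_prescribed_utilities_when_alice_knows}.

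Next I would treat the \emph{fee} deviations. Publishing $\aliceTransactionMadAliceBob$ during the first $\timeout - 1$ rounds with a fee larger than $\aliceFeeMadAliceBob$ only lowers her net reward, while a fee not exceeding $\txFee$ is weakly dominated for every miner by an unrelated transaction and so need not be confirmed. The relevant fact is that in any non-final $\gameDefDHTLC{\subgameIndistinguishable}{\trueConst}$ subgame in which $\bobSecret$ has not been revealed, the miners' backward-induction best response (as established inside the proof of Lemma~\ref{lemma:mad_proof_prescribed_utilities_when_alice_knows}) is to include a published $\aliceTransactionMadAliceBob$ paying more than $\txFee$. Consequently the precise timing within rounds $1,\dots,\timeout-1$ is immaterial and no fee deviation raises $\userA$'s utility above $\depositTokens - \aliceFeeMadAliceBob$.

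For the \emph{timing} deviations, the key point is that $\userA$ must get her transaction confirmed before $\bobSecret$ becomes public. If she withholds $\aliceTransactionMadAliceBob$ through block $\blockIndex{j+\timeout-1}$, then $\userB$'s prescribed reaction is to publish $\bobTransactionMadBoth$ in the last round, so both $\aliceSecret$ and $\bobSecret$ would be public exactly when $\userA$ finally publishes; by Lemma~\ref{lemma:mad_proof_last_round_mutual_destruction} the miners then seize everything via $\minerTransactionMadBoth$, leaving $\userA$ with $0 < \depositTokens - \aliceFeeMadAliceBob$. Not publishing at all likewise yields $0$. Hence every deviation is weakly, and the timing ones strictly, worse than the prescribed strategy, which is the claim.

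I expect the main obstacle to be the fee-deviation step. Making it rigorous requires invoking the backward-induction argument of Lemma~\ref{lemma:mad_proof_prescribed_utilities_when_alice_knows} to guarantee that an early, minimally-overpaying $\aliceTransactionMadAliceBob$ is actually confirmed by the best-responding miners in every non-final redeemable subgame where $\bobSecret$ stays hidden, and to rule out any benefit from $\userA$ strategically inflating her fee to manipulate the miners' later choices. Neither helps: as long as $\bobSecret$ is secret the miners strictly prefer her fee-paying transaction to an unrelated one irrespective of its exact size, so $\userA$ can do no better than pay negligibly above $\txFee$ and be confirmed promptly.
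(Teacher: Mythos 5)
Your proposal is correct and follows essentially the same route as the paper's proof: dispose of the case where $\userA$ lacks~$\aliceSecret$, observe that never publishing yields~$0$, and show that withholding~$\aliceTransactionMadAliceBob$ until the last round triggers the mutual-destruction outcome of Lemma~\ref{lemma:mad_proof_last_round_mutual_destruction}, leaving her strictly below the prescribed payoff established in Lemma~\ref{lemma:mad_proof_prescribed_utilities_when_alice_knows}. Your explicit treatment of fee deviations is a refinement the paper's proof leaves implicit, and your choice to fix~$\userB$ at his prescribed strategy (rather than the paper's case split over which transactions~$\userB$ actually published in the last round) is slightly less general but fully sufficient for the unilateral-deviation claim the lemma makes.
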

	
	\begin{proof}
		
		First, if~$\userA$ does not know~$\aliceSecret$, she can take no action, hence trivially complies with the prescribed strategy.
		
		If~$\userA$ does know~$\aliceSecret$, then her possible deviations are not publishing~$\aliceTransactionMadAliceBob$ at all, or publishing it only in the last round~$\timeout$.
		
		Not publishing~$\aliceTransactionMadAliceBob$ at all is strictly dominated~--- she gets no tokens; if she instead abides by the prescribed strategy then she cannot get a lower revenue but can get more, e.g., if~$\userB$ also follows the prescribed strategy~(Lemma~\ref{lemma:mad_proof_prescribed_utilities_when_alice_knows}). 
		
		The inclusion of~$\aliceTransactionMadAliceBob$ in the last block depends on what transactions~$\userB$ publishes throughout the game (Lemma~\ref{lemma:mad_proof_last_round_mutual_destruction}).
		That is, if~$\userB$ published either~$\bobTransactionMadAliceBob$ or~$\bobTransactionMadBoth$ then miners' best-response is not to include~$\aliceTransactionMadAliceBob$, and~$\userA$ gets no tokens.
		Otherwise, miners' best response is to include the transaction that offers the highest fee, which can be either~$\aliceTransactionMadAliceBob$ or another, resulting with~$\userA$ receiving~$\depositTokens -\aliceFeeMadAliceBob$ and 0 tokens, respectively.
		
		So,~$\userA$ cannot gain, and in several scenarios strictly lose, by deviating from her prescribed strategy.
	\end{proof}

	\begin{lemmaNo}[\ref{lemma:mad_proof_bob_plays_prescribed}]
		In~$\gameDefDHTLC{1}{\trueConst}$,~$\userB$ cannot increase his utility by deviating from the prescribed strategy.
	\end{lemmaNo}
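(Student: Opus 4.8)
The plan is to fix $\userA$ and the miners at their prescribed and best-response strategies and show that no deviation available to $\userB$ in $\gameDefDHTLC{1}{\trueConst}$ yields more than his prescribed utility. I would split the argument along the same two cases used to establish the prescribed utilities: the case $\propertySentPreimageA = 1$, where $\userA$ knows $\aliceSecret$, publishes $\aliceTransactionMadAliceBob$, and the prescribed payoff for $\userB$ is $\collateralTokens - \bobFeeMadDeposit$ (Lemma~\ref{lemma:mad_proof_prescribed_utilities_when_alice_knows}); and the case $\propertySentPreimageA = 0$, where $\userA$ publishes nothing and the prescribed payoff for $\userB$ is $\depositTokens + \collateralTokens - \bobFeeMadBoth$ (Lemma~\ref{lemma:mad_proof_prescribed_utilities_when_alice_doesnt_know}). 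In each case I would enumerate the deviations, i.e. any choice of which of $\bobTransactionMadAliceBob$, $\bobTransactionMadDeposit$, $\bobTransactionMadBoth$ to publish, at which rounds, and at which fees, and bound the resulting utility from above by the prescribed one.

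The central observation doing most of the work is that $\userB$ retains a contract only if the pair $(\aliceSecret,\bobSecret)$ is not both public while that contract is still redeemable: otherwise the miner creating the current block strictly prefers seizing the contract herself via $\mhRedeemPathThree$ or $\mhRedeemPathFive$ (reward $\depositTokens$, resp.\ $\collateralTokens$) over including any fee-bearing transaction of $\userA$ or $\userB$, whose fee is bounded below $\depositTokens$, resp.\ $\collateralTokens$. In the case $\propertySentPreimageA = 1$, $\userA$'s prescribed play reveals $\aliceSecret$ (her transaction is confirmed early, as in Lemma~\ref{lemma:mad_proof_prescribed_utilities_when_alice_knows}). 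Hence any deviation in which $\userB$ publishes $\bobTransactionMadAliceBob$ or $\bobTransactionMadBoth$ also makes $\bobSecret$ public, so by Lemma~\ref{lemma:mad_proof_last_round_mutual_destruction} a miner seizes the collateral at round $\timeout$ (and the deposit as well whenever it is still redeemable), leaving $\userB$ with $0 < \collateralTokens - \bobFeeMadDeposit$; and publishing nothing simply forgoes the collateral. Thus the prescribed play (publishing only $\bobTransactionMadDeposit$, which reveals no preimage) is $\userB$'s best response.

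In the case $\propertySentPreimageA = 0$, $\aliceSecret$ is never revealed, so $\mhRedeemPathThree$ and $\mhRedeemPathFive$ are never available to the miners; $\userB$ can therefore safely claim both contracts with $\bobTransactionMadBoth$ at round $\timeout$ regardless of the fact that it reveals $\bobSecret$. Any deviation claims at most one of the two contracts, or none, and so yields strictly less; moreover, offering any fee below $\txFee$ fails to outbid the unrelated transactions, while a higher-than-minimal fee only reduces $\userB$'s take, so the prescribed minimal $\bobFeeMadBoth$ is optimal. This gives $\userB$ exactly the prescribed payoff.

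I expect the main obstacle to be making the timing step rigorous for deviations that reveal $\bobSecret$ strictly before the final round. The delicate point is to extend the last-round reasoning of Lemma~\ref{lemma:mad_proof_last_round_mutual_destruction} backward: I would argue by induction that, in every round where the deposit is still redeemable and both preimages are public, the current miner strictly prefers the self-redeeming paths to including any transaction of $\userA$ or $\userB$, invoking the fee bounds $\aliceFeeMadAliceBob,\bobFeeMadAliceBob < \depositTokens$, $\bobFeeMadDeposit < \collateralTokens$, and $\bobFeeMadBoth < \depositTokens + \collateralTokens$. Combined with the fact that $\userA$'s early revelation of $\aliceSecret$ is forced by her fixed prescribed strategy, this closes the argument that contending at \emph{any} round backfires for $\userB$.
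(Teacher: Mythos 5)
Your proposal is correct and follows essentially the same route as the paper: the same case split on whether $\userB$ shared $\aliceSecret$, with Lemma~\ref{lemma:mad_proof_prescribed_utilities_when_alice_doesnt_know} giving a maximal-utility/dominance argument in the unshared case and Lemma~\ref{lemma:mad_proof_last_round_mutual_destruction} showing that any preimage-revealing deviation forfeits everything in the shared case, leaving only $\bobTransactionMadDeposit$ (worth $\collateralTokens - \bobFeeMadDeposit > 0$) versus the strictly worse option of publishing nothing. The one place you over-engineer is the proposed backward induction for deviations revealing $\bobSecret$ before the final round: it is unnecessary, since all of $\userB$'s redeem paths ($\mhRedeemPathTwo$ and $\mhRedeemPathFour$) are gated by $\verifyTimeoutName$, so his transactions can only ever be confirmed in round $\timeout$, and the last-round statement of Lemma~\ref{lemma:mad_proof_last_round_mutual_destruction} already suffices, which is exactly how the paper argues.
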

	
	\begin{proof}
		
		Consider all of~$\userB$'s possible actions.
		His potential maximal utility is from having~$\bobTransactionMadBoth$ included, which he obtains by following the prescribed strategy in the scenario where~$\userA$ does not know~$\aliceSecret$ (Lemma~\ref{lemma:mad_proof_prescribed_utilities_when_alice_doesnt_know}).
		So, he has no incentive to deviate in this case.
		
		Now, consider the case where~$\userA$ knows~$\aliceSecret$, hence according to Lemma~\ref{lemma:mad_proof_alice_plays_prescribed} publishes~$\aliceTransactionMadAliceBob$ in the first~$\timeout - 1 $ rounds.
		
		$\userB$ can publish~$\bobTransactionMadAliceBob$,~$\bobTransactionMadBoth$ and~$\bobTransactionMadDeposit$ throughout the game.
		If he publishes~$\bobTransactionMadAliceBob$ or~$\bobTransactionMadBoth$ in any round then none of his transactions are included~(Lemma~\ref{lemma:mad_proof_last_round_mutual_destruction}) and he gets no reward.
		However, if he only publishes~$\bobTransactionMadDeposit$ then by Lemma~\ref{lemma:mad_proof_prescribed_utilities_when_alice_knows} he receives~$\collateralTokens - \bobFeeMadDeposit > 0$ tokens.

		Not publishing~$\bobTransactionMadDeposit$ at all results with the minimal utility of~$0$, and an earlier publication still leads miners to include both~$\bobTransactionMadDeposit$ and~$\aliceTransactionMadAliceBob$ (cf.~\ref{lemma:mad_proof_bob_plays_prescribed}), obtaining the same utility as of the prescribed behavior.
	\end{proof}

	\section{\madhtlc{} Bitcoin and Ethereum Implementations}
	\label{appendix:bitcoin_implementation}

			\subsubsection*{Bitcoin}

\begin{table*}[t]
	\scriptsize
	\begin{tabular}{| p{0.32\linewidth} | p{0.30\linewidth} | p{0.30\linewidth} | }

		\hline
		\textbf{\depositContract{}} & \textbf{\collateralContract{}} & \textbf{\htlc{}} \\
		\hline
		\renewcommand{\arraystretch}{0.9}
		\begin{tabular}{  m{0.08\linewidth}  m{0.08\linewidth}  m{0.02\linewidth} }
			$ \OPHASH $  & & \\
			$ \contractSecretAlice $  & & \\
			$ \OPEQUAL $ & & \\
			$ \OPSWAP $ & & \\
			$ \OPHASH $	& & \\
			$ \contractSecretBob $	& & \\
			$ \OPEQUAL $ & & \\
			$ \OPIF $ & & \\
			& $ \OPIF $ & \\	
			& & $ \OPONE $ \\
			& $ \OPELSE$ & \\
			& & $ \timeout $ \\
			& & $ \OPCHECKSEQUENCEVERIFY $ \\
			& & $ \OPDROP $ \\
			& & $ \bobPublicKey $ \\
			& & $ \OPCHECKSIG $ \\
			& $ \OPENDIF $ & \\
			$ \OPELSE $	&& \\
			& $ \OPVERIFY $	& \\
			& $ \alicePublicKey $ & \\       
			& $ \OPCHECKSIG~$ & \\
			$ \OPENDIF~$ & & \\
		\end{tabular}
		&   
		\begin{tabular}{ m{0.08\linewidth}  m{0.08\linewidth} }
			$ \timeout $  &   \\ 
			$ \OPCHECKSEQUENCEVERIFY $ &   \\  
			$ \OPDROP $  &   \\  	
			$ \OPHASH $  &   \\  
			$ \contractSecretAlice $ &   \\  
			$ \OPEQUAL $ &   \\  	
			$ \OPIF $ 	&   \\  
			& $ \OPHASH $	 \\
			& $ \contractSecretBob $	\\
			& $ \OPEQUAL $ \\	
			$ \OPELSE$ & \\
			& $ \bobPublicKey $ \\
			& $ \OPCHECKSIG~$ \\ 
			$ \OPENDIF~$ & 
		\end{tabular}
		&
		\begin{tabular}{ m{0.08\linewidth}  m{0.08\linewidth} }
			$ \OPHASH $  &   \\ 
			$ \contractSecretAlice $ &   \\  
			$ \OPEQUAL $  &   \\  	
			$ \OPIF $ 	 &   \\  
			&	$ \alicePublicKey $ \\
			$ \OPELSE$	& \\
			& $ \timeout $ \\
			& $ \OPCHECKSEQUENCEVERIFY $ \\
			& $ \OPDROP $	\\
			& $ \bobPublicKey $ \\
			$ \OPENDIF~$	& \\
			$ \OPCHECKSIG~$ &
		\end{tabular} 
		\\ \hline
		
		\begin{tabular}{  c l  }
			\textbf{Redeem path}  & \textbf{Input data}  \\ 
			1 &  $ \aliceSig \;\; \OPZERO \;\; \aliceSecret $ \\  
			2 &  $ \bobSig \;\; \bobSecret \;\; \OPZERO$ \\
			3 &  $ \bobSecret \;\; \aliceSecret$
		\end{tabular}
		&
		\begin{tabular}{  c l }
			\textbf{Redeem path}  & \textbf{Input data}  \\ 
			1 &  $ \bobSig \;\; \OPZERO$ \\  
			2 &  $ \bobSecret \;\; \aliceSecret$
		\end{tabular}
		&
		\begin{tabular}{  c l }
			\textbf{Redeem path}  & \textbf{Input data}  \\ 
			1 &  $ \aliceSig \;\; \aliceSecret$ \\  
			2 &  $ \bobSig \;\; \OPZERO$
		\end{tabular} \\ \hline

	\end{tabular}
	\captionof{figure}{\depositContract{}, \collateralContract{} and \htlc{} Bitcoin Script implementations.}
	\label{fig:bitcoin_script_implementation}
	\negspace
	\negspace
\end{table*}

Fig.~\ref{fig:bitcoin_script_implementation} shows the Bitcoin Script implementation of \depositContract{}, \collateralContract{} and \htlc{}.
It also presents the required input data for each redeem path.

Script is stack-based, and to evaluate input data and a contract the latter is concatenated to the former, and then executed: constants are pushed into the stack, instructions operate on the stack.
For a successful evaluation the stack must hold exactly one element with value~$1$ after all operations are executed.

\paragraph{\depositContract{}}
The script expects either two or three data elements.
It hashes the first two and checks if they match~$\contractSecretAlice$ and~$\contractSecretBob$.

If the first matches~$\contractSecretAlice$ but the second does not match~$\contractSecretBob$ ($\mhRedeemPathOne$), then the script verifies the existence of a third data element, and that it is a signature created with~$\userA$'s secret key.

If the first does not match~$\contractSecretAlice$ but the second matches~$\contractSecretBob$ ($\mhRedeemPathTwo$), then the script verifies the existence of a third data element, and that it is a signature created with~$\userB$'s secret key.
It also verifies the timeout has elapsed.

If both the first and the second data elements match~$\contractSecretAlice$ and~$\contractSecretBob$ ($\mhRedeemPathThree$), respectively, then the script expects no third data element and evaluates successfully.

\paragraph{\collateralContract{}}
The script expects exactly two data elements.
It begins by verifying timeout has elapsed, and then hashes the first element and checks if it matches~$\contractSecretAlice$.

If not ($\mhRedeemPathFour$), the script then verifies the second data is a signature created with~$\userB$'s secret key.
Otherwise ($\mhRedeemPathFive$), the script hashes the second data element and verifies it matches~$\contractSecretBob$.

			\subsubsection*{Ethereum}

We present an Ethereum Solidity implementation of \madhtlc{} in Fig.~\ref{fig:ethereum_solidity_implementation}.
We also present \htlc{}~\cite{functionalfoundry2020htlcImplementation} implementation for comparison.

\begin{table*}[t]
	\scriptsize
	
	\begin{tabular}{| p{0.45\linewidth} | p{0.51\linewidth} |}

		\hline
		\textbf{\htlc{}} & \textbf{\madhtlc{}} \\
		\hline
		\renewcommand{\arraystretch}{0.9}
		\begin{lstlisting}[language=Java]
contract HTLC is ReentrancyGuard {
  using Address for address payable;

  // Participants in the exchange
  address sender;
  address recipient;

  // Secret hashed by sender
  bytes32 image;

  // Expiration timestamp
  uint256 expires;

  constructor(
    address _sender,
    address _recipient,
    bytes32 _image,
    uint256 _expirationTime
  ) public payable {
    // Define internal state
    sender = _sender;
    recipient = _recipient;
    image = _image;
    expires = now + _expirationTime;
  }

  function claimDepositRecipient(bytes32 _pre)
    			public nonReentrant {
    require(msg.sender == recipient);
    require(hash(_pre) == image);

    msg.sender.sendValue(address(this).balance);
  }

  function claimDepositSender() public nonReentrant {
    require(msg.sender == sender);
    require(now > expires);
    msg.sender.transfer(address(this).balance);
  }

  function hash(bytes32 _preimage)
    internal
    pure
    returns (bytes32 _image) {
      return sha256(abi.encodePacked(_preimage));
  }
}

		\end{lstlisting} 
		
		& 

		\begin{lstlisting}[language=Java]
contract MADHTLC is ReentrancyGuard {
  using Address for address payable;
  address payable sender;
  address payable recipient;
  bytes32 imageA;
  bytes32 imageB;
  uint256 expires;
  uint256 collateral;
  uint256 deposit;
  bool depositClaimed = false;
  bool collateralClaimed = false;

  constructor(
    address payable _sender,
    address payable _recipient,
    bytes32 _imageA,
    bytes32 _imageB,
    uint256 _expirationTime,
    uint256 _collateral
  ) public payable {
    require(_collateral < msg.value);
    sender = _sender;
    recipient = _recipient;
    collateral = _collateral;
    deposit = msg.value - _collateral;
    imageA = _imageA;
    imageB = _imageB;
    expires = now + _expirationTime;
  }
  function claimDepositRecipient(bytes32 _preimageA)
    public nonReentrant {
    require(msg.sender == recipient);
    require(hash(_preimageA) == imageA);
    require(!depositClaimed);
    depositClaimed = true;
    recipient.sendValue(deposit);
  }
  function claimDepositSender(bytes32 _preimageB)
    public nonReentrant {
    require(msg.sender == sender);
    require(hash(_preimageB) == imageB);
    require(now > expires);
    require(!depositClaimed);
    depositClaimed = true;
    sender.sendValue(deposit);
  }
  function claimCollateralSender() public nonReentrant {
    require(msg.sender == sender);
    require(now > expires);
    require(!collateralClaimed);
    collateralClaimed = true;
    sender.sendValue(collateral);
  }
  function claimDepositAnyone(
    bytes32 _preimageA, bytes32 _preimageB
  ) public nonReentrant {
    require(hash(_preimageA) == imageA);
    require(hash(_preimageB) == imageB);
    require(!depositClaimed);
    depositClaimed = true;
    msg.sender.sendValue(deposit);
  }
  function claimCollateralAnyone(
    bytes32 _preimageA, bytes32 _preimageB
  ) public nonReentrant {
    require(now > expires);
    require(hash(_preimageA) == imageA);
    require(hash(_preimageB) == imageB);
    require(!collateralClaimed);
    collateralClaimed = true;
    msg.sender.sendValue(collateral);
  }
  function hash(bytes32 _preimage)
    internal pure returns (bytes32 _image) {
    return sha256(abi.encodePacked(_preimage));
  }
		\end{lstlisting}
\\ \hline
	\end{tabular}
	\captionof{figure}{\htlc{} and \madhtlc{} Ethereum Solidity implementations.}
	\label{fig:ethereum_solidity_implementation}
\end{table*}

	\section{\madhtlc{} Bitcoin and Ethereum Deployment}
	\label{appendix:main_network_deployment}

Tables~\ref{tab:transaction_ids_bitcoin} and~\ref{tab:transaction_ids_ethereum} show the transaction IDs in our Bitcoin and Ethereum deployments~(\S\ref{sec:implementaiton_overheard_deployment}), respectively. 
Their details can be viewed with online block explorers. 

%

\begin{table}[t]
	\scriptsize
	\captionof{table}{Bitcoin main-net experiment transaction IDs.}
	\begin{center}
		\begin{tabular}{| >{\centering\arraybackslash}p{0.25\linewidth} | >{\centering\arraybackslash}m{0.64\linewidth} |} 
			\hline
			\textbf{Description} &  \textbf{Transaction ID}  \\
			\hline
			\hline
			Initiate \depositContract{} & \texttt{d032175260145055860296cbca8f7462
				4f30334ddf948d5da12f0c7414d80cc0} \\ \hline
			\depositContract{} path 1   & \texttt{33c957bb2f75e797d240a38504ce49a3
				aeaaceb72f8577096b4f2ff23f5b3a1e} \\ \hline
			\depositContract{} path 2   & \texttt{cd090c90afaacc0e2648834fe96f6177
				ec2f967b7e50245537afdaf0d5a80263} \\ \hline
			\depositContract{} path 3   & \texttt{505c7f1f3862b7f5c6b78f72cce5e37a
				655b946fbdc7d03526055f7ea206781a} \\ \hline \hline
			Initiate \collateralContract{}  & \texttt{ea830dba56000b3486cf1c5122fedcf8
				8169ab596536fd406b4f989e7761c1b4} \\ \hline
			\collateralContract{} path 1    & \texttt{4c06ebff8de6bb56242c75849767a633
				9e40a0442f815a2487fd9d6237c51b9f} \\ \hline
			\collateralContract{} path 2    & \texttt{68270b94ca80281e31e193dac6779d3a
				22d2799fe2afff8cef66c0ec6b420c88} \\ \hline
			
		\end{tabular}
		\label{tab:transaction_ids_bitcoin}
		\negspace
		\negspace
	\end{center}
\end{table}

\begin{table}[t]
	\scriptsize
	\captionof{table}{Ethereum main-net experiment transaction IDs.}	
	\begin{center}
		\begin{tabular}{| >{\centering\arraybackslash}p{0.25\linewidth} | >{\centering\arraybackslash}m{0.64\linewidth} |} 
			\hline
			\textbf{Description} &  \textbf{Transaction ID}  \\
			\hline
			\hline
			Initiation & \texttt{f10be5e53b9ad8a6f10d7e9b9bfbd63a
				b8737c50274885182a67e7adc3fa59c2} \\ \hline
			$\mhRedeemPathOne$   & \texttt{36e349b4fdc5385ef57a88d077837223
				b3a26b0e6afc75f90bbaf2860d9295fd} \\ \hline
			$\mhRedeemPathTwo$   & \texttt{84aa626d659b63e0554f8de1a3d6e204
				41d8d778b7e1e79d0a36ded325afedb4 } \\ \hline
			$\mhRedeemPathThree$ (ours)   & \texttt{ebdb267e8b612d59910bc2348a95eec8
				388e62dbd6d64458c982f0cdacea67d9} \\ \hline 
			$\mhRedeemPathThree$ (other) & \texttt{74e87bba99ccd7a0bd794b793f108674
				5b462390df01594ce057a430c122635a} \\ \hline 
			
		\end{tabular}
		\label{tab:transaction_ids_ethereum}
		\negspace
		\negspace
	\end{center}
\end{table}

	\section{\htlc{} Bribe Attack Analysis Proof}
	\label{appendix:htlc_additional_proofs}

We recall Lemma~\ref{lemma:htlc_include_regular_in_nonfinal_subgame} and prove it. 

\begin{lemmaNo}[\ref{lemma:htlc_include_regular_in_nonfinal_subgame}]
	For any~$\blockCountSpecificVal \in \left[1,\timeout - 1\right]$, the unique subgame perfect equilibrium is that every miner includes an unrelated transaction in~$\gameDefHTLC{\blockCountSpecificVal}{\trueConst}$, and miner~$i$'s utility when doing so is~$\utilityOfEntity{i}{\strategyProfile}{\gameDefHTLC{\blockCountSpecificVal}{\trueConst}} = \prob{i} \left( \left(\timeout - \blockCountSpecificVal\right) \txFee + \bobFeeHTLC\right)$.
\end{lemmaNo}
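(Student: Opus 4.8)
The plan is to prove the statement by backward induction on $\blockCountSpecificVal$, descending from $\blockCountSpecificVal = \timeout - 1$ down to $\blockCountSpecificVal = 1$. Two already-established facts serve as anchors: Lemma~\ref{lemma:htlc_include_bob_in_final_subgame} fixes the value of the final redeemable subgame, $\utilityOfEntity{i}{\strategyProfile}{\gameDefHTLC{\timeout}{\trueConst}} = \prob{i}\bobFeeHTLC$, and Lemma~\ref{lemma:htlc_include_unrelated_in_irredeemable} fixes the value of every irredeemable subgame, $\utilityOfEntity{i}{\strategyProfile}{\gameDefHTLC{\blockCountSpecificVal}{\falseConst}} = \prob{i}(\timeout - \blockCountSpecificVal + 1)\txFee$. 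Throughout I fix the strategies of $\userA$ and $\userB$ as in the attack setup ($\aliceTransactionHTLC$ published with fee $\aliceFeeHTLC$, and $\bobTransactionHTLC$ published with fee $\bobFeeHTLC > \tfrac{\aliceFeeHTLC - \txFee}{\probMin} + \txFee$) and analyze only the miners' choices, so each subgame $\gameDefHTLC{\blockCountSpecificVal}{\trueConst}$ is determined solely by the number of remaining blocks and the redeemability state.

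First I would isolate the decision faced by whichever miner $m$ is selected to create the block in round $\blockCountSpecificVal$. Since $\blockCountSpecificVal < \timeout$, the timeout has not elapsed, so $\bobTransactionHTLC$ cannot yet be confirmed and her only two relevant options are to include an unrelated transaction (reward $\txFee$, leaving the contract redeemable, hence transitioning to $\gameDefHTLC{\blockCountSpecificVal + 1}{\trueConst}$) or to include $\aliceTransactionHTLC$ (reward $\aliceFeeHTLC$, redeeming the contract and hence transitioning to $\gameDefHTLC{\blockCountSpecificVal + 1}{\falseConst}$). Using the induction hypothesis for $\gameDefHTLC{\blockCountSpecificVal + 1}{\trueConst}$ (or Lemma~\ref{lemma:htlc_include_bob_in_final_subgame} directly when $\blockCountSpecificVal = \timeout - 1$) together with Lemma~\ref{lemma:htlc_include_unrelated_in_irredeemable} for $\gameDefHTLC{\blockCountSpecificVal + 1}{\falseConst}$, miner $m$'s total payoff is $\txFee + \prob{m}\left((\timeout - \blockCountSpecificVal - 1)\txFee + \bobFeeHTLC\right)$ under the first option and $\aliceFeeHTLC + \prob{m}(\timeout - \blockCountSpecificVal)\txFee$ under the second.

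The crux, and the step I expect to carry the argument, is the comparison of these two payoffs: their difference simplifies to $\prob{m}(\bobFeeHTLC - \txFee) - (\aliceFeeHTLC - \txFee)$, which is strictly positive exactly when $\bobFeeHTLC > \tfrac{\aliceFeeHTLC - \txFee}{\prob{m}} + \txFee$. Since $\prob{m} \ge \probMin$ and $\aliceFeeHTLC > \txFee$, the assumed bribe threshold $\bobFeeHTLC > \tfrac{\aliceFeeHTLC - \txFee}{\probMin} + \txFee$ implies this inequality for every miner $m$. Thus including the unrelated transaction is the strict best response of every miner, yielding a unique subgame perfect equilibrium at level $\blockCountSpecificVal$, the strictness being what rules out alternative (including mixed) equilibria. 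Adding miner $i$'s expected round-$\blockCountSpecificVal$ reward $\prob{i}\txFee$ to the continuation value $\prob{i}\left((\timeout - \blockCountSpecificVal - 1)\txFee + \bobFeeHTLC\right)$ then gives $\utilityOfEntity{i}{\strategyProfile}{\gameDefHTLC{\blockCountSpecificVal}{\trueConst}} = \prob{i}\left((\timeout - \blockCountSpecificVal)\txFee + \bobFeeHTLC\right)$, closing the induction. The main conceptual obstacle is the continuation bookkeeping: recognizing that claiming Alice's fee now forfeits the far larger terminal bribe $\bobFeeHTLC$ (because the contract becomes irredeemable and $\userB$'s transaction can never be taken afterward), and that the threshold is calibrated through $\probMin$ precisely so that even the smallest miner still strictly prefers to wait.
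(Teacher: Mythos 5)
Your proof is correct and takes essentially the same approach as the paper's: backward induction anchored by Lemma~\ref{lemma:htlc_include_unrelated_in_irredeemable} and Lemma~\ref{lemma:htlc_include_bob_in_final_subgame}, reducing the decision to the threshold comparison $\bobFeeHTLC > \tfrac{\aliceFeeHTLC - \txFee}{\prob{i}} + \txFee$ and then computing the stated utility. The only cosmetic difference is that you condition directly on the miner being selected to create the block, whereas the paper writes out the full expected utilities including the other miners' current-round choices (its $\minersOnBoardProbability{\blockCountSpecificVal}$ terms) and observes that those terms cancel in the comparison — the two formulations yield the identical inequality.
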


\begin{proof}
	Note that in~$\gameDefHTLC{\blockCountSpecificVal}{\trueConst}$ there are two actions available, either include an unrelated transaction and receive~$\txFee$ reward, or include~$\aliceTransactionHTLC$ and receive~$\aliceFeeHTLC$ reward.
	
	Consider any miner~$i$.
	Denote by~$\minersOnBoardProbability{\blockCountSpecificVal}$ the accumulated block-creation rates of miners, excluding miner~$i$, that choose to include an unrelated transaction in~$\gameDefHTLC{\blockCountSpecificVal}{\trueConst}$.
	Therefore, the accumulated probabilities of miners that choose to include~$\aliceTransactionHTLC$, excluding miner~$i$, is~$1-\minersOnBoardProbability{\blockCountSpecificVal} -\prob{i}$.

	If miner~$i$ chooses to include an unrelated transaction then either of the following occurs.
	First, with probability~$\prob{i}$ miner~$i$ gets to create a block, includes an unrelated transaction and receives a reward of~$\txFee$. 
	The subsequent subgame is~$\gameDefHTLC{\blockCountSpecificVal + 1}{\trueConst}$.
	Alternatively, with probability~$\minersOnBoardProbability{\blockCountSpecificVal}$ another miner that includes an unrelated transaction gets to create a block, miner~$i$ gets no reward and the subsequent subgame is~$\gameDefHTLC{\blockCountSpecificVal + 1}{\trueConst}$.
	Finally, with probability~$1-\minersOnBoardProbability{\blockCountSpecificVal} -\prob{i}$ another miner that includes~$\aliceTransactionHTLC$ gets to create a block, miner~$i$ gets no reward and the subsequent subgame is~$\gameDefHTLC{\blockCountSpecificVal + 1}{\falseConst}$.
	
	Therefore, miner~$i$'s utility when including an unrelated transaction in these subgames is
	\begin{equation}
	\label{eq:htlc_utility_in_redeemable_subgame_creating_regular}
	\begin{aligned}
	& \utilityOfEntity{i}{\strategyProfile}{\gameDefHTLC{\blockCountSpecificVal}{\trueConst}}  =  \\ 
	& \prob{i} \cdot \left(\txFee + \utilityOfEntity{i}{\strategyProfile}{\gameDefHTLC{\blockCountSpecificVal + 1}{\trueConst}} \right) + \\
	& \minersOnBoardProbability{\blockCountSpecificVal} \cdot \utilityOfEntity{i}{\strategyProfile}{\gameDefHTLC{\blockCountSpecificVal + 1}{\trueConst}}  + \\
	& \left(1 -\prob{i} - \minersOnBoardProbability{\blockCountSpecificVal} \right) \cdot \utilityOfEntity{i}{\strategyProfile}{\gameDefHTLC{\blockCountSpecificVal + 1}{\falseConst}} \,.
	\end{aligned}
	\end{equation}
	Similarly, if miner~$i$ chooses to include~$\aliceTransactionHTLC$ than either of the following occurs.
	First, with probability~$\prob{i}$ miner~$i$ gets to create a block, includes~$\aliceTransactionHTLC$ and receives a reward of~$\aliceTransactionHTLC$. 
	The subsequent subgame is~$\gameDefHTLC{\blockCountSpecificVal + 1}{\falseConst}$.
	Alternatively, with probability~$\minersOnBoardProbability{\blockCountSpecificVal}$ another miner that includes an unrelated transaction gets to create a block, miner~$i$ gets no reward and the subsequent subgame is~$\gameDefHTLC{\blockCountSpecificVal + 1}{\trueConst}$.
	Finally, with probability~$1-\minersOnBoardProbability{\blockCountSpecificVal} -\prob{i}$ another miner that includes~$\aliceTransactionHTLC$ gets to create a block, miner~$i$ gets no reward and the subsequent subgame is~$\gameDefHTLC{\blockCountSpecificVal + 1}{\falseConst}$.

	Therefore, miner~$i$'s utility when including~$\aliceTransactionHTLC$ in these subgames is
	\negspace
	\begin{equation}
	\label{eq:htlc_utility_in_redeemable_subgame_creating_alice}
	\begin{aligned}
	& \utilityOfEntity{i}{\strategyProfile}{\gameDefHTLC{\blockCountSpecificVal}{\trueConst}}  =  \\ 
	& \prob{i} \cdot \left(\aliceFeeHTLC + \utilityOfEntity{i}{\strategyProfile}{\gameDefHTLC{\blockCountSpecificVal + 1}{\falseConst}} \right) + \\
	& \minersOnBoardProbability{\blockCountSpecificVal} \cdot \utilityOfEntity{i}{\strategyProfile}{\gameDefHTLC{\blockCountSpecificVal + 1}{\trueConst}}  + \\
	& \left(1 -\prob{i} - \minersOnBoardProbability{\blockCountSpecificVal} \right) \cdot \utilityOfEntity{i}{\strategyProfile}{\gameDefHTLC{\blockCountSpecificVal + 1}{\falseConst}}
	\,\,.
	\end{aligned}
	\end{equation}

	To prove the lemma we need to show that for any~$\blockCountSpecificVal \in \left[1,\timeout - 1\right]$ the utility from including an unrelated transaction~(Eq.~\ref{eq:htlc_utility_in_redeemable_subgame_creating_regular}) exceeds that of including~$\aliceTransactionHTLC$~(Eq.~\ref{eq:htlc_utility_in_redeemable_subgame_creating_alice}).
	This reduces to showing that 
	\begin{equation}
	\label{eq:htlc_subgame_perfect_equilibrium_condition}
	\begin{aligned}
	&\txFee + \utilityOfEntity{i}{\strategyProfile}{\gameDefHTLC{\blockCountSpecificVal + 1}{\trueConst}} > \\
	& \aliceFeeHTLC + \utilityOfEntity{i}{\strategyProfile}{\gameDefHTLC{\blockCountSpecificVal + 1}{\falseConst}}\,\,,
	\end{aligned}
	\end{equation}
	which we do inductively.

	\paragraph{Base} 
	First, consider~$\blockCountSpecificVal=\timeout - 1$. 
	Using Lemma~\ref{lemma:htlc_include_bob_in_final_subgame} and Lemma~\ref{lemma:htlc_include_unrelated_in_irredeemable} we get the condition presented in Eq.~\ref{eq:htlc_subgame_perfect_equilibrium_condition} is~$\txFee + \prob{i} \bobFeeHTLC > \aliceFeeHTLC + \prob{i} \txFee$, 
	or alternatively, 
	\begin{equation}
	\label{eq:htlc_subgame_perfect_equilibrium_condition_for_k_2}
	\bobFeeHTLC > \tfrac{\aliceFeeHTLC - \txFee}{\prob{i}} + \txFee\,\,.
	\end{equation}	
	
	\negspace
	Since~$\probMin \le \prob{i}$ and~$\bobFeeHTLC > \tfrac{\aliceFeeHTLC - \txFee}{\probMin} + \txFee$, the condition (Eq.~\ref{eq:htlc_subgame_perfect_equilibrium_condition_for_k_2}) holds, meaning that in any subgame perfect equilibrium miner~$i$ is strictly better by including an unrelated transaction in subgame~$\gameDefHTLC{\timeout - 1}{\trueConst}$.
	
	Therefore, all miners choose to include unrelated transactions in such subgames, meaning~$\minersOnBoardProbability{\blockCountAnotherSpecificVal} = 1 -\prob{i}$ and~$1 -\prob{i} - \minersOnBoardProbability{\blockCountAnotherSpecificVal} = 0$.
	Therefore, miner~$i$'s utility (Eq.~\ref{eq:htlc_utility_in_redeemable_subgame_creating_regular}) is~$\utilityOfEntity{i}{\strategyProfile}{\gameDefHTLC{\blockCountSpecificVal}{\trueConst}}  = \prob{i} \left( \txFee + \bobFeeHTLC\right)$.

	\paragraph{Assumption} 
	Consider any~$\blockCountSpecificVal \in \left[1, \timeout - 2\right]$ and assume that the claim holds for~$\blockCountSpecificVal +1$.
	That is, the unique subgame perfect equilibrium in subsequent games~$\gameDefHTLC{\blockCountSpecificVal + 1}{\trueConst}$ is for all miners to include an unrelated transaction, and the utility of miner~$i$ when doing so  is~$\utilityOfEntity{i}{\strategyProfile}{\gameDefHTLC{\blockCountSpecificVal + 1}{\trueConst}} = \prob{i} \left( \left(\timeout - \blockCountSpecificVal\right) + \bobFeeHTLC\right)$.
	
	\paragraph{Step} 
	Using the inductive assumption and Lemma~\ref{lemma:htlc_include_unrelated_in_irredeemable} the condition of Eq.~\ref{eq:htlc_subgame_perfect_equilibrium_condition} translates to~$\txFee + \prob{i} \left( \left(\blockCountSpecificVal + 1\right) \txFee + \bobFeeHTLC\right) > \aliceFeeHTLC + \prob{i} \left(\blockCountSpecificVal + 1\right) \txFee$,
	or alternatively, 
	\begin{equation}
	\label{eq:htlc_subgame_perfect_equilibrium_condition_for_any_k}
	\bobFeeHTLC > \tfrac{\aliceFeeHTLC - \txFee}{\prob{i}} + \txFee\,\,.
	\end{equation}	
	
	\negspace
	Again, since~$\probMin \le \prob{i}$ and~$\bobFeeHTLC > \tfrac{\aliceFeeHTLC - \txFee}{\probMin} + \txFee$, the condition (Eq.~\ref{eq:htlc_subgame_perfect_equilibrium_condition_for_any_k}) holds, meaning that in the subgame perfect equilibrium miner~$i$'s strict best response is to include an unrelated transaction in subgame~$\gameDefHTLC{\blockCountSpecificVal}{\trueConst}$.
	
	Since all miners include unrelated transactions, we get~$\minersOnBoardProbability{\blockCountAnotherSpecificVal} = 1 -\prob{i}$ and~$1 -\prob{i} - \minersOnBoardProbability{\blockCountAnotherSpecificVal} = 0$.
	Therefore, miner~$i$'s utility (Eq.~\ref{eq:htlc_utility_in_redeemable_subgame_creating_regular}) is~$\utilityOfEntity{i}{\strategyProfile}{\gameDefHTLC{\blockCountSpecificVal}{\trueConst}}  = \prob{i} \left( \left( \timeout - \blockCountSpecificVal\right) \txFee + \bobFeeHTLC\right)$.
\end{proof}

\end{document}